\documentclass[a4paper,UKenglish,cleveref,hyperref,autoref]{lipics-v2021}
%This is a template for producing LIPIcs articles. 
%See lipics-manual.pdf for further information.
%for A4 paper format use option "a4paper", for US-letter use option "letterpaper"
%for british hyphenation rules use option "UKenglish", for american hyphenation rules use option "USenglish"
%for section-numbered lemmas etc., use "numberwithinsect"
%for enabling cleveref support, use "cleveref"
%for enabling cleveref support, use "autoref"

%\graphicspath{{./graphics/}}%helpful if your graphic files are in another directory

\bibliographystyle{plainurl}% the mandatory bibstyle

\title{Mim-Width is paraNP-complete}
\titlerunning{Mim-Width is paraNP-complete}

%\author{ }{ }{}{}{}
%\authorrunning{ }
%\Copyright{ }

\author{Benjamin Bergougnoux}{LIS, Aix-Marseille Université, France \and \url{https://benjaminbergougnoux.github.io/}}{benjamin.bergougnoux@gmail.com}{https://orcid.org/0000-0002-6270-3663}{}
\author{\'{E}douard Bonnet}{CNRS, ENS de Lyon, Universit\'e Claude Bernard Lyon 1, LIP UMR 5668, Lyon, France \and \url{http://perso.ens-lyon.fr/edouard.bonnet}}{edouard.bonnet@ens-lyon.fr}{https://orcid.org/0000-0002-1653-5822}{}
\author{Julien Duron}{ENS de Lyon, Universit\'e Claude Bernard Lyon 1, LIP UMR 5668, Lyon, France \and \url{http://perso.ens-lyon.fr/julien.duron}}{julien.duron@ens-lyon.fr}{https://orcid.org/0009-0004-0925-9438}{}

%mandatory, please use full name; only 1 author per \author macro; first two parameters are mandatory, other parameters can be empty.

\authorrunning{B. Bergougnoux, \'E. Bonnet, and J. Duron}

\Copyright{Benjamin Bergougnoux, Édouard Bonnet, and Julien Duron}%TODO mandatory, please use full first names. LIPIcs license is "CC-BY";  http://creativecommons.org/licenses/by/3.0/

%\ccsdesc[100]{Theory of computation → Computational complexity and cryptography → Problems, reductions and completeness}
%\ccsdesc[100]{Theory of computation → Fixed parameter tractability}%TODO mandatory: Please choose ACM 2012 classifications from https://dl.acm.org/ccs/ccs_flat.cfm 

%\keywords{Mim-Width, lower bounds}%TODO mandatory; please add comma-separated list of keywords

\category{}%optional, e.g. invited paper

\relatedversion{}%optional, e.g. full version hosted on arXiv, HAL, or other respository/website
%\relatedversion{A full version of the paper is available at \url{...}.}

\supplement{}%optional, e.g. related research data, source code, ... hosted on a repository like zenodo, figshare, GitHub, ...

\funding{}%optional, to capture a funding statement, which applies to all authors. Please enter author specific funding statements as fifth argument of the \author macro.

\acknowledgements{}%optional

\nolinenumbers %uncomment to disable line numbering

\hideLIPIcs  %uncomment to remove references to LIPIcs series (logo, DOI, ...), e.g. when preparing a pre-final version to be uploaded to arXiv or another public repository

%Editor-only macros:: begin (do not touch as author)%%%%%%%%%%%%%%%%%%%%%%%%%%%%%%%%%%
\EventEditors{John Q. Open and Joan R. Access}
\EventNoEds{2}
\EventLongTitle{42nd Conference on Very Important Topics (CVIT 2016)}
\EventShortTitle{CVIT 2016}
\EventAcronym{CVIT}
\EventYear{2016}
\EventDate{December 24--27, 2016}
\EventLocation{Little Whinging, United Kingdom}
\EventLogo{}
\SeriesVolume{42}
\ArticleNo{23}

%%%%%%%%%%%%%%%%%%%%%%%%%%%%
\usepackage[utf8]{inputenc}  

%Default font encoding doesn't support Polish characters
\usepackage[T1]{fontenc}
\usepackage{lmodern}
\usepackage[colorinlistoftodos,bordercolor=orange,backgroundcolor=orange!20,linecolor=orange,textsize=normalsize]{todonotes}
\usepackage{amsmath}  
\usepackage{amssymb}
\usepackage{bbm}
\usepackage{accents}
\usepackage{complexity}
\usepackage{booktabs}
\usepackage{bm}
\usepackage{paralist}
\usepackage{fixmath}
\usepackage{tcolorbox}

\makeatletter
\newtheorem*{rep@theorem}{\rep@title}
\newcommand{\newreptheorem}[2]{%
\newenvironment{rep#1}[1]{%
 \def\rep@title{#2 \ref{##1}}%
 \begin{rep@theorem}}%
 {\end{rep@theorem}}}
\makeatother

\newreptheorem{theorem}{Theorem}
\newreptheorem{lemma}{Lemma}

\usepackage{xspace}
\usepackage{tikz}
\usepackage[ruled,vlined,linesnumbered]{algorithm2e}

\usetikzlibrary{fit}
\usetikzlibrary{arrows}
\usetikzlibrary{patterns}
\usetikzlibrary{calc}
\usetikzlibrary{shapes}
\usetikzlibrary{positioning}
\usetikzlibrary{math}
\usetikzlibrary{shapes.symbols}
\usetikzlibrary{decorations.pathreplacing,calligraphy}
\usetikzlibrary{arrows.meta}
\usetikzlibrary{shapes.geometric}
\usepackage[scr=boondox,scrscaled=1.05]{mathalfa}

\renewcommand{\geq}{\geqslant}
\renewcommand{\leq}{\leqslant}

\renewcommand{\le}{\leq}
\renewcommand{\ge}{\geq}
\newcommand{\Nn}{\mathbb{N}}

\newcommand{\Aux}{\text{Aux}}

\newcommand{\defparproblem}[
4]{
  \vspace{1mm}
  \begin{tcolorbox}[
    colframe=black,        % Frame color
    colback=white,         % Background color
    boxrule=0.5pt,         % Thickness of the frame
    arc=4pt,               % Radius of the rounded corners
    left=6pt, right=6pt,   % Horizontal padding
    top=6pt, bottom=6pt    % Vertical padding
  ]
    \begin{tabular*}{\textwidth}{@{\extracolsep{\fill}}lr}
      #1 & {\bf{Parameter:}} #3 \\
    \end{tabular*} \\
    {\bf{Input:}} #2 \\
    {\bf{Question:}} #4
  \end{tcolorbox}
  \vspace{1mm}
}

\newcommand{\paraNP}{\ensuremath{\mathrm{paraNP}}}

\newenvironment{proofofclaim}{\noindent \textsc{Proof:}}{\hfill$\Diamond$\medskip}

\newcommand{\mim}{\text{mim}}
\newcommand{\ssim}{\text{sim}}
\newcommand{\omim}{\text{omim}}

\newcommand{\weight}{\omega}

\newcommand{\ldb}{\textsc{Linear Degree Balancing}\xspace}
\newcommand{\tdb}{\textsc{Tree Degree Balancing}\xspace}
\newcommand{\tmb}{\textsc{Tree Mim-Balancing}\xspace}
\newcommand{\tsb}{\textsc{Tree Sim-Balancing}\xspace}
\newcommand{\lmb}{\textsc{Linear Mim-Balancing}\xspace}
\newcommand{\lsb}{\textsc{Linear Sim-Balancing}\xspace}

\newcommand{\tmap}{tree mapping\xspace}
\newcommand{\tmaps}{tree mappings\xspace}
\newcommand{\pmap}{path mapping\xspace}
\newcommand{\pmaps}{paths mappings\xspace}
\newcommand{\htree}{hybrid tree\xspace}
\newcommand{\htrees}{hybrid trees\xspace}

\begin{document}

\maketitle

\begin{abstract}
  We show that it is \NP-hard to distinguish graphs of linear mim-width at~most 1211 from graphs of~sim-width at~least 1216.
  This implies that \textsc{Mim-Width}, \textsc{Sim-Width}, \textsc{One-Sided Mim-Width}, and their linear counterparts are all \paraNP-complete, i.e., \NP-complete to compute even when upper bounded by a~constant.
  A key intermediate problem that we introduce and show \NP-complete, \textsc{Linear Degree Balancing}, inputs an edge-weighted graph $G$ and an integer~$\tau$, and asks whether $V(G)$ can be linearly ordered such that every vertex of $G$ has weighted \emph{backward} and \emph{forward} degrees at~most~$\tau$.
\end{abstract}

\section{Introduction}\label{sec:intro}

While it was shown shortly after the inception of these parameters by Vatshelle in 2012~\cite{Vatshelle12,BelmonteV13} that \textsc{Mim-Width} and \textsc{Linear Mim-Width} are W[1]-hard~\cite{SaetherV15,SaetherV16}, whether a~slice-wise polynomial (XP) algorithm\footnote{i.e., for any fixed integer $k$, a~polynomial-time algorithm (whose exponent may depend on~$k$) that decides if the (linear) mim-width of the input graph is at~most~$k$.} can compute (or approximate) the (linear) mim-width of an input graph has been raised as an open question repeatedly over the past twelve years~\cite{Vatshelle12,SaetherV16,JaffkeKT20,JaffkeKST19,BergougnouxPT22,BergougnouxKR23,OtachiST24,BergougnouxDJ23}.
We give a~negative answer to this question (at least for some too-good approximation factor), and similarly settle the parameterized complexity of the related sim-width and one-sided mim-width parameters, as well as their linear variants.
Indeed we show that all these parameters are \paraNP-complete to compute, i.e., \NP-complete even when guaranteed to be upper bounded by a~universal constant.

\begin{theorem}\label{thm:main-csq}
  \textsc{Mim-Width}, \textsc{Sim-Width}, \textsc{One-Sided Mim-Width}, \textsc{Linear Mim-Width}, \textsc{Linear Sim-Width}, and \textsc{Linear One-Sided Mim-Width} are \paraNP-complete.
\end{theorem}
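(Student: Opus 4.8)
\emph{Hardness.}
The plan is to derive Theorem~\ref{thm:main-csq} from the gap result announced in the abstract --- that it is \NP-hard to distinguish graphs $G$ with $\lmimw(G) \le 1211$ from graphs $G$ with $\simw(G) \ge 1216$ --- together with two routine facts. First I would invoke the standard chain of inequalities between the six parameters: for every graph $G$ one has
\[
  \simw(G) \le \mimw(G) \le \lmimw(G)
  \qquad\text{and}\qquad
  \simw(G) \le \lsimw(G) \le \lmimw(G),
\]
and, more generally, every $\mu \in \{\mimw,\lmimw,\omimw,\lomimw,\simw,\lsimw\}$ is sandwiched as $\simw(G) \le \mu(G) \le \lmimw(G)$ --- immediate from the definitions, a linear layout being a particular branch decomposition and $\ssim(A,B) \le \omim(A,B) \le \mim(A,B)$ holding on every cut. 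Running the gap reduction, on the graph $G$ it produces we thus get $\mu(G) \le \lmimw(G) \le 1211$ in the first case and $\mu(G) \ge \simw(G) \ge 1216$ in the second. Hence, for the fixed constant $1211$, the decision problem ``is $\mu(G) \le 1211$?'' is \NP-hard --- i.e., each of \textsc{Mim-Width}, \textsc{Sim-Width}, \textsc{One-Sided Mim-Width}, and their linear counterparts is \NP-hard to compute even when its value is guaranteed to be bounded by a universal constant.

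\emph{Membership.}
For every fixed integer $k$ and every $\mu$ among the six parameters, ``is $\mu(G) \le k$?'' lies in \NP: guess a branch decomposition of $G$ (a linear order of $V(G)$ for the linear variants), which has size polynomial in $n := |V(G)|$, and verify it by checking, for each of its $O(n)$ cuts $(A,B)$, that the relevant cut measure is at most $k$ --- done by brute force, enumerating all subsets of at most $2(k+1)$ vertices of $G$ and testing whether any of them forms a (special, resp.\ one-sided) induced matching of size $k+1$ across $(A,B)$. This runs in time $n^{O(k)}$, polynomial for fixed $k$. Combined with the previous paragraph, each of the six parameters is \NP-complete already for a fixed value of the parameter, hence \paraNP-complete; in particular none of them admits a slicewise-polynomial (XP) algorithm unless $\P = \NP$, which answers the open question from the introduction in the negative.

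\emph{The gap result --- where the work is.}
All the content sits in the gap result, which I would establish in two stages through the intermediate problem \ldb. First, show \ldb is \NP-complete; since a linear order in which every vertex has backward \emph{and} forward weighted degree at most $\tau$ is a rigid object, I expect this through an encoding of a satisfiability- or partition-type problem, the edge weights carrying the numeric data of the constraints. Second, reduce \ldb to the gap problem: from an instance $(G,\tau)$ build a graph $H$ by substituting each vertex and each weighted edge of $G$ with bounded-mim-width gadgets, assembled so that (i) any $\tau$-balanced linear order of $V(G)$ lifts to a linear layout of $H$ each of whose cuts $(A,B)$ satisfies $\mim(A,B) \le 1211$, and (ii) any branch decomposition of $H$ of sim-width at most $1215$ decodes back into a $\tau$-balanced linear order of $V(G)$. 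I expect the main obstacle to be direction (ii): sim-width is the weakest of the six cut measures and branch decompositions of $H$ are far less constrained than linear layouts, so one must show that \emph{no} decomposition of $H$ --- however ``non-linear'' --- can have small sim-width unless $V(G)$ genuinely admits a $\tau$-balanced order. Turning such a robust lower bound on $\simw(H)$ in the NO case (rather than merely on $\lmimw(H)$) into a proof, and pinning down the concrete constants $1211$ and $1216$ along the way, is where the bulk of the difficulty lies.
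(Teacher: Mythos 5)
Your derivation of the stated theorem is exactly the paper's: invoke the gap reduction of \cref{thm:main} and sandwich every one of the six parameters between sim-width and linear mim-width, so that $\mu(G^*)\le\lmimw(G^*)\le 1211$ on satisfiable inputs and $\mu(G^*)\ge\simw(G^*)\ge 1216$ on unsatisfiable ones; your explicit $n^{O(k)}$ \NP-membership check is routine and only left implicit in the paper. The divergence is in your sketch of the gap result itself, where you place all the work, and as a plan it has a genuine flaw at step (ii). You propose to establish plain \NP-completeness of \ldb at a single threshold $\tau$ and then to decode \emph{any} branch decomposition of the final graph of small sim-width back into a $\tau$-balancing \emph{linear order}. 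A tree-shaped decomposition can, at best, be decoded into a tree-shaped balancing structure, not a linear order; this is precisely why the paper does not reduce from plain \ldb but proves the stronger gap statement (\cref{thm:ldb,thm:tdb}) that unsatisfiable formulas yield instances admitting no $(\tau+\gamma)$-balancing \emph{tree}, which it achieves with the weight-padding gadget (\cref{lem:saturation}) leaving only two vertices light enough to be leaves, thereby forcing every candidate balancing tree to be a path. Moreover, the subsequent gadget translations are lossy in both directions (the $+50$ of \cref{lem:balanced2sim-width} and the $+107$, $\frac{a+1}{a}$ factors in \cref{sec:lin-mim-balancing-to-lin-mim-width}), so a reduction whose soundness and completeness are pinned to the same \ldb threshold $\tau$ cannot survive composition; the paper builds the additive slack $\gamma$ into \ldb itself and chooses $\tau=1080$, $\gamma=135$ so that the losses are absorbed and the final constants $1211$ versus $1216$ emerge. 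The paper also interposes the partitioned \textsc{Mim/Sim-Balancing} problems and a relocation ("grouping") argument to turn arbitrary layouts of $G^*$ into tree mappings, machinery your two-stage sketch does not account for.
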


We show~\cref{thm:main-csq} with a~single reduction.

\begin{theorem}\label{thm:main}
  There is a~polynomial-time algorithm that takes an input $\varphi$ of \textsc{4-Occ Not-All-Equal 3-Sat} and builds a~graph $G^*$ such that
  \begin{compactitem}
    \item if $\varphi$ is satisfiable, then $G^*$ has linear mim-width at~most~1211,
    \item if $\varphi$ is unsatisfiable, then $G^*$ has sim-width at~least~1216.
  \end{compactitem}
\end{theorem}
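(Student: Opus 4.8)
The plan is to route \cref{thm:main} through the auxiliary problem \ldb, and it helps to observe at the outset that it suffices to bound \emph{linear mim-width} from above in the satisfiable case and \emph{sim-width} from below in the unsatisfiable case: among the six measures of \cref{thm:main-csq}, sim-width is the smallest and linear mim-width the largest, since for a fixed branch decomposition a crossing special induced matching is in particular an induced matching of the bipartite cut-graph (so the sim cut-function is dominated by the one-sided mim and the mim cut-functions), and caterpillar decompositions form a subfamily of all branch decompositions. Thus $\lmimw(G^*)\le 1211$ forces all six measures to be at~most~$1211$ and $\simw(G^*)\ge 1216$ forces all six to be at~least~$1216$, which together with the routine membership in \paraNP{} yields \cref{thm:main-csq}.

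The first step is to prove that \ldb{} is \NP-hard by a reduction from \textsc{4-Occ Not-All-Equal 3-Sat}, for a \emph{fixed} threshold~$\tau$ (the bounded number of occurrences keeps all gadget weights, and hence~$\tau$, constant). The intended encoding: the side of a fixed anchor on which a variable gadget lies in the linear order represents a Boolean value; the backward/forward weighted-degree bound~$\tau$ at the anchor and at the variable gadgets forces each variable to commit to one side; and a clause gadget admits a placement with balanced backward and forward degree exactly when its three literals are not all equal, the absence of a preferred direction in \ldb{} matching the invariance of \textsc{Not-All-Equal 3-Sat} under complementing all literals. A crucial feature to build into the produced edge-weighted graph~$G$ is that it be \emph{path-like}, with all gadgets threaded along a spine and every clause gadget spanning only a bounded window of it; this is what will later make tree-shaped decompositions no more powerful than linear ones up to additive constants.

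Next I would turn~$G$ into an unweighted graph~$G^*$: each vertex~$v$ of~$G$ is blown up into a \emph{bundle} equipped with a rigidity gadget of high sim-width, so that any branch decomposition of~$G^*$ of sim-width below~$1216$ may be assumed to keep each bundle essentially undivided; and each edge~$uv$ of weight~$w$ is realised by a private gadget between the bundles of~$u$ and~$v$ whose contribution to the crossing special induced matching of a cut is~$w$ when the bundles are separated and~$0$ otherwise, but organised so that in a linear layout this cost is felt only in a bounded window around~$u$ or around~$v$; this \emph{locality} is exactly why the per-vertex degree formulation of \ldb, rather than the cutwidth of~$G$, is the right abstraction. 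For completeness, given a satisfying assignment of~$\varphi$, take the witnessing linear order $\sigma=v_1,\dots,v_n$ of~$V(G)$ and build the caterpillar layout of~$G^*$ visiting the bundles in the order~$\sigma$, placing just before (resp.\ just after) bundle~$v_i$ the backward (resp.\ forward) halves of the edge-gadgets incident to~$v_i$; one checks that every cut of this layout has crossing induced matching of size at~most $\mathrm{bwd}(v_i)+\mathrm{fwd}(v_{i+1})+O(1)\le 2\tau+O(1)$, and the construction is scaled so that this is at~most~$1211$, giving $\lmimw(G^*)\le 1211$. The explicit constants $1211$ and $1216$ fall out of the arithmetic of this construction together with the first step.

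For soundness, assume some branch decomposition of~$G^*$ has sim-width at~most~$1215$. Since the bundles stay essentially undivided, contracting them yields a branch decomposition of a controlled blow-up of~$G$ in which, for every cut, no placement of the gadget vertices can avoid a crossing special induced matching witnessing a \tdb{} of~$G$ at parameter $\approx\tau$; and by the path-like structure of~$G$ this tree balancing can be linearised into a \ldb-witness for~$G$ at threshold~$\tau$ with only an additive-constant loss, so~$\varphi$ is satisfiable. I expect this last implication to be the main obstacle: generic conversions of branch decompositions into linear layouts lose a logarithmic factor in the relevant width, whereas here only $O(1)$ can be afforded, so the whole construction must be engineered --- this is the reason for the path-like design of~$G$ --- so that tree-decompositions of~$G$ genuinely have no advantage over linear orders. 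I anticipate this is handled through an interpolating family of "hybrid" decompositions lying between linear and tree-shaped ones, on which the intermediate balancing problems (\tdb, \tmb, \tsb, \lmb, and~\lsb) are compared, with a secondary difficulty in keeping the "essentially undivided bundle" approximations and the gadget bookkeeping tight enough to preserve the narrow gap $1211<1216$.
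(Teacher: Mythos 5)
Your overall architecture (NAE\,3-SAT $\to$ degree balancing $\to$ blow-up into $G^*$, with the monotonicity remark that it suffices to bound linear mim-width above and sim-width below) matches the paper, but there is a genuine gap exactly at the point you yourself flag as the main obstacle: how tree-shaped decompositions are handled in the soundness direction. You propose to make the weighted graph ``path-like'' so that a balancing \emph{tree} extracted from a low-sim-width branch decomposition can be linearised into an \ldb-witness with only additive-constant loss, possibly via a family of hybrid decompositions interpolating between paths and trees. No such linearisation argument is given, and none is needed in the paper: the paper instead adds a \emph{weight-padding} gadget in the first step so that all but two vertices of $H(\varphi)$ have weighted degree at least $\tau+\gamma+1$; consequently any $(\tau+\gamma)$-balancing tree has at most two leaves and is therefore a path, so the unsatisfiable case already excludes \emph{tree} balancings (\cref{thm:tdb}). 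From then on every conversion is tree-to-tree (the grouping/relocation argument of \cref{sec:sim-width-to-sim-balancing}, using the $b$ copies of $P_u$ and pigeonhole \cref{lem:cut-doesnt-cut,lem:tricut-doesnt-cut}), and the delicate tree-to-linear step your plan hinges on never arises. Without either the padding idea or a concrete linearisation lemma, your soundness direction does not go through.

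Two further quantitative/structural points would also need repair. First, your completeness bound of $\mathrm{bwd}(v_i)+\mathrm{fwd}(v_{i+1})+O(1)\le 2\tau+O(1)$ per cut is incompatible with the gap you can hope to get from the first step: the unsatisfiable side only yields a lower bound around $\tau+\gamma$, and the gadget arithmetic of the SAT reduction forces $\gamma$ well below $\tau$ (in the paper $3\gamma+4<\tau$), so an upper bound of $2\tau$ cannot sit below it; the paper's construction is engineered (balanced split of each $S(u)$ into $a$ blocks inside $P_u$) precisely so the upper bound is $\tfrac{a+1}{a}\tau+107$, i.e.\ essentially $\tau$, not $2\tau$. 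Second, your ``private edge gadget whose cost is felt only in a bounded window'' is not a mechanism: a cut separating the bundles of $u$ and $v$ crosses that gadget wherever it occurs in the layout (and ``windows'' make no sense for tree layouts), so without something preventing matchings from \emph{different} edges of $H$ from combining, the mim of a cut measures a weighted cut size rather than a weighted degree. The paper achieves the degree-like behaviour with the dummy bicliques between all pairs $I(u,v)$, $I(x,y)$ coming from disjoint edges of $H$, which force any large semi-induced matching across an $\mathcal S$-cut to be covered by a single part up to an additive $50$ (\cref{lem:matching-edges-are-clean,lem:mim-width-local}); some equivalent device is missing from your construction.
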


\Cref{thm:main} indeed implies~\cref{thm:main-csq} as the linear mim-width upper bounds the other five parameters, while the sim-width lower bounds the other five parameters.
Our reduction is naturally split into three parts, thereby going through two intermediate problems.
The first intermediate problem may be of independent interest (perhaps especially so, its unweighted version), and we were somewhat surprised not to find it already defined in the literature.
We call it \ldb.

\defparproblem{\ldb}{An edge-weighted $n$-vertex graph~$H$ and a~non-negative integer~$\tau$.}{$\tau$}{Is there a~linear ordering $v_1 \prec v_2 \prec \ldots \prec v_n$ of $V(H)$ such that every vertex~$v_i$ has weighted degree in $H[\{v_1, \ldots, v_i\}]$ and in $H[\{v_i, \ldots, v_n\}]$ at~most~$\tau$?}

We call \emph{$\tau$-balancing order} of~$H$ a~linear order over $V(H)$ witnessing that $H$ is a~positive instance of~\ldb.

\medskip

\textbf{The three steps.}
Our reduction starts with a~\textsc{Not-All-Equal 3-Sat} (\textsc{Nae 3-Sat} for short) instance $\varphi$, and goes through an edge-weighted graph $(H, \weight)$, a~vertex-partitioned graph $(G,\mathcal P)$, and finally an instance $G^*$ of \textsc{Mim-Width}.

First we prove that \ldb is \NP-complete even when $\tau$ is a~constant, and every edge weight is a~positive integer.
For our purpose, we in fact show something stronger.
The first step is a~polynomial-time reduction from \textsc{4-Occ Nae 3-Sat} that maps satisfiable formulas to edge-weighted graphs admitting a~$\tau$-balancing order, and unsatisfiable formulas to negative instances of \tdb, a~tree variant of \ldb, for the larger threshold of $\tau + \gamma$, where $\gamma$ can grow linearly in~$\tau$.

In~\tdb, the vertices of~$H$ are bijectively mapped to the nodes of a~freely-chosen tree $T$ such that for every node $t$ of~$T$ and every edge $e$ incident to $t$, the vertex of $H$ mapped to $t$ has weighted degree at~most the given threshold in the cut of~$H$ defined by the two connected components of~$T-e$. 
A~formal definition is given in~\cref{sec:lin-ord-to-trees}.

The second step turns the weighted degree into the maximum (semi-)induced matching at the expense of mapping \emph{subsets} of vertices of $G$ to nodes of~$T$, in a~way that the nodes of~$T$ jointly hold a~prescribed partition $\mathcal P$ of~$V(G)$.
In~\tmb (resp.~\tsb), for every edge~$e$ of~$T$, the size of a~maximum semi-induced (resp.~induced) matching in the cut of $G$ defined by~$e$ shall remain below the threshold.
Their linear variants force $T$ to be a~path.
See~\cref{sec:im-balancing} for formal definitions.

The third step erases the differences between \tmb and \textsc{Mim-Width}, and between their respective variants.
Intuitively speaking:
\begin{compactitem}
\item for each part of~$\mathcal P$, the corresponding vertices of $G^*$ can be gathered in their own subtree,
\item $T$ can be chosen ternary (i.e., every non leaf node has degree~3),
\item only the leaves of~$T$ need hold a~vertex of~$G^*$.
\end{compactitem}

\Cref{fig:roadmap} summarizes these three steps.
\begin{figure}[h!]
  \centering
  \resizebox{\textwidth}{!}{
  \begin{tikzpicture}[arr/.style={thick, -stealth}]
    \def\s{3.5}
    \def\v{0.8}

    \foreach \i/\j/\c in {-0.35/0.35/blue, 0.35/1.36/purple, 1.36/2.6/yellow, 2.6/3.8/orange}{
      \fill[opacity=0.15, color=\c] (\s * \i, -1.3 * \v) -- (\s * \i,2.5 * \v) -- (\s * \j,2.5 * \v) -- (\s * \j, -1.3 * \v) -- cycle ;
    }

    \foreach \i/\p/\x/\sat/\unsat/\inp in {
      1/\textsc{Nae 3-Sat}/0/satisfiable/unsatisfiable/{$\varphi$},
      2/\textsc{Degree Balancing}/0.87/$\tau$-balancing order/no $(\tau+\gamma)$-balancing tree/{$(H, \weight)$},
      3/\textsc{$\bullet$im-Balancing}/1.95/linear mim-balancing $\leqslant \tau+50$/sim-balancing $> \tau + \gamma$/{$(G,\mathcal P)$},
      4/\textsc{$\bullet$im-Width}/3.2/lin. mim-width $\leqslant \frac{46}{45}\tau +107$/sim-width $> \tau+\gamma$/{$G^*$}}
             {
               \node (p\i) at (\s * \x, 2 * \v) {\p} ;
               \node (s\i) at (\s * \x, \v) {\footnotesize{\sat}} ;
               \node (u\i) at (\s * \x, 0) {\footnotesize{\unsat}} ;
               \node (i\i) at (\s * \x, - .8 * \v) {\inp} ;
    }
    \node[draw, rounded corners, fit=(s1) (u1)] (nae) {} ;
    \foreach \i/\p/\q/\l/\m [count = \ip from 2] in {
      1/lem:sat-to-bo/lem:bo-to-sat/0.95/0.18,
      2/lem:balanced2sim-width/lem:simwidth2balanced/0.2/0.25,
      3/sec:lin-mim-balancing-to-lin-mim-width/sec:sim-width-to-sim-balancing/0.17/1.4}{
      \draw[arr] (s\i) to node[midway] {\hyperref[\p]{\rule[.5ex]{\l cm}{0pt}}} (s\ip) ;
      \draw[arr] (u\i) to node[midway] {\hyperref[\q]{\rule[.5ex]{\m cm}{0pt}}} (u\ip) ;
    }

  \end{tikzpicture}
  }
  \caption{Visual summary of our reduction, split into its three steps.}
  \label{fig:roadmap}
\end{figure}
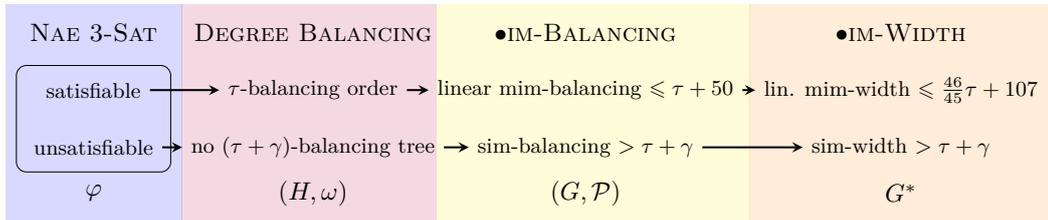

We now outline each step.

\medskip

\textbf{\textsc{Nae 3-Sat} to \textsc{Degree Balancing}.}
We actually reduce from the positive variant of \textsc{Nae 3-Sat}, where no literal is negated.
We design a~gadget called \emph{bottleneck sequence} that, given three disjoint sets $X, Y, Z \subset V(H)$, forces all vertices of $Y$ to appear in the order after all the vertices of~$X$, and before all the vertices of~$Z$ (or by symmetry after all the vertices of~$Z$, and before all the vertices of~$X$).
Vertices of $Y$ are in one-to-one correspondence with clauses of~$\varphi$.
Similarly, we have a~vertex for each variable of~$\varphi$.
Each \emph{variable} vertex is forced to be placed before $X$ (where it represents being set to true), or after $Z$ (where it represents being set to false).
The weights are designed so that a~\emph{clause} vertex can tolerate two but not three of its \emph{variable} vertices to be on the same side (before or after~it); which exactly captures the semantic of a~not-all-equal 3-clause.

The gap between \emph{at~most~$\tau$} and \emph{at~least~$\tau + \gamma +1$} is obtained by carefully crafting~$\weight$. 
We also add a~padding gadget to raise the minimum degree of~$H$, in such a~way that only two vertices have low-enough degree to be leaves of~$T$.
This forces $T$ to be a~path (the only tree with at~most~two leaves), thus \ldb and \tdb to coincide.

\medskip

\textbf{\textsc{Degree Balancing} to \textsc{\{Linear M, Tree S\}im-Balancing}.}
Every vertex $u$ of~$H$ becomes an independent set $S(u)$ of $G$ and a~part of $\mathcal P$ of size the sum of the weights of edges incident to~$u$.
Adjacencies in~$H$ become induced matchings in $G$, whereas non-adjacencies in~$H$ become bicliques in~$G$ (with some additional twist, see~\cref{fig:matching-and-dummy}).
The density of~$G$ forces large induced matchings to be mainly incident to a~single part~$S(u)$.
Thus, roughly speaking, the maximum induced matchings in $G$ behave like the degree in~$H$.
As the parts $S(u)$ are independent sets, there is in effect no difference between \tmb and \tsb.
The indifference between the tree or the linear variants is inherited from the previous reduction.
The actual arguments incur a~small additive loss (of~50) in the induced matching size, which is eventually outweighed by~$\gamma$.

\medskip

\textbf{\textsc{\{Linear M, Tree S\}im-Balancing} to \textsc{\{Linear M, S\}im-Width}.}
We design a~\emph{part gadget} $\mathcal G(u)$ that simultaneously takes care of the three items above~\cref{fig:roadmap}.
Essentially, every part $S(u)$ is transformed into the 1-subdivision $P_u$ of a~path on $|S(u)|$ vertices, then duplicated a~large (but constant) number of times, concatenated into a~single path, and every pair of vertices in different copies are linked by an edge whenever they do not correspond to the same vertex or neighboring vertices in $P_u$.
On the one hand, this may only increase the linear mim-width (compared to the linear mim-balancing) by an additive constant.
Following the ``spine'' of~$\mathcal G(u)$, one gets a~witness of low linear mim-width for $G^*$ from a~witness of low linear mim-balancing of~$(G,\mathcal P)$.

On the other hand, the dense ``path-like'' structure of $\mathcal G(u)$ ensures that, in an optimal decomposition of~$G^*$, its vertices may as well be placed in order at the leaves of a~caterpillar.
We thus devise a~process that builds a~witness of low sim-balancing for $(G,\mathcal P)$ from a~witness of low sim-width for~$G^*$:
We in turn identify an edge $e$ of the branch decomposition of~$G^*$ that can support $V(\mathcal G(u))$, and in particular $S(u)$, without increasing the width.
We then relocate the vertices of $S(u)$ at a~vertex subdividing~$e$.
Eventually each set $S(u)$ is solidified at a~single node of the tree, and we reach the desired witness for \tsb.

\medskip

\textbf{Remarks and perspectives.}
It can be noted that we had to develop completely new techniques.
Indeed, the known W[1]-hardness~\cite{SaetherV15,SaetherV16} relies on the difficulty of actually computing the value of a~fixed cut, i.e., solving \textsc{Maximum Induced Matching}.
In some sense, the instances produced there are not difficult to solve (a~best decomposition is, on the contrary, suggested by the reduction), but only to evaluate.
In any case, as \textsc{Maximum Induced Matching} is W[1]-hard but admits a~straightforward XP algorithm, we could not use the same idea.

We believe that \textsc{Linear Degree Balancing} could be explored for its own sake.
We emphasize that our techniques could prove useful to show the \paraNP-hardness of other parameters based on branch decompositions such as those recently introduced by Eiben et al.~\cite{Eiben22}, where the \emph{cut function} can combine maximum (semi-)induced matchings, maximum (semi-)induced co-matchings, maximum half-graphs (or ladders).
One would then mainly need to tune the gadgets of the second step (see~\cref{fig:matching-and-dummy}) to fit the particular cut function.

We notice that our reduction from \textsc{4-Occ Not-All-Equal 3-Sat} is linear: $n$-variables instances are mapped to $\Theta(n)$-vertex graphs.
Hence, unless the Exponential-Time Hypothesis (ETH)~\cite{Impagliazzo01} fails,\footnote{the assumption that there is a~$\lambda>0$ such that $n$-variable \textsc{3-Sat} cannot be solved in time $O(\lambda^n)$.} no $2^{o(n)}$-time algorithm can decide if the mim-width (or any of the five variants of mim-width) of an $n$-vertex graph is at~most 1211.
Indeed the absence of $2^{o(n)}$-time algorithm for $n$-variable \textsc{4-Occ Not-All-Equal 3-Sat} (even \textsc{Positive 4-Occ Not-All-Equal 3-Sat}) under the ETH can be derived from the Sparsification Lemma~\cite{sparsification} and classic reductions.

Our focus was to handle all the variants of mim-width at once.
This made the reduction more technical and degraded the constant upper and lower bounds.
Better bounds (than 1211) could be achieved if separately dealing with \textsc{Mim-Width} or with \textsc{Linear Mim-Width}.
For example, the latter problem essentially only requires the first two steps of the reduction.
Still, deciding if the (linear) mim-width of a~graph is at~most~1 (or any 1-digit constant) remains open.
In addition, the question whether an XP $f(\text{OPT})$-approximation algorithm for \textsc{Mim-Width} (and its variants) exists, for some fixed function~$f$ and $\text{OPT}$ being the optimum width, remains open. 

\section{Graph definitions and notation}

For $i$ and $j$ two integers, we denote by $[i,j]$ the set of integers that are at~least~$i$ and at~most~$j$.
For every integer $i$, $[i]$ is a shorthand for $[1,i]$.

\subsection{Standard graph theory}

We denote by $V(G)$ and $E(G)$ the vertex and the edge set, respectively, of a graph~$G$.
If $G$ is a~graph and $S \subseteq V(G)$, we denote by $G[S]$ the subgraph of $G$ induced by~$S$, and use $G-S$ as a~short-hand for $G[V(G) \setminus S]$.
If $e \in E(G)$, we denote by $G-e$ the graph $G$ deprived of edge $e$, but the endpoints of $e$ remain.
More generally, if $F \subseteq E(G)$, $G-F$ is the graph obtained from $G$ by removing all the edges of~$F$ (but not their endpoints).
For $X \subseteq V(G)$, we may denote by $E_G(X)$ the edge set of $G[X]$.
We denote the open and closed neighborhoods of a vertex $v$ in $G$ by $N_G(v)$ and $N_G[v]$, respectively.
For $S \subseteq V(G)$, we set $N_G(S) := \bigcup_{v \in S}N_G(v) \setminus S$ and $N_G[S] := N_G(S) \cup S$.
In every notation with a~graph subscript, we may omit it if the graph is clear from the context.
A~vertex set $S \subseteq V(G)$ \emph{covers} an edge set $F \subseteq E(G)$ if every edge of $F$ has at~least one endpoint in~$S$. 

A~\emph{cut} of a~graph $G$ is a~bipartition $(A,B)$ of~$V(G)$.
The \emph{cut-set} defined by a cut $(A,B)$, denoted by $E(A,B)$, is $\{uv\in E(G)\mid u\in A, v\in B\}$.
We denote by $G[A,B]$ the bipartite subgraph of $G$ with edge set $E(A,B)$.
A \emph{matching} is a set of edges that share no endpoints
and an \emph{induced matching} of $G$ is a matching $M$ such that every edge of $G$ intersects at~most one edge in $M$.
If $A, B \subseteq V(G)$ are two disjoint vertex subsets of $G$, a~\emph{matching between $A$ and $B$} is a~matching where every edge has one endpoint in~$A$ and the other endpoint in~$B$.
An induced matching in $G[A,B]$ is called a \emph{semi-induced} matching of $G$ between $A$ and $B$.

\subsection{Mim-width and its variants}

A~\emph{branch decomposition} or \emph{tree layout} (or simply \emph{layout}) of a~graph $G$ is a pair $(T, f)$ where $T$ is a~ternary tree (i.e., every internal node of~$T$ has degree~3) and $f$ is a~bijection from $V(G)$ to the leaves of~$T$.
Given two disjoint sets $X, Y \subseteq V(G)$, we denote by $\mim_G(X,Y)$ (resp.~$\ssim_G(X,Y)$) the maximum number of edges in a~semi-induced matching (resp.~induced matching) of $G$ between $X$ and $Y$, and may refer to it as \emph{mim-value} (resp.~\emph{sim-value}).
An edge $e$ of $T$ \emph{induces} or \emph{defines} a~cut $(A_e, B_e)$ of~$G$, where $A_e$ and $B_e$ are the preimages by $f$ of the leaves in the two components of $T-e$.

The \emph{mim-value} (resp.~\emph{sim-value}) of $(A_e, B_e)$ is set as $\mim_G(A_e,B_e)$ (resp.~$\ssim_G(A_e,B_e)$).
The \emph{mim-value} (resp.~\emph{sim-value}) of the branch decomposition $(T, f)$ is the maximum of $\mim_G(A_e,B_e)$ (resp.~$\ssim_G(A_e,B_e)$) taken over every edge $e$ of~$T$.
Finally, the \emph{mim-width} (resp.~\emph{sim-width}) of $G$ is the minimum \emph{mim-value} (resp.~\emph{sim-value}) taken over every branch decomposition $(T, f)$ of~$G$.

The \emph{upper-induced matching number} of $X \subseteq V(G)$ is the maximum size of an induced matching of $G - E(V(G) \setminus X)$ between $X$ and $V(G) \setminus X$.
The \emph{one-sided mim-width} is defined as above with the \emph{omim-value} of cut $(A_e, B_e)$, $\omim_G(A_e,B_e)$, defined as the minimum between the upper-induced matching numbers of~$A_e$ and of~$B_e$.

The linear variants of these widths and values impose $T$ to be a~rooted \emph{full} binary tree (i.e, every internal node has exactly two children) such that the internal nodes form a~path. 

\section{\textsc{Not-All-Equal 3-Sat} to \textsc{Degree Balancing}}

Given a~graph $H$ edge-weighted by a~map $\weight : E(H) \rightarrow \Nn$, the \emph{weight of a~vertex} $v$ of~$H$ is the sum of the weights of the edges incident to~$v$.
We say that a~total order $\prec$ on $V(H)$ is~\emph{$\tau$-balancing}, for some non-negative integer $\tau$,
if for every vertex $v \in V(H)$ the \emph{left weight of $v$}, $\sum_{u \in N(v), u \prec v} \weight(uv)$, and the \emph{right weight of $v$}, $\sum_{u \in N(v), v \prec u} \weight(uv)$, are both at most $\tau$, i.e., 
\[
\Delta_\prec(v) := \max \left( \sum_{u \in N(v), u \prec v} \weight(uv),  \sum_{u \in N(v), v \prec u} \weight(uv)\right) \le \tau.
\]
%We may call $\Delta_\prec(v)$ the \emph{$\prec$-degree of $v$}.

\medskip

\textbf{Constants $\bm{\tau}$, $\bm{\gamma}$, $\bm{\lambda}$.}
Henceforth we will use $\tau$ and $\gamma$ as global natural constants.
The reduction in this section will also use a~constant positive integer~$\lambda$.
For the current section, we need that the following conditions hold.
\begin{equation}\label{def:gamma-lambda}
 \gamma < \lambda,~~3\gamma + 4 < \tau,~~2\lambda + \gamma < \tau,~~6 \lambda \leqslant \tau.
\end{equation}

We will not only prove that \ldb is \paraNP-hard but we will obtain a~scalable additive gap. 
More specifically, we start by showing the following.

\begin{theorem}\label{thm:ldb}
  It is \NP-hard to distinguish graphs having \mbox{a~$\tau$-balancing} order from graphs having no $(\tau + \gamma)$-balancing order.
\end{theorem}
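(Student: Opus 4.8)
The plan is to reduce from \textsc{Positive 4-Occ Nae 3-Sat}, following the roadmap sketched in the introduction. Given a positive formula $\varphi$ with variables $x_1, \ldots, x_n$ and clauses $C_1, \ldots, C_m$ (each of size exactly~3, each variable occurring in at~most~4 clauses), I would build an edge-weighted graph $(H, \weight)$ as follows. First, I would create three \emph{anchor sets} $X, Y, Z \subseteq V(H)$, where $Y = \{y_1, \ldots, y_m\}$ has one vertex per clause, and a \emph{variable vertex} $w_j$ for each variable $x_j$. The heart of the construction is a \emph{bottleneck sequence} gadget linking $X$, $Y$, $Z$: a carefully weighted graph forcing that, in any $(\tau+\gamma)$-balancing order, all of $Y$ lies strictly between all of $X$ and all of $Z$ (up to the obvious left–right symmetry). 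I would also force each variable vertex $w_j$ to be placed either entirely before $X$ (encoding $x_j = \text{true}$) or entirely after $Z$ (encoding $x_j = \text{false}$), again via local gadgets whose weights are tuned so that no intermediate position is tolerated. Finally, for each clause $C_i$ containing variables $x_a, x_b, x_c$, I would put edges of suitable weight between $y_i$ and each of $w_a, w_b, w_c$; the weights are chosen so that the contribution these three edges make to the left weight (resp.\ right weight) of $y_i$ is at~most~$\tau$ exactly when at~most two of $\{w_a, w_b, w_c\}$ are on the side of $y_i$ before it (resp.\ after it) — which is precisely the not-all-equal condition.

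Next I would verify the two directions. For soundness (satisfiable $\Rightarrow$ $\tau$-balancing order), given a not-all-equal satisfying assignment, I would explicitly exhibit the order: place the true-variable vertices first, then $X$, then $Y$ (in any order), then $Z$, then the false-variable vertices, interleaving the bottleneck-sequence and padding vertices at their designated slots; then check vertex by vertex that every left and right weight is at~most~$\tau$, using that each clause has at~least one true and at~least one false literal, so each $y_i$ sees at~most two of its variable vertices on either side, and using the inequalities in~\eqref{def:gamma-lambda} to control the bottleneck and padding contributions. For completeness (no $(\tau+\gamma)$-balancing order $\Rightarrow$ unsatisfiable), I would argue contrapositively: from any $(\tau+\gamma)$-balancing order, the bottleneck sequence forces the $X$--$Y$--$Z$ (or $Z$--$Y$--$X$) block structure, and the variable gadgets force each $w_j$ to one of the two sides; reading off true/false from which side, the weight constraint at each $y_i$ — now allowed to reach $\tau+\gamma$ rather than $\tau$, so the gap must be engineered to survive the slack $\gamma$ — still forbids all three variable vertices of a clause from landing on the same side, hence yields a not-all-equal assignment. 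To get the claim that $T$ must be a path (so that \ldb and its tree variant \tdb coincide, which is what the roadmap needs), I would add a padding gadget that inflates the weighted degree of every vertex except two designated ones above $2(\tau+\gamma)$, so that at~most two vertices can be leaves of a balancing tree, forcing the tree to be a path; this argument is really about \tdb and would be stated precisely in the later section, but the padding is installed here.

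The main obstacle, and where most of the work goes, is designing the bottleneck sequence so that it simultaneously (i) enforces the betweenness of $Y$ against $X$ and $Z$ robustly even with the extra slack $\gamma$, (ii) has all its own vertices satisfy the left/right weight bound (with room to spare, again because of $\gamma$), and (iii) does not accidentally create feasible orderings that scramble the intended block structure. Concretely I expect to implement it as a long chain of heavy ``separator'' vertices between $X$ and $Y$ and between $Y$ and $Z$, with weights calibrated so that any separator vertex placed with too much of $X$ (or $Y$, or $Z$) on one side blows past $\tau+\gamma$; the delicate part is the arithmetic balancing of these weights against the degree budget $\tau$, and ensuring the additive gap $\gamma$ is created cleanly — i.e., that the ``yes'' instances fit within $\tau$ with no vertex ever exceeding it, while in the ``no'' case every ordering forces some vertex strictly above $\tau + \gamma$. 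The conditions collected in~\eqref{def:gamma-lambda} are exactly the inequalities one needs to make all these local checks go through, and the bulk of the proof is a disciplined case analysis verifying $\Delta_\prec(v) \le \tau$ (resp.\ $> \tau+\gamma$) for each type of vertex — variable vertices, clause vertices, bottleneck-separator vertices, anchor vertices, and padding vertices — under the constructed order (resp.\ under any order).
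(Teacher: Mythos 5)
Your route is the paper's route: reduce from \textsc{Positive 4-Occ Nae 3-Sat}, use a bottleneck sequence to pin the block of clause vertices between two anchor blocks, force each variable vertex to one side of the clause block, and let weight-$\lambda$ clause--variable edges together with the clause vertex's heavy attachments implement the not-all-equal budget, with an explicit order for the yes-direction and a read-off of an assignment from any $(\tau+\gamma)$-balancing order for the no-direction. However, one step fails as written: the padding. You install in $H$ a padding that raises the weighted degree of all but two vertices above $2(\tau+\gamma)$. Since in any linear order the left and right weights of a vertex sum to its total incident weight, every padded vertex would then have one side of weight exceeding $\tau+\gamma\geq\tau$, so the padded graph admits no $\tau$-balancing order (indeed no $(\tau+\gamma)$-balancing order) at all; satisfiable formulas are mapped to no-instances and the soundness direction collapses. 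The paper pads each deficient vertex only up to total weight exactly $\tau+\gamma+1$ -- which already forbids it from being a leaf of a $(\tau+\gamma)$-balancing tree, since a leaf has all its weight on one side -- and in \cref{lem:sat-to-bo} the weight-$1$ padding edges are split across the two sides of the padded vertex so that both its left and right weights stay at most $\tau$.

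Beyond that, the two gadgets that carry the entire argument are specified only by their intended behaviour (``weights calibrated'', ``local gadgets whose weights are tuned''), not constructed, and the side-forcing one is not routine. In the paper, forcing $v_{x}$ out of the clause block needs, in addition to the sets $T$ and $F$ attached as terminals of the bottleneck sequence with weight $\tau-\lambda$, the pendant vertices $\overline{t_i},\overline{f_i}$ and the companion vertices $\overline{v_{x_i}}$, so that the pair $\{v_{x_i},\overline{v_{x_i}}\}$ must surround both $t_i$ and $f_i$ (\cref{lem:well-defined-order}); without the companion, a variable vertex sitting strictly between $t_i$ and $f_i$ gives $t_i$ right weight and $f_i$ left weight only $(\tau-\lambda)+\lambda=\tau\leq\tau+\gamma$, so degree considerations alone do not exclude it, and the not-all-equal extraction breaks. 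So the plan follows the paper's approach and the gap engineering via \eqref{def:gamma-lambda} is the right idea, but as it stands the padding parameter is wrong and the crucial mechanism forcing variable vertices to one side is missing.
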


Eventually we will need that $\tau$ and $\gamma$ are multiples of a~constant integer $a$ (which is defined and set to~45 in~\cref{sec:balancing-to-mim-width}).
This can simply be achieved by multiplying all edge weights of the forthcoming reduction by~$a$.
We will finally set $\tau := 24a = 1080$, $\lambda := 4a = 180$, and $\gamma := 3a = 135$.
One can quickly check that these values do respect~\cref{def:gamma-lambda}.

\subsection{First properties on balancing orders, and bottlenecks}

Given a total order $\prec$ on a graph $H$, we say that a~vertex set $S$ is \emph{smaller} (resp. \emph{larger}) than another vertex set $U$, denoted by $S \prec U$ (resp.~$U \prec S$), if for all $s \in S, u \in U$ we have $s \prec u$ (resp. $u \prec s)$.
When a~set $S$ is neither larger nor smaller than a vertex $u$, we say that $S$ \emph{surrounds}~$u$.
We also say that $u$ \emph{is surrounded by}~$S$.
Note that if $S$ surrounds two vertices $u$ and $v$, it surrounds any vertex $w$ with $u \prec w \prec v$.

We begin with a~useful observation on the only possible $\tau$-balancing order of a~$P_3$ (i.e., 3-vertex path) with large total weight.

\begin{lemma}\label{lem:P3-order}
For any integer $t$, and any edge-weighted graph $(H,\weight)$ containing a~$P_3$ $abc$ such that $\weight(ab) + \weight(bc) > t$.
Then in any $t$-balancing order of~$H$, $\{a, c\}$ surrounds~$b$.
\end{lemma}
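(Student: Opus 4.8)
The plan is to argue by contradiction, supposing some $t$-balancing order $\prec$ of $H$ in which $b$ is \emph{not} surrounded by $\{a,c\}$. By the definition of ``surrounds,'' this means $\{a,c\}$ is either smaller or larger than $b$; by the left–right symmetry of the $t$-balancing condition (reversing $\prec$ preserves the property), it suffices to treat the case $\{a,c\} \prec b$, i.e., $a \prec b$ and $c \prec b$.

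Under this assumption, both neighbors $a$ and $c$ of $b$ lie to the left of $b$ in the order. Hence the left weight of $b$ is $\sum_{u \in N(b),\, u \prec b} \weight(ub) \ge \weight(ab) + \weight(bc)$, since $a$ and $c$ are among the vertices $u \in N(b)$ with $u \prec b$ (edge weights are non-negative, so adding any further terms only increases the sum). By hypothesis $\weight(ab) + \weight(bc) > t$, so $\Delta_\prec(b) \ge \weight(ab)+\weight(bc) > t$, contradicting the assumption that $\prec$ is $t$-balancing. The symmetric case $b \prec \{a,c\}$ gives the same contradiction with the right weight of $b$. Therefore $\{a,c\}$ surrounds $b$, as claimed.

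This proof is essentially immediate once the definitions are unpacked, so there is no real obstacle; the only point requiring a word of care is making the symmetry argument precise — namely observing that if $\prec$ is $t$-balancing then so is its reverse, which lets us halve the case analysis — and noting that because all edge weights are non-negative, the two neighbors $a$ and $c$ already contribute enough to exceed $t$ regardless of where the remaining neighbors of $b$ (if any) sit relative to $b$.
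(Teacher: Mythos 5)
Your proof is correct and is essentially the paper's own argument: if $\{a,c\} \prec b$ (resp.\ $b \prec \{a,c\}$), then the left (resp.\ right) weight of $b$ already exceeds $t$, contradicting the $t$-balancing property. The extra remarks about non-negative weights and the reversal symmetry are fine but not needed beyond what the paper states.
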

\begin{proof}
If $\{a, c\} \prec b$ is (resp. $b \prec \{a, c\}$), then the left (resp. right) weight of~$b$ is more than~$t$.
\end{proof}

We also rely on the following observation, where the induced subgraph of an edge-weighted graph $(H,\weight)$ is an induced subgraph of $H$ edge-weighted by the restriction of $\weight$ to its edge set.
\begin{observation}\label{obs:bo-ind-sub}
  Every $t$-balancing order of $(H,\weight)$ is a~$t$-balancing order of any induced subgraph of~$(H,\weight)$. 
\end{observation}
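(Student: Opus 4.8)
The plan is to simply restrict the given order to the smaller vertex set and check the degree bounds by unfolding the definitions. Concretely, let $\prec$ be a $t$-balancing order of $(H,\weight)$, let $S \subseteq V(H)$, let $(H',\weight')$ be the induced subgraph with $H' = H[S]$ and $\weight'$ the restriction of $\weight$ to $E(H')$, and let $\prec'$ be the total order on $S$ obtained by restricting $\prec$. I would prove that $\prec'$ is $t$-balancing for $(H',\weight')$.

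First I would record the elementary facts about induced subgraphs: for every $v \in S$ we have $N_{H'}(v) = N_H(v) \cap S$; for $uv \in E(H')$ we have $\weight'(uv) = \weight(uv)$; and for $u,v \in S$ we have $u \prec' v$ if and only if $u \prec v$. Then, for an arbitrary $v \in S$, the left weight of $v$ under $\prec'$ in $(H',\weight')$ is
\[
\sum_{u \in N_{H'}(v),\, u \prec' v} \weight'(uv) \;=\; \sum_{u \in N_H(v) \cap S,\, u \prec v} \weight(uv) \;\le\; \sum_{u \in N_H(v),\, u \prec v} \weight(uv) \;\le\; \tau,
\]
where the first inequality uses that $\weight$ takes values in $\Nn$, so dropping the terms indexed by $u \in N_H(v) \setminus S$ cannot increase the sum, and the last inequality holds because $\prec$ is $t$-balancing for $(H,\weight)$. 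The identical argument bounds the right weight of $v$, so $\Delta_{\prec'}(v) \le \Delta_{\prec}(v) \le t$ for every $v \in S$, which is exactly the claim.

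There is essentially no obstacle here: the statement is a direct consequence of the definitions, and the only point worth making explicit is the monotonicity of the left and right weights under vertex deletion, which relies on the weights being non-negative.
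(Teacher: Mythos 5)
Your proof is correct and is exactly the argument the paper has in mind: the paper states this as an observation without proof, precisely because restricting the order and using non-negativity of the weights makes the left and right weights only decrease. The only cosmetic slip is writing $\tau$ instead of $t$ at the end of your displayed inequality; the bound should read $\le t$ as in your final sentence.
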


Our main ingredient here is called~\emph{bottleneck}.
%It allows to constrain any $(\tau + \gamma)$-balancing order.

\begin{definition}\label{def:bottleneck}
  A~\emph{$(\tau, \gamma)$-bottleneck on terminals $v_1, \ldots, v_k$} is an~edge-weighted caterpillar $B$ defined as follows.
\begin{compactenum}
\item Let $P(B)$ be a~$2k$-vertex path, say $a_1b_1a_2b_2 \dots a_kb_k$, called \emph{spine} of~$B$ and for every $i \in [k]$, we set $\weight(a_ib_i) := \tau$ and $\weight(b_ia_{i+1}) := \gamma + 1$.
\item We obtain $B$ by adding to $P(B)$ a~leaf $v_i$ adjacent to $a_i$, satisfying $\gamma + 1 \leqslant \weight(v_ia_i) \leqslant \tau - \gamma - 1$ for every $i \in [k]$.
  This edge is called the \emph{attachment of $v_i$ to $B$}.
  \item The caterpillar $B$ is rooted in~$b_k$.
\end{compactenum}
\end{definition}

The vertex $v_1$ is called \emph{first terminal of $B$}.
A~$(\tau, \gamma)$-bottleneck is depicted in~\cref{fig:bottleneck}.

\begin{figure}[h!]
\centering
  \begin{tikzpicture}[scale=.9, vertex/.style={draw, circle, minimum size=6mm, inner sep=0pt}, every path/.style={thick}]
    \def\s{1.6}

% Spine vertices
\node[vertex] (a1) at (0, 0) {$a_1$};
\node[vertex] (b1) [right=\s cm of a1] {$b_1$};
\node[vertex] (a2) [right=\s cm of b1] {$a_2$};
\node[vertex] (b2) [right=\s cm of a2] {$b_2$};
\node (dots) [right=\s cm of b2] {$\dots$};
\node[vertex] (a3) [right=\s cm of dots] {$a_k$};
\node[vertex] (b3) [right=\s cm of a3] {$b_k$};

% Root
\node[rectangle, draw=none, minimum size=0mm, inner sep=0pt] (root) [below=0.5cm of b3] {};

% Edges of the spine
\draw (a1) -- node[above] {$\tau$} (b1);
\draw (b1) -- node[above] {$\gamma + 1$} (a2);
\draw (a2) -- node[above] {$\tau$} (b2);
\draw (b2) -- node[above] {$\gamma + 1$} (dots);
\draw (dots) -- node[above] {$\gamma + 1$} (a3);
\draw (a3) -- node[above] {$\tau$} (b3);

% Attachment of leaves
\foreach \i in {1}{
  \node[vertex] (v\i) [below = 1cm of a\i] {$v_\i$};
  \draw (v\i) -- node[right, midway] {$\weight(a_\i v_\i) \in [\gamma+1,\tau-\gamma-1]$} (a\i);
}
\foreach \i/\j in {2/2,3/k}{
  \node[vertex] (v\i) [below = 1cm of a\i] {$v_\j$};
  \draw (v\i) -- node[right, midway] {$\weight(a_\j v_\j)$} (a\i);
}
\node [above = 0.15cm of b3] {root};

\end{tikzpicture}
\caption{Illustration of a~$(\tau,\gamma)$-bottleneck.}
\label{fig:bottleneck}
\end{figure}
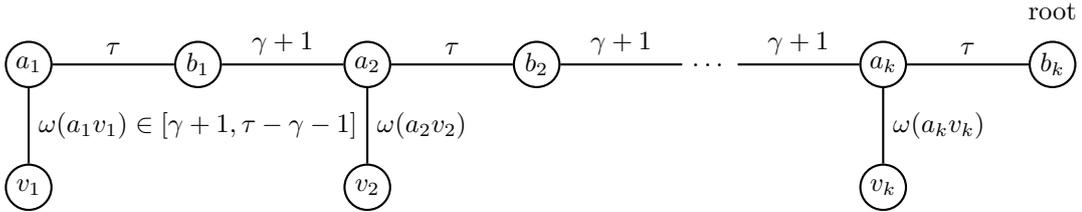

A~bottleneck ensures the following.

\begin{lemma}\label{lem:bottle-order}
Let $\prec$ be a~$(\tau + \gamma)$-balancing order of a~$(\tau, \gamma)$-bottleneck $B$ on terminals $v_1, \dots, v_k$.
Using the notation of~\cref{def:bottleneck}, if $a_k \prec b_k$ then $a_1 \prec b_1 \prec a_2 \prec b_2 \prec \dots \prec a_k \prec b_k$, and $v_i \prec a_i$ for each $i \in [k]$.
Hence symmetrically, if $b_k \prec a_k$ then $b_k \prec a_k \prec b_{k-1} \prec a_{k-1} \prec \dots \prec b_1 \prec a_1$, and $a_i \prec v_i$ for each $i \in [k]$.
\end{lemma}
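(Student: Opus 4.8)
The plan is to prove the first implication (if $a_k \prec b_k$) and obtain the second by the symmetry $\prec \mapsto \succ$, which maps a $(\tau+\gamma)$-balancing order to a $(\tau+\gamma)$-balancing order and reverses the spine. So assume $a_k \prec b_k$. The strategy is a downward induction on $i$ from $k$ to $1$, at each step establishing the local picture $v_i \prec a_i \prec b_i$ together with the anchoring fact $b_i \prec a_{i+1}$ when $i<k$ (so that the blocks chain up correctly). The whole argument is just a careful repeated use of~\cref{lem:P3-order}: note that each spine $P_3$ of the form $a_i b_i a_{i+1}$ has weight $\tau + \gamma + 1 > \tau+\gamma$, each $P_3$ of the form $b_{i-1} a_i b_i$ has weight $(\gamma+1) + \tau > \tau + \gamma$, and each ``pendant'' $P_3$ of the form $v_i a_i b_i$ has weight $\weight(v_i a_i) + \tau \ge (\gamma+1) + \tau > \tau + \gamma$; so in a $(\tau+\gamma)$-balancing order, in every one of these $P_3$'s the two endpoints surround the center.

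\textbf{Base case $i = k$.} We are given $a_k \prec b_k$. Apply~\cref{lem:P3-order} to the pendant $P_3$ $v_k a_k b_k$ (weight $> \tau+\gamma$): $\{v_k, b_k\}$ surrounds $a_k$. Since $a_k \prec b_k$, the only way for $b_k$ not to be on the large side of $a_k$ is $v_k \prec a_k \prec b_k$. This gives the local picture at level $k$.

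\textbf{Inductive step.} Suppose for some $i$ with $1 \le i < k$ we already know $a_{i+1} \prec b_{i+1}$ (and, if $i+1<k$, that $b_{i+1}\prec a_{i+2}$, though only $a_{i+1}\prec b_{i+1}$ is needed here). First apply~\cref{lem:P3-order} to the spine $P_3$ $a_i b_i a_{i+1}$ (weight $\tau + \gamma + 1$): $\{a_i, a_{i+1}\}$ surrounds $b_i$. Next apply it to the spine $P_3$ $b_i a_{i+1} b_{i+1}$ (weight $\tau + \gamma + 1$): $\{b_i, b_{i+1}\}$ surrounds $a_{i+1}$; combined with $a_{i+1} \prec b_{i+1}$ this forces $b_i \prec a_{i+1}$. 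Plugging $b_i \prec a_{i+1}$ back into ``$\{a_i,a_{i+1}\}$ surrounds $b_i$'' yields $a_i \prec b_i$ (otherwise $a_i$ would be on the same side as $a_{i+1}$, contradicting that $b_i$ is surrounded). Finally apply~\cref{lem:P3-order} to the pendant $P_3$ $v_i a_i b_i$ (weight $> \tau+\gamma$): $\{v_i, b_i\}$ surrounds $a_i$, and since $a_i \prec b_i$ this gives $v_i \prec a_i$. Thus $v_i \prec a_i \prec b_i \prec a_{i+1}$, completing the step. Chaining the conclusions $v_i \prec a_i \prec b_i$ for all $i$ together with $b_i \prec a_{i+1}$ for all $i<k$ gives exactly $a_1 \prec b_1 \prec \dots \prec a_k \prec b_k$ with each $v_i \prec a_i$.

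I do not expect a genuine obstacle here; the only thing to be careful about is bookkeeping the direction of each ``surrounds'' conclusion and making sure the $P_3$ weights genuinely exceed $\tau+\gamma$ (which uses $\gamma+1 \le \weight(v_i a_i)$ from~\cref{def:bottleneck} for the pendant edges and the explicit spine weights $\tau$ and $\gamma+1$). The symmetry argument for the ``$b_k \prec a_k$'' case should be stated explicitly — reversing the order turns $B$ into an isomorphic bottleneck read from $b_k$ to $a_1$ — rather than reproving it, to keep the proof short.
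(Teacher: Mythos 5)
Your proof is correct and follows essentially the same route as the paper: the paper phrases it as a minimal-index/contradiction argument rather than an explicit downward induction, but both rely on exactly the same applications of~\cref{lem:P3-order} to the spine $P_3$'s $a_ib_ia_{i+1}$ and $b_{i-1}a_ib_i$ and the pendant $P_3$'s $v_ia_ib_i$, with the same weight bounds and the same reversal symmetry for the second case.
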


\begin{proof}
  Let $i$ be the smallest index such that $a_i \prec b_i \prec a_{i+1} \prec b_{i+1} \prec \dots \prec a_k \prec b_k$.
  Assume for the sake of contradiction that $i \geqslant 2$.
  Since $\weight(a_ib_i) = \tau$ and  $\min(\weight(a_ib_{i-1}), \weight(a_iv_i)) \geqslant \gamma+1$, it holds that $b_{i-1} \prec a_i$ and $v_i \prec a_i$, by applying~\cref{lem:P3-order} on $b_{i-1} a_i b_i$ and $v_i a_i b_i$.

We have $\weight(a_{i-1}b_{i-1}) + \weight(b_{i-1}a_i) = \tau + \gamma + 1$, so, by~\Cref{lem:P3-order} on the $P_3$ $a_{i-1}b_{i-1}a_i$, $a_{i-1} \prec b_{i-1} \prec a_i$.
This contradicts the minimality of~$i$; thus we have $i = 1$.
And in particular, we also have $v_i \prec a_i$ for every $i \in [k]$.
\end{proof}

Henceforth every bottleneck is a~$(\tau, \gamma)$-bottleneck.
Thus we simply write \emph{bottleneck}.

\begin{definition}\label{def:bottleneck-seq}
  Given three vertex sets $S_1, S_2, S_3$, we call \emph{bottleneck sequence on $S_1, S_2, S_3$} an edge-weighted graph $B(S_1, S_2, S_3)$ obtained by adding
  \begin{compactenum}
  \item for every $i \in \{1,2\}$, a~bottleneck $B_i^+$ with terminals $S_i \cup \{s_i\}$ where $s_i$ is the first terminal of $B_i^+$, and the attachment of $s_i$ is of weight $\gamma + 1$, 
  \item for every $i \in \{2,3\}$, a~bottleneck $B_i^-$ with terminals $S_i \cup \{s_i\}$ where $s_i$ is the first terminal of $B_i^-$ and the attachment of $s_i$ is of weight $\gamma + 1$ such that
  \item for every $i \in \{1,2\}$, the roots of $B_i^+$ and of $B_{i+1}^-$ are identified as the same vertex, and
  \item for every $i \in \{2,3\}$, an edge $s_is_{i+1}$ of weight $\lfloor \frac{\tau + \gamma}{2} \rfloor + 1$, 
  \end{compactenum}
  with $s_1, s_2, s_3$ three new vertices.
\end{definition}

\begin{figure}[h!]
  \centering
  \begin{tikzpicture}[scale=.9, vertex/.style={draw, circle, minimum size=4.5mm, inner sep=0pt}, every path/.style={thick}]
	\def\s{1.1}
	\def\fo{0.08}
	
	\begin{scope}   
		% Spine vertices
		\node[vertex] (a1) at (0, 0) {};
		\node[vertex] (b1) [right=\s cm of a1] {};
		\node[vertex] (a2) [right=\s cm of b1] {};
		\node[vertex] (b2) [right=\s cm of a2] {};
		\node (dots) [right=\s cm of b2] {$\ldots$};
		\node[vertex] (a3) [right=\s cm of dots] {};
		\node[circle, minimum size=4.5mm, inner sep=0pt] (bi3) [right=\s cm of a3] {};
		\node[vertex] (b3) at ([xshift=1.78 cm, yshift=-2.06 cm] a3) {};
		
		% Edges of the spine
		\draw (a1) -- node[below] {$\tau$} (b1);
		\draw (b1) -- node[below] {$\gamma + 1$} (a2);
		\draw (a2) -- node[below] {$\tau$} (b2);
		\draw (b2) -- node[below] {$\gamma + 1$} (dots);
		\draw (dots) -- node[below] {$\gamma + 1$} (a3);
		\draw (a3) -- node[left] {$\tau$} (b3);
		
		% Attachment of leaves
		\foreach \i in {1}{
			\node[vertex, preaction={fill opacity = 0.5, fill = blue}] (v\i) [below = 1cm of a\i] {$s_\i$};
			\draw (v\i) -- node[right, midway] {$\gamma+1$} (a\i);
		}
		\foreach \i/\j in {2/2,3/k}{
			\node[vertex, fill opacity = 0.5, fill = blue] (v\i) [below = 1cm of a\i] {};
			\draw (v\i) -- node[right, midway] {} (a\i);
		}
		\node [right = 0.05cm of b3] {root};
		\node[draw,thick,rounded corners,fit=(a1) (bi3) (v3), fill opacity = \fo, fill=blue] (B1p) {} ;
		\node[fit=(a1) (bi3) (v3)] (B1+) {} ;
		\node at ([yshift=-0.25 cm] B1+) {$B^+_1$};
	\end{scope}
	
	\begin{scope}[yshift=-2.52cm]
		% B^-_2  
		% Spine vertices
		\node[vertex] (a1) at (0, 0) {};
		\node[vertex] (b1) [right=\s cm of a1] {};
		\node[vertex] (a2) [right=\s cm of b1] {};
		\node[vertex] (b2) [right=\s cm of a2] {};
		\node (dots) [right=\s cm of b2] {$\ldots$};
		\node[vertex] (a3) [right=\s cm of dots] {};
		\node[circle, minimum size=4.5mm] (bi3) [right=\s cm of a3] {};
		
		% Root
		\node[rectangle, draw=none, minimum size=0mm, inner sep=0pt] (root) [below=0.5cm of b3] {};
		
		% Edges of the spine
		\draw (a1) -- node[below] {$\tau$} (b1);
		\draw (b1) -- node[below] {$\gamma + 1$} (a2);
		\draw (a2) -- node[below] {$\tau$} (b2);
		\draw (b2) -- node[below] {$\gamma + 1$} (dots);
		\draw (dots) -- node[below] {$\gamma + 1$} (a3);
		
		% Attachment of leaves
		\foreach \i in {1}{
			\node[vertex, preaction={fill opacity = 0.5, fill = green}] (vv\i) [below = 1cm of a\i] {$s_2$};
			\draw (vv\i) -- node[right, midway] {$\gamma+1$} (a\i);
		}
		\foreach \i/\j in {2/2,3/k}{
			\node[vertex, fill opacity = 0.5, fill = green] (vv\i) [below = 1cm of a\i] {};
			\draw (vv\i) -- node[right, midway] {} (a\i);
		}
		\draw (a3) -- node[below] {$\tau$} (b3);
		\node[draw,thick,rounded corners,fit=(a1) (bi3) (vv3), fill opacity = \fo, fill = green] (B2m) {} ;
		\node[fit=(a1) (bi3) (vv3)] (B2+) {} ;
		\node at ([yshift=-0.1 cm] B2+) {$B^-_2$};
		
		% B^+_2
		\foreach \i/\j in {1/$\gamma+1$,2/,3/}{
			\node[vertex] (w\i) [below=1.1 cm of vv\i] {} ;
			\draw (vv\i)  -- node[right, midway] {\j} (w\i) ;
		}
		
		\foreach \i in {1,2}{
			\node[vertex] (x\i) [right=\s cm of w\i] {} ;
			\draw (w\i)  -- node[above, midway] {$\tau$} (x\i) ;
		}
		\draw (x1)  -- node[above, midway] {$\gamma+1$} (w2) ;
		\node[circle, minimum size=4.5mm] (wd) [right=\s cm of x2] {$\ldots$} ;
		\draw (x2) -- node[above, midway] {$\gamma+1$} (wd) ;
		\draw (wd) -- node[above, midway] {$\gamma+1$} (w3) ;
		\node[circle, minimum size=4.5mm] (w4) [right=\s cm of w3] {} ;
		
		\node[draw,thick,rounded corners,fit=(vv1) (w4), fill opacity = \fo, fill = green] (B2p) {} ;
		\node[draw,thick,rounded corners,fit=(vv1) (w4)] {$B^+_2$} ;
		
		% B^-_3
		\foreach \i in {1}{
			\node[vertex, preaction={fill opacity = 0.5, fill = red}] (y\i) [below=0.3 cm of w\i] {$s_3$} ;
			\node[vertex] (z\i) [below=1.1 cm of y\i] {} ;
		}
		\foreach \i in {2,3}{
			\node[vertex] (y\i) [below=0.3 cm of w\i, fill opacity = 0.5, fill = red] {} ;
			\node[vertex] (z\i) [below=1.1 cm of y\i] {} ;
		}
		\foreach \i in {1,2}{
			\node[vertex] (zp\i) [right=\s cm of z\i] {} ;
			\draw (z\i)  -- node[above, midway] {$\tau$} (zp\i) ;
		}
		
		\draw (y1)  -- node[right, midway] {$\gamma+1$} (z1) ;
		\foreach \i in {2,3}{
			\draw (y\i) -- (z\i) ;
		}
		
		\node[circle, minimum size=4.5mm] (zpp) [right=\s cm of zp2] {$\ldots$} ;
		\draw (zp1) -- node[above, midway] {$\gamma+1$} (z2) ;
		\draw (zp2) -- node[above, midway] {$\gamma+1$} (zpp) ;
		\draw (zpp) -- node[above, midway] {$\gamma+1$} (z3) ;
		\node[circle, minimum size=4.5mm] (zf) [right=\s cm of z3] {} ;
		\node[circle, minimum size=4.5mm] (ze) [right=1.13 cm of z3] {} ;
		
		\node[draw,thick,rounded corners,fit=(y1) (zf), fill opacity = \fo, fill = red] (B3m) {} ;
		\node[draw,thick,rounded corners,fit=(y1) (zf)] {$B^-_3$} ;
		
		\node[vertex] (root2) [above=1.5 cm of ze] {} ;
		
		\draw (z3)  -- node[right, midway] {$\tau$} (root2) ;
		\draw (w3)  -- node[above, midway] {$\tau$} (root2) ;
		\node [right = 0.05cm of root2] {root};
	\end{scope}
	
	\foreach \i/\j in {v1/vv1, vv1/y1}{
		\draw (\i) to [bend left = -40] node[midway,left] {$\lfloor \frac{\tau+\gamma}{2} \rfloor + 1$} (\j) ;
	}
	
\end{tikzpicture}
  \caption{Bottleneck sequence $B(S_1,S_2,S_3)$.
    Vertices of $S_1 \cup \{s_1\}, S_2 \cup \{s_2\}, S_3 \cup \{s_3\}$ are in red, green, and blue, respectively.
    As in~\cref{fig:bottleneck}, every edge with an unspecified weight get one in the discrete interval $[\gamma+1,\tau-\gamma-1]$.}
\label{fig:bottleneck-sequence}
\end{figure}
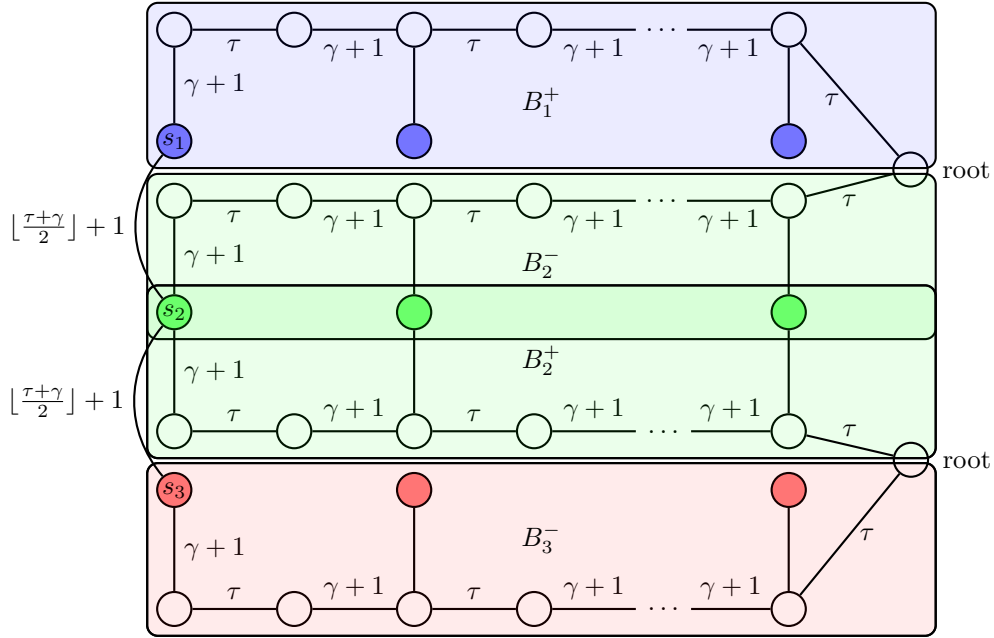
The next lemma yields the crucial property ensured by bottleneck sequences.

\begin{lemma}\label{lem:Forcing-the-order}
Any $(\tau + \gamma)$-balancing order $\prec$ on a~bottleneck sequence $B(S_1, S_2, S_3)$ is such that $S_1 \prec S_2 \prec S_3$ or $S_3 \prec S_2 \prec S_1$.
\end{lemma}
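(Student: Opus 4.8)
The plan is to fix a $(\tau+\gamma)$-balancing order $\prec$ of $B := B(S_1,S_2,S_3)$ and extract the conclusion from three local arguments: a bottleneck-orientation argument inside each of the four bottlenecks $B_1^+, B_2^-, B_2^+, B_3^-$, a weighted-degree argument at each of the two shared roots $r_{12}$ (the identified root of $B_1^+$ and $B_2^-$) and $r_{23}$ (the identified root of $B_2^+$ and $B_3^-$), and a weighted-degree argument at the vertex $s_2$. The one delicate point will be the bookkeeping that threads these together: one must keep straight, for each bottleneck, which orientation in the sense of \cref{lem:bottle-order} is compatible with the side of the shared root on which its incident spine edge lies, and check that $r_{12}$, $r_{23}$, and $s_2$ have exactly the stated incident edges so that the weighted-degree sums are computed exactly. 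The numerical facts actually used are just $\tau>\gamma$ and $2\gamma+2\le\tau$, both comfortably implied by \cref{def:gamma-lambda}.

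First I would observe that each of the four bottlenecks is an induced subgraph of $B$ (the sets $S_i\cup\{s_i\}$ are precisely the terminals, and $B$ carries no edge inside such a vertex set other than those of the bottleneck itself), so by \cref{obs:bo-ind-sub} the restriction of $\prec$ to it is a $(\tau+\gamma)$-balancing order; one checks that the prescribed attachment weights, namely $\gamma+1$ for the $s_i$'s and the default interval for the rest, make each $B_i^{\pm}$ a genuine $(\tau,\gamma)$-bottleneck, using $2\gamma+2\le\tau$. \Cref{lem:bottle-order} then gives the key qualitative fact: in each bottleneck all terminals lie on the same side of its root in $\prec$, the side being dictated by which of the two orientations ($a_k\prec b_k$ or $b_k\prec a_k$) occurs. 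Hence $S_1\cup\{s_1\}$ and $S_2\cup\{s_2\}$ each lie entirely on one side of $r_{12}$, and $S_2\cup\{s_2\}$ and $S_3\cup\{s_3\}$ each lie entirely on one side of $r_{23}$.

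Next I would pin down the roots. The vertex $r_{12}$ has degree exactly $2$, both incident edges of weight $\tau$ (the last spine edges of $B_1^+$ and of $B_2^-$); since $2\tau>\tau+\gamma$, these two edges cannot both lie on the same side of $r_{12}$, so the spines of $B_1^+$ and of $B_2^-$ fall on opposite sides of $r_{12}$. Matching this with the two bottleneck orientations and the previous paragraph yields that $r_{12}$ strictly separates $S_1\cup\{s_1\}$ from $S_2\cup\{s_2\}$: either $S_1\cup\{s_1\}\prec r_{12}\prec S_2\cup\{s_2\}$ or $S_2\cup\{s_2\}\prec r_{12}\prec S_1\cup\{s_1\}$. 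The same argument at $r_{23}$ shows that $r_{23}$ strictly separates $S_2\cup\{s_2\}$ from $S_3\cup\{s_3\}$.

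It remains to break the two ``palindromic'' configurations where $S_1$ and $S_3$ lie on the same side of $S_2$, and for this I would use the edges $s_1s_2$ and $s_2s_3$. The vertex $s_2$ has exactly four incident edges: its two attachments of weight $\gamma+1$ (to $B_2^-$ and $B_2^+$) and $s_1s_2, s_2s_3$, each of weight $\lfloor\frac{\tau+\gamma}{2}\rfloor+1$. Since $2\big(\lfloor\frac{\tau+\gamma}{2}\rfloor+1\big)>\tau+\gamma$, the edges $s_1s_2$ and $s_2s_3$ lie on opposite sides of $s_2$, so $s_1\prec s_2\prec s_3$ or $s_3\prec s_2\prec s_1$. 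In the first case $s_1\prec s_2$ together with the separation by $r_{12}$ forces $s_1\prec r_{12}$, hence $S_1\cup\{s_1\}\prec r_{12}\prec S_2\cup\{s_2\}$ and in particular $S_1\prec S_2$; symmetrically $s_2\prec s_3$ and the separation by $r_{23}$ give $S_2\prec S_3$; so $S_1\prec S_2\prec S_3$. The second case is the mirror image and yields $S_3\prec S_2\prec S_1$, completing the proof.
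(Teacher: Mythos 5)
Your proof is correct and follows essentially the same route as the paper's: apply \cref{lem:P3-order} to the path $s_1s_2s_3$ and to the two weight-$\tau$ edges at each shared root, then use \cref{lem:bottle-order} to place each terminal set $S_i\cup\{s_i\}$ entirely on one side of its root, and combine. Your version just makes explicit a few points the paper leaves implicit (the appeal to \cref{obs:bo-ind-sub} and the check that the weight-$(\gamma+1)$ attachments fit the bottleneck definition), which is fine.
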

\begin{proof}
  We keep the notation of~\cref{def:bottleneck-seq}.
Applying~\cref{lem:P3-order} on the $P_3$ $s_1s_2s_3$, we get that either $s_1 \prec s_2 \prec s_3$ or $s_3 \prec s_2 \prec s_1$.

For $i \in \{1,2\}$, let $r_i$ be the common root of $B_i^+$ and $B_{i+1}^-$.
Vertex $r_i$ has exactly two neighbors: a~vertex $a^+ \in V(B_i^+)$ and a~vertex $a^- \in V(B_{i+1}^-)$.
By construction $\weight(a^-r_i) = \weight(a^+r_i)= \tau$, and by \Cref{lem:P3-order} (since $2 \tau > \tau + \gamma$), either $a^- \prec r_i \prec a^+$ or $a^+ \prec r_i \prec a^-$.
By \Cref{lem:bottle-order}, this implies that $S_{i+1} \cup \{s_{i+1}\} \prec S_i \cup \{s_i\}$ or $S_i \cup \{s_i\} \prec S_{i+1} \cup \{s_{i+1}\}$.

In particular, $S_1 \prec S_2 \prec S_3$ (if $s_1 \prec s_2 \prec s_3$) or  $S_3 \prec S_2 \prec S_1$ (if $s_3 \prec s_2 \prec s_1$).
\end{proof}

We conclude the section by defining $\tau$-balancing orders for bottleneck sequences.
A~\emph{direct order} $\prec_{\rightarrow}$ of a~bottleneck $B$ with terminals $v_1, \dots, v_k$ goes as follows:
$$
\{v_1, \dots, v_k\} \prec_{\rightarrow} a_1 \prec_{\rightarrow} b_1 \prec_{\rightarrow} a_2 \prec_{\rightarrow} b_2 \prec_{\rightarrow} \dots \prec_{\rightarrow} a_k \prec_{\rightarrow} b_k,
$$
where $a_1b_2 \dots a_kb_k$ is the spine of $B$ rooted in~$b_k$.
Note that the order induced by $\{v_1, \dots, v_k\}$ is not specified (and so a~given bottleneck on $k$ terminals admits $k!$ different direct orders).
A~\emph{reverse order of~$B$}, denoted by $\prec_{\leftarrow}$, is simply defined as the reverse order of a~direct order~$\prec_{\rightarrow}$.

A~\emph{direct order} $\prec^\text{seq}_{\rightarrow}$ of a~bottleneck sequence $B(S_1, S_2, S_3)$ is a~common (linear) extension of direct orders on $B_1^+$ and $B_2^+$ and reverse orders on $B_2^-$ and $B_3^-$.
Note that on any bottleneck sequence, at~least one direct order exists since the direct and reverse orders constrain disjoint vertex sets. 
In particular we have $S_1 \prec^\text{seq}_{\rightarrow} S_2 \prec^\text{seq}_{\rightarrow} S_3$.
We check that any direct order of $B(S_1, S_2, S_3)$ is indeed $\tau$-balancing.

\begin{lemma}\label{lem:direct-order}
A direct order $\prec^\text{seq}_{\rightarrow}$ of the bottleneck sequence $B(S_1, S_2, S_3)$ is $\tau$-balancing.
\end{lemma}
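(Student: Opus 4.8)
The plan is a straightforward case-by-case check that, in a direct order $\prec^\text{seq}_{\rightarrow}$ of $B(S_1,S_2,S_3)$, every vertex has left weight and right weight at most~$\tau$.

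I would first record the coarse shape of $\prec^\text{seq}_{\rightarrow}$. A direct order of a bottleneck places all its terminals before its spine $a_1 \prec b_1 \prec \cdots \prec a_k \prec b_k$, and a reverse order is the mirror image of a direct one (root $b_k$ first, all terminals last). Hence the orders imposed on $B_1^+, B_2^+$ (direct) and on $B_2^-, B_3^-$ (reverse) are consistent on the three shared vertex sets — the common root $r_1$ of $B_1^+$ and $B_2^-$, the common terminal set $S_2 \cup \{s_2\}$ of $B_2^-$ and $B_2^+$, and the common root $r_2$ of $B_2^+$ and $B_3^-$ — and glue into a single linear order in which the vertices of $B(S_1,S_2,S_3)$ appear as the consecutive blocks
\[ S_1 \cup \{s_1\},\ \text{spine}(B_1^+),\ r_1,\ \text{spine}(B_2^-)^{\mathrm{rev}},\ S_2 \cup \{s_2\},\ \text{spine}(B_2^+),\ r_2,\ \text{spine}(B_3^-)^{\mathrm{rev}},\ S_3 \cup \{s_3\}. \]
The internal order of each block $S_i \cup \{s_i\}$ is left unconstrained by the definition, but it does not affect any weight below, since no two terminals of a bottleneck are adjacent.

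Then I would go through the vertex types, using that the $\tau$-balancing condition is symmetric in left and right (so a reversed bottleneck is handled by swapping the two sides). A terminal of $S_1$ (resp.\ $S_3$) has its attachment vertex as only neighbour, with weight at most $\tau-\gamma-1<\tau$, sitting in a later (resp.\ earlier) block; so one of its weights is $0$ and the other is under~$\tau$. A terminal of $S_2$ has exactly two neighbours, its attachments in $B_2^-$ and in $B_2^+$, one in an earlier and one in a later block, each of weight at most $\tau-\gamma-1$. For a spine vertex of a bottleneck with spine $a_1 b_1 \cdots a_k b_k$: the vertex $a_1$ sees its terminal and $b_1$ on opposite sides, of weights at most $\tau-\gamma-1$ and exactly $\tau$; a vertex $b_i$ with $i<k$ sees $a_i$ (weight $\tau$) and $a_{i+1}$ (weight $\gamma+1\le\tau$) on opposite sides; a vertex $a_i$ with $2\le i\le k$ sees $b_i$ (weight $\tau$) on one side and $b_{i-1}$ together with its terminal (total weight $\le(\gamma+1)+(\tau-\gamma-1)=\tau$) on the other side. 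Each root $r_j$ is the last spine vertex of one bottleneck and the first spine vertex of another, contributing exactly one edge of weight $\tau$ to each of its two sides. Finally, each of $s_1, s_2, s_3$ is a terminal of one or two bottlenecks via attachments of weight $\gamma+1$ and carries one or both of the edges $s_1 s_2, s_2 s_3$ of weight $\lfloor \tfrac{\tau+\gamma}{2}\rfloor + 1$; reading off the block structure, each of its two weights is either $0$ or $(\gamma+1)+\lfloor \tfrac{\tau+\gamma}{2}\rfloor + 1 \le \tfrac{\tau}{2} + \tfrac{3\gamma}{2} + 2 \le \tau$, where the last inequality is exactly $3\gamma+4\le\tau$ and holds by~\cref{def:gamma-lambda}. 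As every vertex satisfies both bounds, $\prec^\text{seq}_{\rightarrow}$ is $\tau$-balancing.

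I do not expect a genuine obstacle: the argument is elementary bookkeeping. The only points needing care are the vertices whose incident edges come from more than one piece of the construction — the shared roots $r_1, r_2$ and the hub terminals $s_1, s_2, s_3$ — and checking that the two inequalities actually used, $\gamma+1\le\tau$ (making the weight-$(\gamma+1)$ spine edges harmless) and $3\gamma+4\le\tau$ (for the $s_i$), are indeed guaranteed by~\cref{def:gamma-lambda}.
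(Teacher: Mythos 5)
Your proof is correct and follows essentially the same route as the paper's: a direct case-by-case verification of the left and right weights for terminals, the hub vertices $s_i$, and spine vertices (including the shared roots), using the bound $(\gamma+1)+\lfloor\frac{\tau+\gamma}{2}\rfloor+1\le\tau$ from $3\gamma+4<\tau$ exactly as in the paper. Your explicit description of the forced block structure of a direct order and the separate treatment of the roots $r_1,r_2$ are just slightly more detailed renderings of steps the paper handles implicitly.
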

\begin{proof}
Again we use the notation of~\cref{def:bottleneck-seq}.
For each $i \in [3]$, and any vertex $v \in S_i$, $v$~has at most two neighbors: a~vertex $t^- \in V(B_i^-)$ and a~vertex $t^+ \in V(B_i^+)$.
By construction, $t^- \prec^\text{seq}_{\rightarrow} v \prec^\text{seq}_{\rightarrow} t^+$ and both $\weight(t^-v)$ and $\weight(vt^+)$ are at most $\tau - \gamma - 1 \leqslant \tau$. 
For each $i\in [3]$, the vertex $s_i$ has at most four neighbors: $s_{i-1}$, $s_{i+1}$, a vertex $t^- \in V(B_i^-)$, and a vertex $t^+ \in V(B_i^+)$.
By construction, the left weight of $s_i$ is at most \[\weight(s_{i-1} s_i) + \weight(t^-s_i) = \left(\left\lfloor \frac{\tau + \gamma}{2} \right\rfloor + 1 \right) + (\gamma + 1) = \frac{\tau}{2} + \frac{3\gamma}{2} + 2 \le \tau.\]
The last inequality holds since $3\gamma \le \tau - 4$.
The right weight of $s_i$ can be symmetrically upper bounded by~$\tau$.

It remains to check the degree property for the vertices in bottleneck spines.
For each $i \in \{1,2\}$, let $a_1b_1 \ldots a_kb_k$ be the spine $P(B_i^+)$.
For any $j \in [k]$, vertex $a_j$ has at~most three neighbors: $b_{j-1}$ (if it exists), $b_j$, and some leaf $\ell \in S_i \cup \{s_i\}$. 
By construction, $\{\ell, b_{j-1}\} \prec^\text{seq}_{\rightarrow} a_j \prec^\text{seq}_{\rightarrow} b_j$.
Hence $a_j$ has left weight at most $(\tau - \gamma - 1) + (\gamma + 1) = \tau$, and right weight~$\tau$.
Vertex $b_j$ is incident to at most two edges each of weight at~most $\tau$ (even $b_k$).
By construction, one neighbor of $b_j$ is smaller and its other neighbor is larger.
Hence its left and right weights are both upper bounded by~$\tau$.

The case of vertices of $B_i^-$ with $i \in \{2,3\}$ is handled symmetrically.
\end{proof}

\subsection{Encoding \textsc{NAE 3-Sat} in \ldb}

We now describe the reduction from \textsc{Nae 3-Sat} to \ldb.
We recall that a~not-all-equal 3-clause is satisfied if it has at~least one satisfied literal and at~least one unsatisfied literal.
The \textsc{Nae 3-Sat} remains \NP-hard if each clause is on exactly three distinct \emph{positive} literals, and every variable appears exactly four times positively (and zero times negatively)~\cite{Darmann20}.
Let $\varphi$ be any such $n$-variable \textsc{Nae 3-Sat} instance.
As we will only deal with not-all-equal 3-clauses, we say that $\varphi$ is \emph{satisfiable} whenever it admits a~truth assignment that, in each clause of~$\varphi$, sets a~(positive) literal to true and another (positive) literal to false. 
We will build an edge-weighted graph $H := H(\varphi)$ as follows.

%\begin{compactenum}
%	\item \label{it:gadgets} We create a literal vertex for each literal appearing in $\varphi$ and a clause vertex for each clause of $\varphi$.
%	\item \label{it:incidence} We connect those vertices according to the incidence graph of $\varphi$.
%	\item \label{it:bottleneck} We add a bottleneck sequence.
%	\item \label{it:saturation} We make the outdegree of all but two vertices of the graph at least $\tau + \gamma + 1$.
%\end{compactenum}

\medskip

\textbf{Variables, clauses, and variable-clause incidence.}
For each variable $x$ of~$\varphi$, we add a~vertex $v_x$ (to $H(\varphi)$), the \emph{vertex of $x$}.
For each clause $c$ of~$\varphi$ we add a vertex $v_c$, the \emph{vertex of $c$}.
For every clause $c$ and every variable $x$ in $c$, we add the edge $v_xv_c$ of~weight~$\lambda$.
We add two sets of vertices $T = \{t_i \ : \ i \in [n]\}$ and $F = \{f_i \ : \ i \in [n]\}$, for \emph{true} and \emph{false}.
For each~$t_i$ (resp. each~$f_i$), we add a vertex $\overline{t_i}$ (resp. a vertex $\overline{f_i}$), and the edge $t_i \overline{t_i}$ (resp. $f_i \overline{f_i}$) of~weight $\tau - \lambda$.
For each $i \in [n]$, let $x_i$ be the $i$-th variable of $\varphi$.
We add a~vertex $\overline{v_{x_i}}$, and the edges $v_{x_i}t_i, v_{x_i}f_i, \overline{v_{x_i}}t_i, \overline{v_{x_i}}f_i$ each of~weight~$\lambda$.

\medskip

\textbf{Bottleneck sequence $\bm{B(T,C,F)}$.}
We then add a~bottleneck sequence $B(T,C,F)$ where $C := \{v_c \ : \ c\text{ is a clause of } \varphi \}$, with weight $\tau - \lambda$ on every attachment incident to $T$ or $F$, and weight $\tau - 2\lambda$ on every attachment incident to~$C$.
(This is allowed since $\gamma + 1 \leqslant \tau - 2\lambda \leqslant \tau - \lambda \leqslant \tau - \gamma - 1$.)
We remind the reader that every attachment of the first terminals of the bottlenecks forming $B(T,C,F)$ has weight $\gamma+1$.
These three first terminals are extra vertices not in $T$, $C$, and~$F$.

\medskip

This could end the construction of $H$, but we want to impose an extra condition, which will later prove useful.
Specifically, we want that all \emph{but two} vertices have weight at~least~$\tau + \gamma + 1$ (both having weight~$\tau$).
Let us call $H'$ the edge-weighted graph built so far. 

\medskip

\textbf{Weight padding.}
For each vertex $v \in V(H')$ of weight less than $\tau + \gamma + 1$, the \emph{missing weight} of $v$ is defined as $\tau + \gamma + 1$ minus the weight of~$v$.
Let $p$ be the sum of missing weights of vertices of~$H'$.
Let $X$ and $Y$ be two sets each comprising~$p$ new vertices.
We add a~bottleneck $B_L$ with terminals the vertices of~$X$, and a~bottleneck $B_R$ with terminals the vertices of~$Y$.
Every attachment to $B_L$ and $B_R$ with an unspecified weight gets weight $\tau - \gamma - 1$.
We add a~perfect matching between $X$ and $Y$ with every edge of~weight $2 \gamma + 2$.
Finally, for each vertex $v \in V(H')$, we link $v$ by edges of weight 1 to $t$ vertices of $X$, where $t$ is the missing weight of $v$.
We do so such that every vertex in $X$ has exactly one neighbor in~$V(H')$.

This completes the construction of $H$; see~\cref{fig:first-red}.
We observe that $H$ is triangle-free.
This fact will significantly simplify some proof in~\cref{sec:balancing-to-mim-width} (although is not in any way crucial).

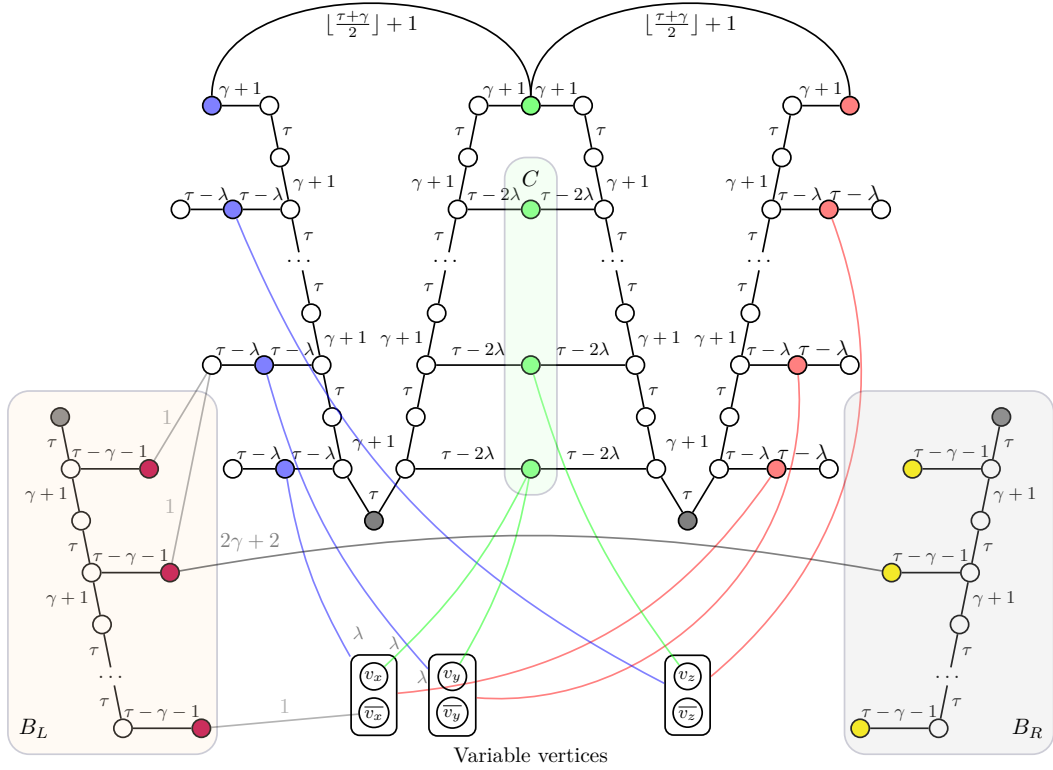
\begin{figure}[h!]
  \centering
  \scalebox{.8}{
  \begin{tikzpicture}[scale=.86, vertex/.style={draw, circle, minimum size=3mm, inner sep=0pt}, every path/.style={thick}]
\def\r{0.2}
\def\Lr{1.5}
\def\b{1}

% B_L
\def\rootBLx{-1}
\def\rootBLy{0}

\begin{scope}[xshift=-0.5cm]

\node[vertex, fill = black!50!] (rootBL) at (\rootBLx, \rootBLy) {};

\foreach \i [count=\t from 1] in {1, 3, 6} {
	\node[vertex] (aBL\t) at (\i * \r + \rootBLx, -\i * \b + \rootBLy) {};
	\node[vertex, fill = purple] (sL\t) at (\i * \r + \Lr + \rootBLx, -\i * \b + \rootBLy) {};
}
\foreach \i [count=\t from 1] in {2, 4} {
	\node[vertex] (bBL\t) at (\i * \r + \rootBLx, -\i * \b + \rootBLy) {};
}
\node (dotsBL) at (5 * \r + \rootBLx, -5 * \b + \rootBLy) {$\dots$};

\end{scope}

% Edges of the spineBL
\draw (rootBL) -- node[left] {\footnotesize $\tau$} (aBL1);
\draw (aBL1) -- node[left] {\footnotesize $\gamma + 1$} (bBL1);
\draw (bBL1) -- node[left] {\footnotesize $\tau$} (aBL2);
\draw (aBL2) -- node[left] {\footnotesize $\gamma + 1$} (bBL2);
\draw (bBL2) -- node[left] {\footnotesize $\tau$} (dotsBL);
\draw (dotsBL) -- node[left] {\footnotesize $\tau$} (aBL3);

% Attachment of leavesBL
\foreach \i in {1, 2, 3}{
  \draw (sL\i) -- node[above = 0, midway] {\footnotesize $\tau-\gamma-1$} (aBL\i);
}

% B_R
\def\rootBRx{16}
\def\rootBRy{0}

\begin{scope}[xshift=0.5cm]

\node[vertex, fill = black!50!] (rootBR) at (\rootBRx, \rootBRy) {};

\foreach \i [count=\t from 1] in {1, 3, 6} {
	\node[vertex] (aBR\t) at (-\i * \r + \rootBRx, -\i * \b + \rootBRy) {};
	\node[vertex, fill = yellow] (sR\t) at (-\i * \r - \Lr + \rootBRx, -\i * \b + \rootBRy) {};
}
\foreach \i [count=\t from 1] in {2, 4} {
	\node[vertex] (bBR\t) at (-\i * \r + \rootBRx, -\i * \b + \rootBRy) {};
}
\node (dotsBR) at (-5 * \r + \rootBRx, -5 * \b + \rootBRy) {$\dots$};

\end{scope}

% Edges of the spineBR
\draw (rootBR) -- node[right] {\footnotesize $\tau$} (aBR1);
\draw (aBR1) -- node[right] {\footnotesize $\gamma + 1$} (bBR1);
\draw (bBR1) -- node[right] {\footnotesize $\tau$} (aBR2);
\draw (aBR2) -- node[right] {\footnotesize $\gamma + 1$} (bBR2);
\draw (bBR2) -- node[right] {\footnotesize $\tau$} (dotsBR);
\draw (dotsBR) -- node[right] {\footnotesize $\tau$} (aBR3);

% Attachment of leavesBR
\foreach \i in {1, 2, 3}{
  \draw (sR\i) -- node[above = 0, midway] {\footnotesize $\tau-\gamma-1$} (aBR\i);
}

% T and CL
\def\rootTx{7.5 - 3}
\def\rootTy{-2}

\node[vertex, fill = black!50!] (rootT) at (\rootTx, \rootTy) {};

\foreach \i [count=\t from 1] in {1, 3, 6, 8} {
	\node[vertex] (aT\t) at (-\i * \r + \rootTx - 2 * \r, \i * \b + \rootTy) {};
	\node[vertex, fill opacity = 0.5, fill = blue] (sT\t) at (-\i * \r - \Lr + \rootTx, \i * \b + \rootTy) {};
	\node[vertex] (aCL\t) at (\i * \r + \rootTx + 2 * \r, \i * \b + \rootTy) {};
	\node[vertex, fill opacity = 0.5, fill = green] (sCL\t) at (7.5, \i * \b + \rootTy) {};
}
\foreach \i [count=\t from 1] in {1, 3, 6} {
  \node[vertex] (osT\t) at (-\i * \r - \Lr + \rootTx - 1, \i * \b + \rootTy) {};
  \draw (osT\t) to node[midway, above] {\footnotesize{$\tau-\lambda$}} (sT\t) ;
}
\foreach \i [count=\t from 1] in {2, 4, 7} {
	\node[vertex] (bT\t) at (-\i * \r + \rootTx - 2 * \r, \i * \b + \rootTy) {};
	\node[vertex] (bCL\t) at (\i * \r + \rootTx + 2 * \r, \i * \b + \rootTy) {};
}
\node (dotsT) at (-5 * \r + \rootTx - 2 * \r, 5 * \b + \rootTy) {$\dots$};
\node (dotsCL) at (5 * \r + \rootTx + 2 * \r, 5 * \b + \rootTy) {$\dots$};

% Edges of the spineT
\draw (rootT) -- node[right] {} (aT1);
\draw (aT1) -- node[right] {} (bT1);
\draw (bT1) -- node[right] {\footnotesize $\tau$} (aT2);
\draw (aT2) -- node[right] {\footnotesize $\gamma + 1$} (bT2);
\draw (bT2) -- node[right] {\footnotesize $\tau$} (dotsT);
\draw (dotsT) -- node[right] {\footnotesize $\tau$} (aT3);
\draw (aT3) -- node[right] {\footnotesize $\gamma + 1$} (bT3);
\draw (bT3) -- node[right] {\footnotesize $\tau$} (aT4);

% Edges of the spineCL
\draw (rootT) -- node[left] {} (aCL1);
\draw (aCL1) -- node[left] {} (bCL1);
\draw (bCL1) -- node[left] {\footnotesize $\tau$} (aCL2);
\draw (aCL2) -- node[left] {\footnotesize $\gamma + 1$} (bCL2);
\draw (bCL2) -- node[left] {\footnotesize $\tau$} (dotsCL);
\draw (dotsCL) -- node[left] {\footnotesize $\tau$} (aCL3);
\draw (aCL3) -- node[left] {\footnotesize $\gamma+1$} (bCL3);
\draw (bCL3) -- node[left] {\footnotesize $\tau$} (aCL4);

% Attachment of leaves T and CL
\draw (sT4) -- node[above = 0, midway] {\footnotesize $\gamma+1$} (aT4);
\draw (sCL4) -- node[above = 0, midway] {\footnotesize $\gamma+1$} (aCL4);
\foreach \i in {1,2,3}{
	\draw (sT\i) -- node[above = 0, midway] {\footnotesize $\tau -\lambda$} (aT\i);
	\draw (sCL\i) -- node[above = 0, midway] {\footnotesize $\tau - 2\lambda$} (aCL\i);
}

% CR and F

\def\rootFx{7.5 + 3}
\def\rootFy{\rootTy}

\node[vertex,fill = black!50!] (rootF) at (\rootFx, \rootFy) {};

\foreach \i [count=\t from 1] in {1, 3, 6, 8} {
	\node[vertex] (aF\t) at (\i * \r + \rootFx + 2*\r, \i * \b + \rootFy) {};
	\node[vertex, fill opacity = 0.5, fill = red] (sF\t) at (\i * \r + \Lr + \rootFx, \i * \b + \rootFy) {};
	\node[vertex] (aCR\t) at (-\i * \r + \rootFx - 2*\r, \i * \b + \rootFy) {};
}
\foreach \i [count=\t from 1] in {1, 3, 6} {
  \node[vertex] (osF\t) at (\i * \r + \Lr + \rootFx + 1, \i * \b + \rootFy) {};
  \draw (sF\t) to node[midway, above] {$\tau-\lambda$} (osF\t) ;
}
\foreach \i [count=\t from 1] in {2, 4, 7} {
	\node[vertex] (bF\t) at (\i * \r + \rootFx + 2*\r, \i * \b + \rootFy) {};
	\node[vertex] (bCR\t) at (-\i * \r + \rootFx- 2*\r, \i * \b + \rootFy) {};
}
\node (dotsF) at (5 * \r + \rootFx+ 2*\r, 5 * \b + \rootFy) {$\dots$};
\node (dotsCR) at (-5 * \r + \rootFx - 2*\r, 5 * \b + \rootFy) {$\dots$};

% Edges of the spineF
\draw (rootF) -- node[left] {} (aF1);
\draw (aF1) -- node[left] {} (bF1);
\draw (bF1) -- node[left] {\footnotesize $\tau$} (aF2);
\draw (aF2) -- node[left] {\footnotesize $\gamma + 1$} (bF2);
\draw (bF2) -- node[left] {\footnotesize $\tau$} (dotsF);
\draw (dotsF) -- node[left] {\footnotesize $\tau$} (aF3);
\draw (aF3) -- node[left] {\footnotesize $\gamma + 1$} (bF3);
\draw (bF3) -- node[left] {\footnotesize $\tau$} (aF4);

% Edges of the spineCL
\draw (rootF) -- node[right] {} (aCR1);
\draw (aCR1) -- node[right] {} (bCR1);
\draw (bCR1) -- node[right] {\footnotesize $\tau$} (aCR2);
\draw (aCR2) -- node[right] {\footnotesize $\gamma + 1$} (bCR2);
\draw (bCR2) -- node[right] {\footnotesize $\tau$} (dotsCR);
\draw (dotsCR) -- node[right] {\footnotesize $\tau$} (aCR3);
\draw (aCR3) -- node[right] {\footnotesize $\gamma + 1$} (bCR3);
\draw (bCR3) -- node[right] {\footnotesize $\tau$} (aCR4);

\foreach \i [count = \l] in {4.5, 10.5} {
	\foreach \j [count = \t] in {1} {
		\node (tau\l\t) at (\i, \rootFy + \j * 0.5 * \b) {\footnotesize $\tau$};
	}
	\foreach \j [count = \t] in {3} {
		\node (gamma\l\t) at (\i, \rootFy + \j * 0.5 * \b) {\footnotesize $\gamma + 1$};
	}	
}

% Attachment of leaves F and CL
\draw (sF4) -- node[above = 0, midway] {\footnotesize $\gamma+1$} (aF4);
\draw (sCL4) -- node[above = 0, midway] {\footnotesize $\gamma+1$} (aCR4);
\foreach \i in {1,2,3}{
  \draw (sF\i) -- node[above = 0, midway] {\footnotesize $\tau -\lambda$} (aF\i);
  \draw (sCL\i) -- node[above = 0, midway] {\footnotesize $\tau - 2\lambda$} (aCR\i);
}

% special vertices
\draw (sF4) to [out=90, in=90] node[below= 0, pos=0.5] {$\lfloor\frac{\tau + \gamma}{2}\rfloor + 1$} (sCL4);
\draw (sT4) to [out=90, in=90] node[below = 0, pos=0.5] {$\lfloor\frac{\tau + \gamma}{2}\rfloor + 1$} (sCL4);

% variable vertices

\foreach \i/\name/\bendred/\bendblue/\value [count = \t] in {-3/v_x/25/10/\lambda, -1.5/v_y/50/13/, 3/v_z/35/20/}{
	\node[vertex, inner sep=1pt] (var\t) at (7.5 + \i, -5) {\footnotesize $\name$};
	\node[vertex, inner sep=1pt] (varbar\t) at (7.5 + \i, -5.7) {\footnotesize $\overline{\name}$};
	\node[draw, rounded corners, fit = (var\t) (varbar\t)] (rectvar\t) {};
	\draw[opacity = 0.5, red] (rectvar\t) to [bend right = \bendred] node[above = 0, pos=0.05] {\color{black} \footnotesize $\value$} (sF\t);
	\draw[opacity = 0.5, blue] (rectvar\t) to [bend left = \bendblue] node[above right= 0, pos=0.05] {\color{black} \footnotesize $\value$} (sT\t);
}
\draw[opacity = 0.5, green] (var1) to [bend right = 10] node[above = 0, pos=0.05] {\color{black} \footnotesize $\lambda$} (sCL1);
\draw[opacity = 0.5, green] (var2) to [bend right = 10] node[above = 0, pos=0.2] {} (sCL1);
\draw[opacity = 0.5, green] (var3) to [bend left = 10] node[above = 0, pos=0.1] {} (sCL2);

\begin{scope}[xshift=-0.5cm]
\draw[blue!20!black,fill=orange!20, opacity = 0.2, rounded corners=10,thick]
     (\rootBLx - 1, \rootBLy + 0.5) rectangle (2,-6.5);
\node[opacity = 1] at (\rootBLx - 0.5, \rootBLy - 6) {\large \color{black} $B_L$};
\end{scope}

\begin{scope}[xshift=0.5cm]
\draw[blue!20!black,fill=black!20, opacity = 0.2, rounded corners=10,thick]
     (\rootBRx + 1, \rootBRy + 0.5) rectangle (\rootBRx - 3, -6.5);
\node[opacity = 1] at (\rootBRx + 0.5, \rootBRy - 6) {\large \color{black} $B_R$};
\end{scope}

\draw[blue!20!black,fill=green!20, opacity = 0.2, rounded corners=10,thick]
     (7, -1.5) rectangle (8, 5);
\node[opacity = 1] at (7.5, 4.6) {\large \color{black} $C$};
\node at (7.5, -6.5) {Variable vertices};
\draw[opacity = 0.3] (sL1) to node[left, pos=0.5] {1} (osT2);
\draw[opacity = 0.3] (sL2) to node[left, pos=0.3] {1} (osT2);
\draw[opacity = 0.3] (sL3) to node[above = 0, pos=0.5] {1} (varbar1);

\draw[opacity = 0.5] (sL2) to [bend left = 10] node[above = 0, pos=0.09] {$2 \gamma + 2$} (sR2);

\end{tikzpicture}
}
\caption{Illustration of $(H, \omega)$.
  Centered at the top is the bottleneck sequence $B(T, C, F)$.
  The vertices of $X$ are in purple (left), and the vertices of $Y$ are in yellow (right).
  The edges incident to the variable vertices that are drawn in blue, green, red all have weight $\lambda$.
  Not to overburden the figure, we have only drew \emph{some} edges of the construction.
  Only one edge of the matching between $X$ and $Y$ is depicted, and the paddings of $\overline{v_x}$ and of $\overline{t_2}$ are (partially) represented (weight-1 edges toward $X$).
  The clause corresponding to the bottommost vertex of $C$ contains $x, y$ and some other variable (not shown), while that of the second bottommost vertex of $C$ contains $z$ (and two other variables).
  The roots of bottlenecks are in gray.
  The leftmost and rightmost gray vertices are the only two vertices of weight less than $\tau+\gamma+1$ (namely $\tau$).
 }
\label{fig:first-red}
\end{figure}

\medskip

We check that the weight padding works as intended.

\begin{lemma}\label{lem:saturation}
Every vertex of $H$ has weight at~least $\tau + \gamma + 1$, except two vertices.
\end{lemma}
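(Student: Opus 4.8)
I would walk through the construction of $H$ one padding step at a time and simply read off each vertex's weight. The vertices of $H$ fall into two groups: those already present in $H'$, and those added by the weight-padding step, namely the spine vertices of the two bottlenecks $B_L,B_R$ together with their terminal sets $X$ and $Y$. I will show that the first group contributes no exception, and that within the second group the only vertices of weight below $\tau+\gamma+1$ are the roots of $B_L$ and $B_R$, each of weight exactly~$\tau$.

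For a vertex $v \in V(H')$ the argument is pure bookkeeping. The only edges of $H$ incident to $v$ that are not already in $H'$ are the weight-$1$ edges toward $X$, and there are exactly as many of them as the missing weight of $v$ (which is $0$ if $v$ already has weight at least $\tau+\gamma+1$ in $H'$). Hence the weight of $v$ in $H$ is at least $\tau+\gamma+1$, so $v$ is not an exception.

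For the new vertices I would just list incident edges, using the notation $a_1b_1\dots a_kb_k$ for the spine of $B_L$ (the case of $B_R$ being identical) and recalling that $\weight(a_ib_i)=\tau$, $\weight(b_ia_{i+1})=\gamma+1$, and that every attachment of $B_L$ and $B_R$, being unspecified, has weight $\tau-\gamma-1$. Then a spine vertex $a_i$ has weight $2\tau$ if $i\ge 2$ and $2\tau-\gamma-1$ if $i=1$, both at least $\tau+\gamma+1$ since $\tau\ge 2\gamma+2$ (a consequence of $3\gamma+4<\tau$ in \cref{def:gamma-lambda}); a spine vertex $b_i$ with $i<k$ has weight exactly $\tau+\gamma+1$; and the root $b_k$, being a leaf of the caterpillar, has $a_k$ as its only neighbor and hence weight exactly~$\tau$ — an exception. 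A terminal $x\in X$ is incident to its attachment (weight $\tau-\gamma-1$), to one edge of the perfect matching between $X$ and $Y$ (weight $2\gamma+2$), and to exactly one weight-$1$ edge toward $V(H')$ (each vertex of $X$ having, by construction, exactly one neighbor in $V(H')$), so its weight is $\tau+\gamma+2$. A terminal $y\in Y$ is incident only to its attachment and to one matching edge, so its weight is $(\tau-\gamma-1)+(2\gamma+2)=\tau+\gamma+1$. Thus the only vertices of weight below $\tau+\gamma+1$ are the roots of $B_L$ and of $B_R$, each of weight $\tau$.

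Finally, to guarantee that there are \emph{exactly} two exceptions rather than possibly fewer, I would check that $B_L$ and $B_R$ are nonempty, i.e., that the total missing weight $p$ over $V(H')$ is at least $1$: the first terminal $s_1$ of the bottleneck $B_1^+$ of $B(T,C,F)$ has exactly two incident edges in $H'$, its attachment (weight $\gamma+1$) and the edge $s_1s_2$ (weight $\lfloor\tfrac{\tau+\gamma}{2}\rfloor+1$), so its weight in $H'$ is below $\tau+\gamma+1$ as soon as $\gamma+2<\tau$, which again follows from \cref{def:gamma-lambda}; hence $s_1$ contributes positively to~$p$. I do not foresee a genuine difficulty here — the lemma is essentially a consistency check on the construction — but the point to be careful about is the completeness of the accounting over $V(H')$: that the weight-$1$ edges are the \emph{only} new edges touching $H'$, and that the first-terminal convention of \cref{def:bottleneck-seq} (which forces certain attachments to weight $\gamma+1$) does not apply to the standalone bottlenecks $B_L$ and $B_R$, whose attachments are therefore all of weight $\tau-\gamma-1$.
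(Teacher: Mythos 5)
Your proof is correct and follows essentially the same route as the paper's: vertices of $H'$ are padded up to at least $\tau+\gamma+1$, the spine vertices of $B_L,B_R$ other than the roots and the vertices of $X\cup Y$ all reach the threshold (the latter via the attachment of weight $\tau-\gamma-1$ plus the matching edge of weight $2\gamma+2$), and only the two roots, of weight $\tau$, are exceptions. Your extra checks (exact weights, and the non-emptiness of $X$ guaranteeing exactly two exceptional vertices) are fine but not needed beyond what the paper does.
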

\begin{proof}
  By construction, all the vertices with weight less than $\tau + \gamma + 1$ in $H'$ have weight exactly $\tau + \gamma + 1$ in~$H$, while the vertices with weight at~least $\tau + \gamma + 1$ in $H'$ have kept the same weight in~$H$.
  We shall just check the property for vertices in $V(B_L) \cup V(B_R)$.

  Note that every vertex of the spines $P(B_R), P(B_L)$ except the two roots have weight at least $\tau + \gamma + 1$.
  In particular, the last vertices of $P(B_R), P(B_L)$ have weight $2\tau -\gamma - 1$ and this is bigger than $\tau + 2\gamma +3$ since $3\gamma +4 < \tau$ from \Cref{def:gamma-lambda}.
  Furthermore, the vertices in $X$ and $Y$ have an attachment of weight $\tau - \gamma - 1$ and are incident to an edge of the matching between $X$ and $Y$ of weight $2\gamma + 2$.
  Hence any vertex in $X \cup Y$ has weight at least $\tau - \gamma - 1 + 2\gamma + 2 = \tau + \gamma + 1$.

 In conclusion, every vertex of $H$ has weight at~least $\tau + \gamma + 1$, but the roots of~$B_L$ and~$B_R$, which have weight $\tau$.
\end{proof}

\subsection{Preparatory lemmas}

We will make use of the following two lemmas.

\begin{lemma}\label{lem:clause-surrounding}
For any $(\tau + \gamma)$-balancing order $\prec$ of $H(\varphi)$, for every clause $c = x \lor y \lor z$ of~$\varphi$, $\{v_x, v_y, v_z\}$ surrounds~$v_c$.
\end{lemma}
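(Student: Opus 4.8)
The plan is to pin down the neighbourhood of the clause vertex $v_c$ and then run a short degree-counting argument. First I would record the relevant edges at $v_c$: it is joined to the three variable vertices $v_x,v_y,v_z$ by edges of weight $\lambda$, and, since $v_c$ lies in $C=S_2$, it is a terminal of both bottlenecks $B_2^+$ and $B_2^-$ of the bottleneck sequence $B(T,C,F)$, hence attached to one spine vertex $a^+\in P(B_2^+)$ and one spine vertex $a^-\in P(B_2^-)$, each attachment having weight $\tau-2\lambda$. A small thing to verify here is that no weight-$1$ padding edge is incident to $v_c$: its weight in $H'$ is $3\lambda+2(\tau-2\lambda)=2\tau-\lambda$, which is at~least $\tau+\gamma+1$ by \cref{def:gamma-lambda} (already $6\lambda\le\tau$ and $\gamma<\lambda$ suffice), so no padding is added; in any case, extra incident edges would only strengthen the argument below.

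The key step is to apply \cref{lem:P3-order} to the path $a^+v_ca^-$. The vertices $a^+$ and $a^-$ lie in the disjoint spines of $B_2^+$ and $B_2^-$, so they are non-adjacent and $a^+v_ca^-$ is a genuine $P_3$ of $H$ (consistent with $H$ being triangle-free), with $\weight(a^+v_c)+\weight(v_ca^-)=2\tau-4\lambda$. Using the inequalities of \cref{def:gamma-lambda}, concretely $6\lambda\le\tau$ together with $\gamma<\lambda$, one gets $2\tau-4\lambda>\tau+\gamma$, so \cref{lem:P3-order} with threshold $t=\tau+\gamma$ yields that $\{a^+,a^-\}$ surrounds $v_c$; that is, exactly one of $a^+,a^-$ precedes $v_c$ in $\prec$ and the other follows~it.

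To conclude, I would argue by contradiction. Suppose $\{v_x,v_y,v_z\}$ does not surround $v_c$. The three variables of a clause are distinct, so $v_x,v_y,v_z$ are three distinct vertices, and they all lie on the same side of $v_c$, say all after it (the other case is symmetric, bounding the left weight instead). By the previous paragraph one of $a^+,a^-$ also lies after $v_c$, so the right weight of $v_c$ is at~least $\weight(v_xv_c)+\weight(v_yv_c)+\weight(v_zv_c)+(\tau-2\lambda)=3\lambda+(\tau-2\lambda)=\tau+\lambda$, which exceeds $\tau+\gamma$ since $\gamma<\lambda$ — contradicting that $\prec$ is $(\tau+\gamma)$-balancing. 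I do not expect a real obstacle here; the only care needed is in the bookkeeping, namely using the correct attachment weight $\tau-2\lambda$ on the $C$-side (rather than $\tau-\lambda$), making sure the four edges counted are pairwise distinct, and checking that the constant inequalities invoked are among those guaranteed by \cref{def:gamma-lambda}.
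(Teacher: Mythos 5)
Your proof is correct and follows essentially the same route as the paper: establish that $v_c$ is surrounded by the two vertices it is attached to in $B(T,C,F)$, then derive a contradiction from the one-sided weight bound $(\tau-2\lambda)+3\lambda=\tau+\lambda>\tau+\gamma$. The only difference is that you justify the surrounding step locally via \cref{lem:P3-order} applied to $a^+v_ca^-$ (using $2(\tau-2\lambda)>\tau+\gamma$, which indeed follows from $6\lambda\le\tau$ and $\gamma<\lambda$), whereas the paper invokes \cref{lem:Forcing-the-order,lem:bottle-order}; both are valid, and your version is slightly more self-contained.
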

\begin{proof}
Vertex $v_c$ has two attachments of weight $\tau - 2\lambda$.
\Cref{lem:Forcing-the-order,lem:bottle-order} imply that $v_c$ is surrounded by the two vertices it is attached to in~$B(T,C,F)$.
Assume for the sake of contradiction that $v_x$, $v_y$, and $v_z$ are all smaller (resp. larger) than $v_c$.
Then the left weight (resp. right weight) of $v_c$ is at least
$$
(\tau - 2\lambda) + \weight(v_xv_c) + \weight(v_yv_c)+ \weight(v_zv_c)
= \tau + \lambda \geqslant \tau + \gamma + 1;
$$
a~contradiction. 
\end{proof}

\begin{lemma}\label{lem:well-defined-order}
For any $(\tau+\gamma)$-balancing order $\prec$ of $H(\varphi)$, for any variable $x$ of $\varphi$, we either have $v_x \prec C$ or $C \prec v_x$. 
\end{lemma}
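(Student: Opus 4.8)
The plan is to exclude the only alternative to the claimed conclusion, namely that the set $C$ \emph{surrounds} $v_x$. Fix a variable $x = x_i$ of $\varphi$ and a $(\tau+\gamma)$-balancing order $\prec$ of $H := H(\varphi)$. By~\cref{lem:Forcing-the-order} applied to the bottleneck sequence $B(T,C,F)$, either $T \prec C \prec F$ or $F \prec C \prec T$; since the rest of the argument only uses that $t_i$ and $f_i$ lie on opposite sides of $C$ together with the (symmetric) local structure of $H$ around $t_i$ and $f_i$, I would assume without loss of generality that $t_i \prec C \prec f_i$, so in particular $t_i \prec f_i$.

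The key step is a local weight-budget computation at $t_i$, to be run identically at $f_i$. Since the weight of $t_i$ in $H'$ equals $2(\tau-\lambda) + 2\lambda = 2\tau \geq \tau + \gamma + 1$, the vertex $t_i$ receives no weight padding, so in $H$ it has exactly four neighbours: $\overline{t_i}$ and the unique neighbour $\alpha$ of $t_i$ in $B(T,C,F)$, both joined to $t_i$ by an edge of weight $\tau-\lambda$, and $v_{x_i}$ and $\overline{v_{x_i}}$, both joined to $t_i$ by an edge of weight $\lambda$. As $2(\tau-\lambda) > \tau+\gamma$ by~\cref{def:gamma-lambda}, \cref{lem:P3-order} applied to the $P_3$ $\overline{t_i}\,t_i\,\alpha$ shows that $\{\overline{t_i}, \alpha\}$ surrounds $t_i$; hence one of these two weight-$(\tau-\lambda)$ edges lies on the left of $t_i$ and the other on its right, so the left weight and the right weight of $t_i$ are each at least $\tau-\lambda$. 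Since $\prec$ is $(\tau+\gamma)$-balancing, at most $(\tau+\gamma)-(\tau-\lambda) = \gamma+\lambda$ of weight remains available on each side of $t_i$. As $\gamma+\lambda < 2\lambda$ by~\cref{def:gamma-lambda} and both edges $t_iv_{x_i}$, $t_i\overline{v_{x_i}}$ have weight $\lambda$, at most one of $v_{x_i}, \overline{v_{x_i}}$ is $\prec t_i$ and, symmetrically, at most one is $\succ t_i$; as there are two of them, \emph{exactly one} of $v_{x_i}, \overline{v_{x_i}}$ is $\prec t_i$ and the other is $\succ t_i$. The identical argument at $f_i$ shows that \emph{exactly one} of $v_{x_i}, \overline{v_{x_i}}$ is $\prec f_i$ and the other is $\succ f_i$.

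To conclude, I would suppose for contradiction that $C$ surrounds $v_{x_i}$, i.e.\ that there are $c, c' \in C$ with $c \prec v_{x_i} \prec c'$. Since $t_i \prec C \prec f_i$, this gives $t_i \prec c \prec v_{x_i} \prec c' \prec f_i$, so $t_i \prec v_{x_i} \prec f_i$. By the dichotomy at $t_i$, $t_i \prec v_{x_i}$ forces $\overline{v_{x_i}} \prec t_i$; by the dichotomy at $f_i$, $v_{x_i} \prec f_i$ forces $f_i \prec \overline{v_{x_i}}$. Together these give $f_i \prec \overline{v_{x_i}} \prec t_i$, contradicting $t_i \prec f_i$. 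Hence $C$ does not surround $v_x$; that is, $v_x \prec C$ or $C \prec v_x$.

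I expect the only real difficulty — and the reason the construction carries the twin vertex $\overline{v_{x_i}}$ and the weight padding — to be that every edge incident to $v_{x_i}$ is light (weight $\lambda$ or $1$), so no $P_3$ through $v_{x_i}$ has total weight exceeding $\tau+\gamma$ and \cref{lem:P3-order} is silent about $v_{x_i}$ itself. All the rigidity has to be imported from the heavy, unpadded vertices $t_i$ and $f_i$, whose two incident weight-$(\tau-\lambda)$ edges force a ``surrounding'' configuration with a residual budget ($\gamma+\lambda < 2\lambda$) too small for $v_{x_i}$ and $\overline{v_{x_i}}$ to lie on the same side of them — which is precisely what rules out $v_{x_i}$ being squeezed between two clause vertices.
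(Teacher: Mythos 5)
Your proof is correct and follows essentially the same route as the paper: apply \cref{lem:Forcing-the-order} to fix $t_i \prec C \prec f_i$, use \cref{lem:P3-order} on the two weight-$(\tau-\lambda)$ edges at $t_i$ (and $f_i$) plus the budget inequality $\tau-\lambda+2\lambda > \tau+\gamma$ to show that $\{v_{x_i},\overline{v_{x_i}}\}$ surrounds both $t_i$ and $f_i$, and then derive that $v_{x_i}$ cannot sit between $t_i$ and $f_i$, hence not inside $C$. The residual-budget phrasing ($\gamma+\lambda<2\lambda$) and the explicit check that $t_i$ gets no padding are just slightly more detailed renderings of the paper's argument.
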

\begin{proof}
By \Cref{lem:Forcing-the-order}, either $T \prec C \prec F$ or $F \prec C \prec T$ holds.
Up to reversing the order, we can assume without loss of generality that $T \prec C \prec F$.

For each $i \in [n]$, vertex $t_i$ is incident to two edges of weight $\tau - \lambda$: $t_i \overline{t_i}$ and its attachment in~$B(T,C,F)$.
Since $2\tau - 2 \lambda > \tau + \gamma$, \Cref{lem:P3-order} ensures that $t_i$ is surrounded by the other endpoints of both edges.
We thus claim that $\{v_{x_i}, \overline{v_{x_i}}\}$ surrounds~$t_i$, where $x_i$ is the $i$-th variable of~$\varphi$.
Indeed if $\{v_{x_i}, \overline{v_{x_i}}\} \prec t_i$ (resp.~$t_i \prec \{v_{x_i}, \overline{v_{x_i}}\}$), the left weight (resp. right weight) of $t_i$ is at least $\tau - \lambda + 2 \lambda = \tau + \lambda > \tau + \gamma$; a~contradiction.
Similarly $\{v_{x_i}, \overline{v_{x_i}}\}$ surrounds~$f_i$.

As $t_i \prec C \prec f_i$, for $\{v_{x_i}, \overline{v_{x_i}}\}$ to both surround $t_i$ and $f_i$, it holds that $v_{x_i}$ is smaller than~$t_i$ or is larger than~$f_i$.
Therefore we indeed have $v_{x_i} \prec C$ or~$C \prec v_{x_i}$.
\end{proof}

\subsection{Correctness of the reduction}

We can now show that our reduction performs as announced.

\begin{lemma}\phantomsection\label{lem:bo-to-sat}
If $H(\varphi)$ admits a $(\tau + \gamma)$-balancing order, then $\varphi$ is satisfiable.
\end{lemma}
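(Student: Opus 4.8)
\textbf{Proof plan for Lemma~\ref{lem:bo-to-sat}.}
The plan is to extract a not-all-equal truth assignment from a $(\tau+\gamma)$-balancing order $\prec$ of $H(\varphi)$ by reading off, for each variable $x$, on which side of $C$ the vertex $v_x$ lies. By \Cref{lem:well-defined-order} we know that for every variable $x$ either $v_x \prec C$ or $C \prec v_x$; moreover, \Cref{lem:Forcing-the-order} tells us that (up to reversing $\prec$, which only swaps the roles of ``true'' and ``false'' and hence preserves the not-all-equal property) we may assume $T \prec C \prec F$. I would then define the assignment $\pi$ by setting $x$ to \emph{true} if $v_x \prec C$ and to \emph{false} if $C \prec v_x$.

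Next I would verify that $\pi$ satisfies every clause $c = x \lor y \lor z$ in the not-all-equal sense, i.e.\ that the three literals are not all set the same way. Suppose for contradiction that $v_x, v_y, v_z$ all lie on the same side of $C$, say all $\prec C$ (the case $\succ C$ being symmetric). Since $v_c \in C$, this forces $v_x, v_y, v_z \prec v_c$, so all three edges $v_xv_c, v_yv_c, v_zv_c$ of weight $\lambda$ contribute to the \emph{left} weight of $v_c$. Together with one of the two attachments of $v_c$ into $B(T,C,F)$ of weight $\tau - 2\lambda$ — which, as in the proof of \Cref{lem:clause-surrounding}, necessarily lies below $v_c$ (recall $v_c$ is surrounded by its two attachments by \Cref{lem:Forcing-the-order} and \Cref{lem:bottle-order}, and one of them is $\prec v_c$) — the left weight of $v_c$ is at least $(\tau - 2\lambda) + 3\lambda = \tau + \lambda \geq \tau + \gamma + 1$, contradicting that $\prec$ is $(\tau+\gamma)$-balancing. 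Hence not all of $v_x, v_y, v_z$ are on the same side, so $c$ has a literal set to true and a literal set to false, as required.

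The slightly delicate point I would need to be careful about is the direction of the clause attachment: I must argue that the attachment of $v_c$ contributing on the ``crowded'' side is the one pointing toward that side, so that its weight genuinely adds to the relevant one-sided weight of $v_c$. This is exactly the surrounding statement of \Cref{lem:clause-surrounding} combined with the observation that the two attachments of $v_c$ are on opposite sides of $v_c$ (by the bottleneck structure, via \Cref{lem:bottle-order}); whichever side $v_x, v_y, v_z$ fall on, one attachment is there to be added. I do not expect any real obstacle beyond tracking these sides; the weight arithmetic $\tau + \lambda \ge \tau + \gamma + 1$ follows from $\gamma < \lambda$ in \Cref{def:gamma-lambda}. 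Finally I would remark that $v_x \prec C$ or $C \prec v_x$ is well-defined for \emph{every} variable (again \Cref{lem:well-defined-order}), so $\pi$ is a genuine total assignment, completing the proof.
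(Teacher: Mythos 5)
Your proposal is correct and follows essentially the same route as the paper: define the assignment via \Cref{lem:well-defined-order} and conclude using the fact that each clause vertex $v_c$ is surrounded by its three variable vertices. The only difference is cosmetic — you inline the weight-counting argument of \Cref{lem:clause-surrounding} (attachment of weight $\tau-2\lambda$ plus three $\lambda$-edges exceeding $\tau+\gamma$), whereas the paper simply invokes that lemma directly.
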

\begin{proof}
  Let $\prec$ be a~$(\tau + \gamma)$-balancing order of $H(\varphi)$.
  Let the valuation $\mathcal A$ set $x$ to true if and only if~$v_x \prec C$.
  This is well defined by \Cref{lem:well-defined-order}.
  By~\cref{lem:clause-surrounding}, for every clause $c = x \lor y \lor z$, vertex $v_c$ is surrounded by $\{v_x, v_y, v_z\}$.
  Hence $\mathcal A$ sets within $\{x, y, z\}$ at least one variable to true and at least one variable to false, thus satisfies~$c$.
\end{proof}

\begin{lemma}\label{lem:sat-to-bo}
If $\varphi$ is satisfiable, then $H(\varphi)$ admits a~$\tau$-balancing order.
\end{lemma}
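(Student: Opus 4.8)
The plan is to construct an explicit $\tau$-balancing order $\prec$ of $H=H(\varphi)$, built around a~direct order of the bottleneck sequence. Fix a~satisfying assignment $\mathcal A$ of~$\varphi$. By~\cref{lem:direct-order}, a~direct order $\prec^\text{seq}_{\rightarrow}$ of $B(T,C,F)$ is $\tau$-balancing on~$B(T,C,F)$ and orders $T$ before $C$ before $F$, with the spines of $B_1^+$ and $B_2^-$ (and the root~$r_1$) between $T$ and~$C$, and those of $B_2^+$ and $B_3^-$ (and~$r_2$) between $C$ and~$F$. Call a~vertex of~$V(H')$ a~\emph{left vertex} if it is some $v_x$ with $\mathcal A(x)$ true, some $\overline{v_x}$ with $\mathcal A(x)$ false, some~$\overline{t_i}$, or~$s_1$, and a~\emph{right vertex} if it is some $v_x$ with $\mathcal A(x)$ false, some $\overline{v_x}$ with $\mathcal A(x)$ true, some~$\overline{f_i}$, or~$s_3$; these are exactly the vertices of~$V(H')$ with positive missing weight. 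The key design choice is that $X$ (and, symmetrically, $Y$) must not be placed at a~single end of~$\prec$: I split $X=X_L\sqcup X_R$ and $Y=Y_L\sqcup Y_R$, route every weight-$1$ padding edge of a~left (resp.~right) vertex into~$X_L$ (resp.~$X_R$), and choose the $X$--$Y$ matching so that $X_L$ is matched with~$Y_L$ and $X_R$ with~$Y_R$. This is feasible because the missing weights of the left vertices and of the right vertices sum to~$|X|=|Y|$.

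The order $\prec$ is then, from left to right: the spine of~$B_L$ read in reverse, starting at its root, so every terminal of $B_L$ follows the whole spine; $X_L$; $Y_L$; all left vertices except~$s_1$; the whole of $\prec^\text{seq}_{\rightarrow}$, whose prefix is $T\cup\{s_1\}$ and whose suffix is $F\cup\{s_3\}$; all right vertices except~$s_3$; $X_R$; $Y_R$; the spine of~$B_R$ read forward, ending at its root, so every terminal of $B_R$ precedes the whole spine. Which terminal of $B_L$ (resp.~$B_R$) attaches to which spine vertex is immaterial, since all attachment weights inside these two bottlenecks equal~$\tau-\gamma-1$; note that this placement yields $\overline{t_i}\prec t_i$ and $f_i\prec\overline{f_i}$.

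It then remains to check $\Delta_\prec(v)\le\tau$ for every~$v$. A~vertex of $B(T,C,F)$ all of whose incident edges lie in $B(T,C,F)$ (every non-terminal vertex, and~$s_2$) is covered by~\cref{lem:direct-order} directly, since $\prec$ restricts to $\prec^\text{seq}_{\rightarrow}$ on $V(B(T,C,F))$. A~spine or root vertex of $B_L$ or~$B_R$ has, under the chosen orientation, its unique weight-$\tau$ edge alone on one side and a~total of at~most $(\gamma+1)+(\tau-\gamma-1)=\tau$ on the other. A~terminal $x\in X_L$ has its attachment (weight $\tau-\gamma-1$) on one side and its matching edge (weight $2\gamma+2$) together with its unit edge on the other; a~terminal $x\in X_R$ has its matching edge on one side and its attachment together with its unit edge on the other; each relevant sum is at~most $\max(\tau-\gamma,\,2\gamma+3)\le\tau$ using $3\gamma+4<\tau$, and the terminals in $Y$ are handled the same way. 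For~$t_i$ (and $f_i$ symmetrically), its four incident edges $t_i\overline{t_i}$, the attachment to~$B_1^+$, $v_{x_i}t_i$, and $\overline{v_{x_i}}t_i$ split as $(\tau-\lambda)+\lambda=\tau$ on each side, using that exactly one of $v_{x_i},\overline{v_{x_i}}$ is a~left vertex; hence $\Delta_\prec(t_i)=\tau$. For a~clause vertex~$v_c$ with $c=x\lor y\lor z$, its two attachments to~$B(T,C,F)$ (each of weight $\tau-2\lambda$) fall on opposite sides, while its three edges to $v_x,v_y,v_z$ each fall on the side of~$v_c$ dictated by the truth value; since $\mathcal A$ satisfies the not-all-equal clause~$c$, this is a~$2$--$1$ split and $\Delta_\prec(v_c)=\tau$ (an all-equal clause would force weight $\tau+\lambda$, which is exactly where satisfiability is used). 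Finally, every left (resp.~right) vertex has all of its $V(H')$-edges on one side and all of its padding edges on the other, with both sides $\le\tau$: the $V(H')$-side equals $6\lambda$, $2\lambda$, or $\tau-\lambda$ (for $v_x$, $\overline{v_x}$, or $\overline{t_i}/\overline{f_i}$), all $\le\tau$ since $6\lambda\le\tau$, and the padding side equals $(\tau+\gamma+1)$ minus that, which is $\le\tau$ because $2\lambda\ge\lambda\ge\gamma+1$ and $\lambda+\gamma+1\le\tau$ (from $2\lambda+\gamma<\tau$); the vertices $s_1$ and~$s_3$ are handled like the other left/right vertices, their $V(H')$-side bound coming for free from~\cref{lem:direct-order}. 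Every numeric inequality used lies in~\cref{def:gamma-lambda}.

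The step I expect to be the main obstacle is engineering the padding layer. One has to notice that placing $X$ at a~single extremity overloads $\overline{f_i}$ and the ``far'' copy of each pair $\{v_x,\overline{v_x}\}$ through their unit edges, which forces the split $X=X_L\sqcup X_R$; and then the two paddings $B_L$ and~$B_R$ must be given opposite orientations and interleaved with $X_L,Y_L$ on the left and $X_R,Y_R$ on the right, so that every spine vertex, terminal, and matched endpoint balances while still respecting the rigid monotone spine order that a~bottleneck imposes. Once the placement is pinned down, the rest is a~routine audit of which endpoint of each edge precedes the other.
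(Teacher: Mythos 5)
Your construction and the numerical audit are essentially sound, and the overall approach is the paper's: an explicit order built around a direct order of $B(T,C,F)$, with the variable vertices split to the two sides according to $\mathcal A$, the spines of $B_L$ and $B_R$ given opposite orientations at the two extremities, and the checks at $t_i$ (split $(\tau-\lambda)+\lambda$), at clause vertices (the $2$--$1$ split, where satisfiability enters), and at the padded vertices all follow from~\cref{def:gamma-lambda} and~\cref{lem:direct-order} exactly as you say. The one genuine problem is the step you present as a ``design choice'': you \emph{route} the weight-$1$ padding edges of left (resp.\ right) vertices into $X_L$ (resp.\ $X_R$) and \emph{choose} the $X$--$Y$ matching so that $X_L$ is matched to $Y_L$. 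Those edges are part of the construction of $H(\varphi)$, which is fixed (arbitrarily) before the lemma and must be computable from $\varphi$ alone; your routing depends on the satisfying assignment $\mathcal A$ (through the notion of left/right vertex), so it can neither be assumed of the given graph nor be built into the reduction. As written, you prove the statement for an assignment-dependent variant of $H(\varphi)$ rather than for $H(\varphi)$ itself.

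The gap is local and closable without touching the rest of your argument: do not choose the routing, but \emph{define} $X_L$ as the set of vertices of $X$ whose unique neighbour in $V(H')$ is a left vertex ($X_R$ the rest), and define $Y_L$ (resp.\ $Y_R$) as the matched partners of $X_L$ (resp.\ $X_R$). Your audit uses nothing beyond the facts that every $x\in X$ is incident to exactly one attachment of weight $\tau-\gamma-1$, one matching edge of weight $2\gamma+2$ and one unit edge, every $y\in Y$ to exactly one attachment and one matching edge, that all attachments inside $B_L,B_R$ have equal weight (so spine positions are irrelevant), and that the padding neighbours of a left (right) vertex lie in $X_L$ ($X_R$) --- which now holds by definition; in particular your feasibility remark about missing weights summing to $|X|=|Y|$ becomes unnecessary. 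For comparison, the paper avoids the issue differently: it keeps a reverse order of $B_L$ followed by a direct order of $B_R$, observes that every vertex of $X$ retains slack for its single unit edge under any extension, and then, for each deficient vertex $u$, splits $u$'s unit edges between the two sides of $u$ according to $u$'s remaining left/right budget, so no correlation between the padding pattern and $\mathcal A$ is ever needed.
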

\begin{proof}
We first give a~fixed $\tau$-balancing order $\prec_1$ of $H[V(B_L) \cup V(B_R)]$ that does not rely on $\varphi$ being satisfiable.
Then we give a~$\tau$-balancing order $\prec_2$ of $H - (V(B_L) \cup V(B_R))$.
This order is based on a~truth assignment satisfying~$\varphi$.
It will remain to argue that there is a~$\tau$-balancing order extending both $\prec_1$ and $\prec_2$.
This is done by proving that any extension of $\prec_1$ to $H$ keeps the left and right weights of vertices in $V(B_L) \cup V(B_R)$ small enough, and by indicating in which order each vertex of~$X$ should appear relatively to their unique neighbor in~$V(H) \setminus (V(B_L) \cup V(B_R))$.

\medskip

\textbf{Construction of $\bm{\prec_1}$.}
For $\prec_1$ fix any reverse order on the bottleneck $B_L$, followed by any direct order on the bottleneck $B_R$.
Observe that, in $H[V(B_L) \cup V(B_R)]$, every vertex $v$ outside $X \cup Y$ satisfies $\Delta_{\prec_1}(v) \leqslant \tau$ (see the proof of~\cref{lem:direct-order}, or~\cref{fig:bottleneck}).
Consider the vertex $x \in X$.
It has two neighbors: one in $V(B_L)$ and one, say, $y$ in $Y$.
The left weight of $x$ is that of its attachment $\tau - \gamma - 1 \leqslant \tau$, whereas the right weight of $x$ is $\weight(xy) = 2\gamma + 2 \leqslant \tau$.
The situation is symmetric for the vertices of~$Y$.
Hence $\prec_1$ is a~$\tau$-balancing order of~$H[V(B_L) \cup V(B_R)]$.

\medskip

\textbf{Construction of $\bm{\prec_2}$.}
Let $\mathcal A$ be a~variable assignment satisfying $\varphi$.
We build the order~$\prec_2$ on $H - (V(B_L) \cup V(B_R))$ in the following way:
\begin{compactitem}
	\item we first have all vertices $\overline{t_i}$ for $i \in [n]$,
	\item followed by all the vertices $v_x$ (resp.~$\overline{v_x}$) such that $x$ is set to true (resp.~false) by $\mathcal A$,
	\item followed by any fixed direct order of $B(T, C, F)$,
	\item followed by the vertices $v_x$ (resp.~$\overline{v_x}$) such that $x$ is set to false (resp.~true) by $\mathcal A$,
	\item followed by all vertices $\overline{f_i}$ for $i \in [n]$.
\end{compactitem}

\medskip

We verify that $\prec_2$ is a~$\tau$-balancing order in $H - (V(B_L) \cup V(B_R))$.
We let $u$ be a vertex of $H - (V(B_L) \cup V(B_R))$, and prove that $\Delta_{\prec_2}(u) \leqslant \tau$.

\medskip

\textbf{Case $\bm{u}$ of the form $\bm{\overline{t_i}}$ or $\bm{\overline{f_i}}$ for $\bm{i \in [n]}$.}
The vertex $u$ is incident to a~single edge of weight $\tau - \lambda \leqslant \tau$.

\medskip

\textbf{Case $\bm{u}$ of the form $\bm{v_x}$ or $\bm{\overline{v_x}}$ for some variable $\bm{x}$.}
Let $x_i$ be the $i$-th variable of $\varphi$ and $c_1, c_2, c_3, c_4$ be the four clauses of $\varphi$ in which $x_i$ appears.
Then $u = v_{x_i}$ (resp.~$u = \overline{v_{x_i}}$) is incident to six edges (resp. two edges) each of weight $\lambda$ to $t_i, f_i, v_{c_1}, v_{c_2}, v_{c_3}, v_{c_4}$ (resp.~$t_i, f_i$), thus $\Delta_{\prec_2}(u) \leqslant 6 \lambda \leqslant \tau$. 

\medskip

\textbf{Case $\bm{u}$ of the form $\bm{t_i}$ or $\bm{f_i}$ for some $\bm{i \in [n]}$.}
The vertex $t_i$ has exactly four neighbors: the vertex $\overline{t_i}$, the vertex $v$ it is attached to in $B(T,C,F)$, and the two vertices $v_{x_i}$ and $\overline{v_{x_i}}$.
By definition of $\prec_2$, we have $\overline{t_i} \prec_2 t_i \prec_2 v$, and either $v_{x_i} \prec_2 t_i \prec_2 \overline{v_{x_i}}$ or $\overline{v_{x_i}} \prec_2 t_i \prec_2 v_{x_i}$.
Since $\weight(t_i \overline{t_i}) = \weight(t_i v) = \tau - \lambda$, and $\weight(t_i v_{x_i}) = \weight (t_i \overline{v_{x_i}}) = \lambda$, the left weight and right weight of $t_i$ are both equal to $(\tau-\lambda)+\lambda=\tau$.
The situation is symmetric for $f_i$.

\medskip

\textbf{Case $\bm{u}$ of the form $\bm{v_c}$ for a~clause $\bm{c}$.}
Let $c = x \lor y \lor z$.
The vertex $v_c$ is adjacent to exactly five vertices: two vertices, say $s_1$ and $s_2$, it is attached to in the bottleneck sequence $B(T,C,F)$ via an edge of weight $\tau - 2 \lambda$, and $v_x, v_y, v_z$.
We have $s_1 \prec_2 v_c \prec_2 s_2$, and since $\varphi$ is satisfied by $\mathcal A$, there is at most two vertices among $v_x, v_y, v_z$ to the right of $v_c$, and at most two to its left.
Since $\weight (v_cv_x) = \weight (v_cv_y) = \weight (v_cv_z) = \lambda$, the left weight and the right weight of $v_c$ are at most $(\tau - 2\lambda) + 2\cdot \lambda = \tau$.

\medskip

\textbf{Case $\bm{u}$ in one of the spines of $\bm{B(T, C, F)}$.}
Since $u$ has no neighbors outside of the bottleneck sequence, the order $\prec_2$ acts as $\prec_\rightarrow^{\text{seq}}$, and we conclude by~\Cref{lem:direct-order}.

\medskip

\textbf{Extending $\bm{\prec_1}$ and $\bm{\prec_2}$.}
Note that every vertex of $X$ has right weight and left weight at~most $\tau - 1$ in $\prec_1$, and is adjacent to exactly one vertex outside of $V(B_L) \cup V(B_R)$ with an edge of weight~1.
Hence if $\prec$ is any total order extending $\prec_1$ to $V(H)$, every $v \in V(B_L) \cup V(B_R)$ satisfies $\Delta_\prec(v) \leqslant \tau$.
Indeed, vertices in $ (V(B_L) \cup V(B_R)) \setminus X$ have no neighbors outside $V(B_L) \cup V(B_R)$.

Let $u \in V(H) \setminus (V(B_L) \cup V(B_R))$, and let $s$ be its weight towards $H - (V(B_L) \cup V(B_R))$.
If $s \geqslant \tau + \gamma + 1$, then $u$ is not adjacent to $V(B_L) \cup V(B_R)$, so any extension $\prec$ of $\prec_2$ to $H$ keeps $\Delta_\prec(u) \leqslant \tau$.
Otherwise, let $s_L$ and $s_R$ be the left weight and right weight of $u$, respectively (w.r.t.~$\prec_2$).
Note that $s_L + s_R = s$.
The vertex $u$ has exactly $\tau + \gamma + 1 - s$ neighbors in~$X$, $x_1, \ldots, x_{\tau + \gamma + 1 - s}$ via edges of weight~1.

If $s_R > \gamma+1$, we simply set $x_i \prec u$ for every $i \in [\tau + \gamma + 1 - s]$.
The left weight of $u$ in $H$ ordered by~$\prec$ is at~most $s_L+\tau + \gamma + 1 - s = \tau + \gamma + 1 - s_R \leqslant \tau$.
If instead $s_R \leqslant \gamma+1$, we set $x_i \prec u$ for every $i \in [1,\tau - s_L]$, and $u \prec x_i$ for every $i \in [\tau - s_L + 1,\tau + \gamma + 1 - s]$.
This is well defined since $\tau-s_L \leqslant \tau+\gamma+1-s$.
The left weight of $u$ in $H$ ordered by~$\prec$ is at~most $s_L + (\tau - s_L) \leqslant \tau$, and its right weight is
$s_R + (\gamma - s + s_L + 1) =  \gamma + 1 \leqslant \tau$.
\end{proof}

\cref{thm:ldb} is a~direct consequence of~\cref{lem:bo-to-sat,lem:sat-to-bo}.
The reason we ``padded the degree'' in $H(\varphi)$ will become apparent in the next section.
We will observe that when $\varphi$ is unsatisfiable, not only no linear order can ``balance'' the degrees, but no tree can either.

\subsection{From linear orders to trees}\label{sec:lin-ord-to-trees}

As mim-width is defined via branch decompositions, we adapt the balancing order problem to trees.
Consider an edge-weighted graph $(H, \weight)$, and a~tree $T$ such that there exists a~bijective map $f \colon V(H) \rightarrow V(T)$.
Note that $(T, f)$ is \emph{not} a~branch decomposition of $H$ for two reasons: vertices of $H$ are mapped to \emph{all} nodes~of $T$ and not merely its leaves, and $T$ is not necessarily a~ternary tree (nor a~rooted binary tree).

Each edge $e$ of $T$ defines a~cut of $H$, which we denote $(A_e, B_e)$, where $A_e$ is the preimage by~$f$ of one connected component of $T-e$, and $B_e$, of the other component.
We say that $(T, f)$ is a~\emph{$\tau$-balancing tree of $(H, \weight)$} if for any vertex $v \in V(H)$, for any edge $e \in E(T)$ incident to~$f(v)$, the sum of the weights of edges (in $E(H)$) incident to~$v$ in the cut $(A_e, B_e)$ is at~most~$\tau$.

\defparproblem{\tdb}{An edge-weighted graph~$(H, \weight)$ and a~non-negative integer~$\tau$.}{$\tau$}{Does $(H, \weight)$ admit a~$\tau$-balancing tree?}

Note that any graph with a~$\tau$-balancing order also admits a~$\tau$-balancing tree, with $T$ being the path of length $|V(H)|$, and $f$ mapping the vertices of~$H$ along $T$ in the $\tau$-balancing order.

\begin{theorem}\label{thm:tdb}
  Given an edge-weighted graph $(H, \weight)$ promised to satisfy either one of 
  \begin{compactitem}
  \item $(H, \weight)$ admits a~$\tau$-balancing order or 
  \item $(H, \weight)$ does not admit a~$(\tau + \gamma)$-balancing tree,
  \end{compactitem}
  deciding which outcome holds is \NP-hard.
\end{theorem}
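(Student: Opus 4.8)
The plan is to reuse the edge-weighted graph $H := H(\varphi)$ already built in this section from a positive \textsc{4-Occ Nae 3-Sat} instance $\varphi$; the construction is polynomial time, so only the two promise cases must be verified, and one of them is already done. If $\varphi$ is satisfiable, \cref{lem:sat-to-bo} directly furnishes a $\tau$-balancing order of $H$, hence the first outcome. The work is entirely in the converse: if $\varphi$ is unsatisfiable, I want to show $H$ admits no $(\tau+\gamma)$-balancing tree.

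I would argue this by contradiction: suppose $(T,f)$ is a $(\tau+\gamma)$-balancing tree of $H$. The first step is the observation that if a node $\ell$ is a leaf of $T$, then the unique edge of $T$ incident to $\ell$ defines the cut $(\{v\},V(H)\setminus\{v\})$ with $v=f^{-1}(\ell)$; since every edge of $H$ at $v$ crosses this cut, the balancing condition at $v$ forces the weight of $v$ to be at most $\tau+\gamma$. By \cref{lem:saturation}, only two vertices of $H$ satisfy this (the roots of $B_L$ and $B_R$, both of weight $\tau$). Hence $T$ has at most two leaves; as $H$ has far more than two vertices, $T$ is a tree on at least three nodes with at most two leaves, which is an elementary way of saying $T$ is a path (a degree-$\ge 3$ internal node would, by handshake, produce a third leaf).

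The second step reads off a $(\tau+\gamma)$-balancing \emph{order} from this path. Writing $T$ as $n_1n_2\cdots n_N$ and $v_i=f^{-1}(n_i)$, the edge $n_in_{i+1}$ induces the prefix/suffix cut $(\{v_1,\dots,v_i\},\{v_{i+1},\dots,v_N\})$, so the balancing condition at $v_i$ across it is exactly ``right weight of $v_i$ is at most $\tau+\gamma$'' and at $v_{i+1}$ exactly ``left weight of $v_{i+1}$ is at most $\tau+\gamma$''. Since $v_1$ has left weight $0$ and $v_N$ has right weight $0$, the order $v_1\prec\cdots\prec v_N$ is a $(\tau+\gamma)$-balancing order of $H$, so \cref{lem:bo-to-sat} makes $\varphi$ satisfiable, contradicting our assumption. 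Consequently, deciding which promise outcome $H(\varphi)$ satisfies decides the satisfiability of $\varphi$, and \NP-hardness of the promise problem follows from the \NP-hardness of positive \textsc{4-Occ Nae 3-Sat}.

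I do not expect a serious obstacle: the reduction and the hard direction are inherited from \cref{lem:sat-to-bo}, and the genuinely new content — the ``degree padding forces $T$ to be a path'', which is precisely what collapses \tdb into \ldb on these instances — is short. The two points to handle carefully are (i) the leaf computation together with the combinatorial fact that a tree on at least three nodes with at most two leaves is a path, and (ii) the boundary vertices $v_1,v_N$, whose left/right weight vanishes and which therefore trivially respect the threshold.
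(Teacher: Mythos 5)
Your proposal is correct and follows essentially the same route as the paper: the weight padding (\cref{lem:saturation}) means only the two roots of $B_L$ and $B_R$ can be mapped to leaves of a $(\tau+\gamma)$-balancing tree, forcing the tree to be a path, which collapses the tree problem back to the linear-order problem and lets you invoke \cref{lem:sat-to-bo,lem:bo-to-sat}. The paper's proof is just a terser statement of the same argument; your added details (the leaf cut computation, the at-most-two-leaves-implies-path fact, and the explicit path-to-order translation) are exactly what the paper leaves implicit.
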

\begin{proof}
  In the previous reduction, since all the vertices of $H(\varphi)$ but two have degree at~least $\tau + \gamma + 1$, any $(\tau + \gamma)$-balancing tree $(T,f)$ is such that $T$ has at most two leaves, and so $T$ is a~path.
  So in the case when $H(\varphi)$ has no $(\tau + \gamma)$-balancing order, $H(\varphi)$ has no $(\tau + \gamma)$-balancing tree.
\end{proof}

\section{\textsc{Degree Balancing} to \textsc{Linear Mim-Balancing/Tree Sim-Balancing}}\label{sec:degree-to-matching}

In this section, we show how to transfer the degree requirement of \textsc{Degree Balancing} to the induced-matching requirement of \textsc{Mim-} and \textsc{Sim-Balancing}.

%We decompose the reduction from \tdb to \textsc{Mim-Width} in two parts.
%First we show how to go from balancing the degree of each vertex in an edge-weighted graph to minimizing the mim-value of a \tmap ~of partition, which we define in the next section.
%The catch being that \tmap ~of partition are simpler than the branch decompositions used for mim-width.
%Then we make a second reduction showing that the simple behavior of \tmaps actually captures the complexity of mim-width and its variants.

\subsection{The \textsc{Mim-Balancing} and \textsc{Sim-Balancing} problems}\label{sec:im-balancing}

A~\emph{partitioned graph} is a~pair $(G, \mathcal S)$ where $G$ is a~graph and $\mathcal S$ is a~partition of $V(G)$. 
A~\emph{\tmap} of a~partitioned graph $(G,\mathcal S)$ is a~pair $(T,f)$ where $T$ is a~tree and $f \colon \mathcal S \rightarrow V(T)$ is a~bijection from the parts of $\mathcal S$ to the vertices of~$T$. 
%When $\mathcal S$ is clearly a partition from the context, we will simply say \emph{\tmap of $\mathcal S$}.
When $T$ is a path, we may call $(T, f)$ a~\emph{\pmap of~$\mathcal S$}.

We say that a cut $(A,B)$ of $G$ is an \emph{$\mathcal S$-cut} if each set in $\mathcal S$ is included in either $A$ or $B$.
Each edge $e \in E(T)$ in a~tree mapping $(T, f)$ of $(G, \mathcal S)$ defines an $\mathcal S$-cut $(A_e, B_e)$ of $G$: the union of the parts mapped to each component of $T - e$.
The \emph{sim-value (resp. mim-value) of a~\tmap $(T, f)$ of $(G,\mathcal S)$} is the maximum taken over every edge $e \in E(T)$ of the maximum size of an induced (resp. semi-induced) matching between $A_e$ and $B_e$. 
The \emph{sim-balancing (resp. mim-balancing) of $(G, \mathcal{S})$} is the minimum sim-value (resp. mim-value) among all possible \tmaps of $(G, \mathcal{S})$. 
Similarly, the \emph{linear sim-balancing (resp. mim-balancing)} is the minimum sim-value (resp. mim-value) among \pmaps.

\defparproblem{\tsb (resp.~\tmb)}{A~partitioned graph~$(G,\mathcal S)$ and a~non-negative integer~$\tau$.}{$\tau$}{Does $(G,\mathcal S)$ admit a~\tmap $(T, f)$ of sim-value (resp.~mim-value)~$\tau$?}

Note that even when $\mathcal S$ is the finest partition $\{\{v\}~:~v \in V(G)\}$, this problem is not exactly \textsc{Mim-Width}, as $f$ also maps vertices to internal nodes of $T$, and $T$ has no degree restriction.
\lmb (or \lsb) is defined analogously except $T$ is forced to be a~path.
We may use \textsc{Mim-Balancing} to collectively refer to \lmb and \tmb; and similarly with \textsc{Sim-Balancing}.

At the end of this section, we will have established the following.

\begin{theorem}
  Let $\tau, \gamma$ be natural numbers satisfying \Cref{def:gamma-lambda} and $\gamma > 50$.
  Given partitioned graphs $(G,\mathcal S)$ such that:
  \begin{compactitem}
  \item the linear mim-balancing of $(G, \mathcal{S})$ is at most $\tau + 50$, or
  \item the sim-balancing of $(G, \mathcal{S})$ is at least $\tau + \gamma$,
  \end{compactitem}
  deciding which of the two outcomes holds is \NP-hard.
\end{theorem}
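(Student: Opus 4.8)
The plan is to reduce from \tdb via the construction sketched in the introduction: each vertex $u$ of the edge-weighted graph $(H,\weight)$ becomes a part $S(u)$ of $\mathcal S$ that is an independent set of $G$ of size (roughly) the weight of $u$ in $H$, adjacencies of $H$ become induced matchings between the corresponding parts, and non-adjacencies become bicliques (with the additional ``dummy'' twist needed to prevent spurious large induced matchings; see~\cref{fig:matching-and-dummy}). Concretely, for each edge $uv\in E(H)$ of weight $w$ we reserve $w$ vertices in $S(u)$ and $w$ vertices in $S(v)$ and put a perfect matching between them; for each non-edge $uv$ we make $S(u)$ and $S(v)$ fully adjacent (up to the twist). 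The target threshold for \lmb is $\tau+50$ and the hardness gap against \tsb is $\tau+\gamma$, which is why we need $\gamma>50$: the $+50$ additive slack from the gadgetry is eventually swallowed by~$\gamma$.

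\medskip

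\textbf{Step 1 (completeness: $\tau$-balancing order $\Rightarrow$ linear mim-balancing $\le \tau+50$).} Given a $\tau$-balancing order $\prec$ of $H$ (which exists by~\cref{lem:sat-to-bo} when $\varphi$ is satisfiable, and more generally is what \tdb hands us on yes-instances), I build a \pmap of $(G,\mathcal S)$ by placing the parts $S(u)$ along the path in the order inherited from $\prec$. For any edge $e$ of the path, the cut $(A_e,B_e)$ of $G$ splits the parts exactly as $\prec$ splits $V(H)$ at the corresponding point. I then bound the maximum semi-induced matching across this cut. The key combinatorial claim is that a large induced matching in $G[A_e,B_e]$ must be almost entirely incident to a single part $S(u)$ — the density (bicliques for non-edges) forbids a matching from spreading across many parts on the same side — and that the part of such a matching living inside $S(u)$ is controlled by the weighted degree of $u$ across the $H$-cut, which is $\le\tau$ by assumption. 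The residual contribution from the biclique/dummy structure and from ``boundary'' edges contributes only the bounded additive term, $\le 50$. Hence the \pmap has mim-value $\le\tau+50$, so the linear mim-balancing is at most $\tau+50$.

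\medskip

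\textbf{Step 2 (soundness: sim-balancing $\le \tau+\gamma-1$ $\Rightarrow$ $H$ admits a $(\tau+\gamma)$-balancing tree).} Conversely, suppose $(G,\mathcal S)$ has a \tmap $(T,f)$ of sim-value $\le\tau+\gamma-1$. Since every part $S(u)$ is mapped to a single node, $(T,f)$ induces a \tmap-style assignment of $V(H)$ to $T$, i.e.\ exactly the data of a candidate balancing tree of $H$. For each edge $e\in E(T)$ and each $u$ with $f(\{u\})$ incident to $e$, I need: weighted $H$-degree of $u$ across $(A_e,B_e)\le\tau+\gamma$. This follows because the edges of $H$ incident to $u$ crossing the cut each contribute a perfect matching between $S(u)$ and the relevant part on the other side, and — since $S(u)$ is an independent set and the matched vertices on the far side can be chosen to induce a matching — these assemble into an induced matching in $G[A_e,B_e]$ of size equal to that weighted degree. (Here the ``tree vs.\ linear'' and ``sim vs.\ mim'' distinctions genuinely collapse: the parts are independent sets, so semi-induced equals induced between two parts, and the tree shape is allowed on both sides.) Thus a sim-value $\le\tau+\gamma-1$ \tmap yields a $(\tau+\gamma)$-balancing tree of $H$. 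Contrapositively, if $H$ has no $(\tau+\gamma)$-balancing tree — which is the \tdb no-instance guarantee, available via~\cref{thm:tdb}/\cref{lem:bo-to-sat} plus the degree-padding that forces $T$ to be a path — then the sim-balancing of $(G,\mathcal S)$ is at least $\tau+\gamma$.

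\medskip

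\textbf{Wrap-up and main obstacle.} Combining the two steps with the \NP-hardness from~\cref{thm:tdb} gives the theorem: yes-instances of \tdb map to partitioned graphs of linear mim-balancing $\le\tau+50$, no-instances to partitioned graphs of sim-balancing $\ge\tau+\gamma$, and the reduction is clearly polynomial. The routine part is the bookkeeping on sizes of parts and the definition of the dummy vertices so that non-edges really do behave like bicliques with no hidden matchings. The \emph{main obstacle}, and where the $+50$ actually comes from, is the structural lemma in Step 1: proving that in the cut $G[A_e,B_e]$ no induced matching can use more than a constant number of edges not incident to a single distinguished part $S(u)$. This requires a careful case analysis of how an induced matching can interact with several parts simultaneously, using that distinct parts on the same side are joined by bicliques (so at most one matched vertex per such part can appear on each side) and that the dummy gadget kills the few remaining configurations; the constant $50$ is then a crude but sufficient bound on these exceptional edges, chosen small enough relative to $\gamma$.
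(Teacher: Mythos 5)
Your overall architecture is exactly the paper's: the same construction of $(G,\mathcal S)$ from an instance of \tdb (independent parts $S(u)$ split into sets $I(u,v)$ of size $\weight(uv)$, perfect matchings for edges of $H$, dummy bicliques elsewhere), the same completeness direction (placing the parts along a path in the order of a $\tau$-balancing order and bounding the mim-value by $\tau+50$, cf.~\cref{lem:balanced2sim-width}), the same soundness direction (reading a balancing tree of $H$ off a \tmap of $(G,\mathcal S)$, with the weighted degree across each cut realized exactly as an induced matching of matching edges, cf.~\cref{lem:simwidth2balanced}), and the same combination with \cref{thm:tdb}, using $\gamma>50$ to keep the two outcomes disjoint.

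However, there is a genuine gap at the step you yourself single out as the main obstacle, and the mechanism you sketch for it would not work. You need that every semi-induced matching across an $\mathcal S$-cut consists, up to at most $50$ exceptional edges, of matching edges all covered by a single part (the paper's \cref{lem:mim-width-local} combined with \cref{lem:matching-edges-are-clean}). Your proposed reason --- that distinct parts on the same side are joined by bicliques, so at most one matched vertex per such part can appear on each side --- is beside the point for this bound: a semi-induced matching between $A_e$ and $B_e$ is an induced matching of $G[A_e,B_e]$, so adjacencies inside $A_e$ or inside $B_e$ impose no constraint whatsoever. The real argument must exploit the \emph{cross-cut} dummy bicliques: if $a_ib_j$ is a non-edge with $a_i\in I(u,v)\subseteq A_e$ and $b_j\in I(x,y)\subseteq B_e$, then $u=y$, $v=x$, or $v=y$ (\cref{lem:no-dummy-edge-S-cut}), and one then needs a nontrivial counting step --- in the paper, at most $6$ dummy-matched endpoints per part (\cref{lem:white-edges-restriction}), an auxiliary digraph of outdegree at most $12$ whose underlying graph is $24$-degenerate, and an independent set of size $\lceil |D|/25\rceil$ forcing a contradiction as soon as more than $50$ dummy edges occur. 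Since the constant $50$ appears in the very statement being proved (and must be swallowed by $\gamma$), this missing argument is the heart of the theorem; the remainder of your proposal, including the exact-weighted-degree induced matching in the soundness step and the wrap-up via \cref{thm:tdb}, is correct and coincides with the paper.
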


\subsection{Encoding \textsc{Degree Balancing} in \textsc{Mim/Sim-Balancing}}

Let $(H, \weight)$ be an instance of \tdb with positive and integral weights.
We build an instance of \tmb $G := G(H, \weight), \mathcal S := \mathcal{S}(H, \weight)$, as follows.

\medskip

\textbf{Construction of $\bm{(G,\mathcal S)}$.}
For every vertex $u \in V(H)$ and every $v \in N_H(u)$, we add an independent set $I(u, v)$ of size $\weight(uv)$ to~$G$.
%Our construction uses a~labeling $\ell$ of the all the vertices of~$G$.
%In particular, we label all vertices $w \in I_{u, v}$ by $v$, thus $\ell(w)=v$.  
For each vertex $u \in V(H)$, we set~$$S(u) := \bigcup_{v \in N_H(u)} I(u, v).$$
Each $S(u)$ will remain an independent set in~$G$.
The partition $\mathcal S$ is simply defined as $\{S(u)~:~u \in V(H)\}$.

We finish the construction by adding two kinds of edges in $G$, \emph{matching edges} and \emph{dummy edges}. 
For every pair of disjoint edges $uv$ and $xy$ of $H$, we add an edge between every vertex of $I(u, v)$ and every vertex of $I(x, y)$.
All these edges are called \emph{dummy}.
For every $uv \in E(H)$, we add a~maximum (perfect) induced matching between $I(u,v)$ and $I(v,u)$.
All these edges are called \emph{matching edges}.
Observe that $\weight(uv)=\weight(vu)$ ($H$ is undirected), hence $|I(u,v)|=|I(v,u)|$ and the matching between $I(u,v)$ and $I(v,u)$ is indeed perfect.  
This concludes the construction of $(G, \mathcal S)$; see~\cref{fig:matching-and-dummy} for an illustration of the adjacencies between some $S(u)$ and $S(v)$.

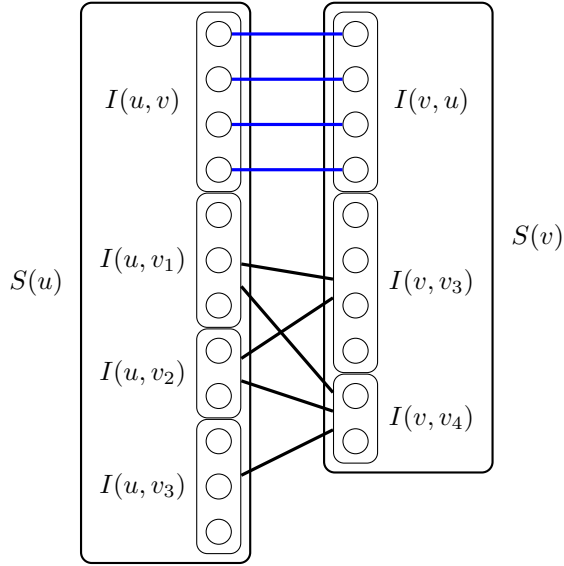
\begin{figure}[h!]
  \centering
  \begin{tikzpicture}[
      vertex/.style={draw, circle, minimum size = 0.3cm},
      matching/.style={blue, very thick},
      dummy/.style={very thick},
    ]
    \def\s{0.6}
    \def\h{1}
    % S(u)
    \def\n{12}
    \foreach \i in {1,...,\n}{
      \node[vertex] (a\i) at (0, - \i * \s) {} ;
    }
    % I(u,.)
    \foreach \i/\j/\k/\l in {1/4/1/v,5/7/2/v_1,8/9/3/v_2,10/12/4/v_3}{
      \node[draw, rounded corners, fit=(a\i) (a\j)] (Iu\k) {} ;
      \node (tIu\k) at (-\h, - 0.5 * \i * \s - 0.5 * \j * \s) {$I(u,\l)$} ; 
    }
    \node[draw, thick, rounded corners, fit=(Iu1) (Iu4) (tIu1) (tIu4)] (Su) {} ;
    \node at (- 2.4 * \h, - 0.5 * \s - 0.5 * \n * \s) {$S(u)$} ;

    % S(v1)
    \def\m{10}
    \foreach \i in {1,...,\m}{
      \node[vertex] (b\i) at (3 * \s, - \i * \s) {} ;
    }
    % I(v1,.)
    \foreach \i/\j/\k/\l in {1/4/1/u,5/8/2/v_3,9/10/3/v_4}{
      \node[draw, rounded corners, fit=(b\i) (b\j)] (Iv1\k) {} ;
      \node (tIv1\k) at (3 * \s + \h, - 0.5 * \i * \s - 0.5 * \j * \s) {$I(v,\l)$} ; 
    }
    \node[draw, thick, rounded corners, fit=(Iv11) (Iv13) (tIv11) (tIv13)] (Sv1) {} ;
    \node at (3 * \s + 2.4 * \h, - 0.5 * \s - 0.5 * \m * \s) {$S(v)$} ;

    % matching edges
    \foreach \i in {1,...,4}{
      \draw[matching] (a\i) -- (b\i) ;
    }
    % dummy edges
    \foreach \i/\j in {Iu2/Iv12, Iu2/Iv13, Iu3/Iv12, Iu3/Iv13, Iu4/Iv13}{
      \draw[dummy] (\i) -- (\j) ;
    }
  \end{tikzpicture}
  \caption{Adjacencies between $S(u)$ and $S(v)$.
    In this example, $u$ has four neighbors $v, v_1, v_2, v_3$, and $v$ has three neighbors $u, v_3, v_4$.
    The matching edges are in blue, the dummy edges are in black (edges between two boxes represent bicliques).
    Notice the non-edges between $I(u,v_3)$ and $I(v,v_3)$.}
  \label{fig:matching-and-dummy}
\end{figure}

We notice that the configuration of the figure actually implies that $uvv_3$ is a~triangle in~$H$, which does not happen in graphs $H$ produced by the previous reduction.
However, we will not use that $H$ is triangle-free in the current section, and \cref{fig:matching-and-dummy} shows the general behavior between $S(u)$ and $S(v)$.
(For triangle-free graphs $H$, if $uv \in E(H)$, then there would instead be a~biclique between $S(u) \setminus I(u,v)$ and $S(v) \setminus I(v,u)$, and if $uv \notin E(H)$, $I(u,v)$, $I(v,u)$, and the matching edges in between them would simply not exist.) 

\subsection{Preparatory lemmas}

\renewcommand{\part}{\text{part}}

We will now prove some facts about the (semi-)induced matchings of the $\mathcal S$-cuts of $G$.

\begin{lemma}\label{lem:no-dummy-edge-S-cut}
	Let $(A, B)$ be an $\mathcal S$-cut of $G$.
	If there is no dummy edge between a vertex of $I(u,v)\subseteq A$ and one from $I(x,y)\subseteq B$, then $u=y$ or $v=x$ or $v=y$.
\end{lemma}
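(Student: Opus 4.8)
The claim is about when a dummy edge can fail to exist between a block $I(u,v) \subseteq A$ and a block $I(x,y) \subseteq B$, where $(A,B)$ is an $\mathcal{S}$-cut. Recall that dummy edges are added precisely between $I(u,v)$ and $I(x,y)$ whenever $uv$ and $xy$ are \emph{disjoint} edges of $H$ (i.e., sharing no endpoint). So ``no dummy edge between $I(u,v)$ and $I(x,y)$'' forces $uv$ and $xy$ to share an endpoint: either $u \in \{x,y\}$ or $v \in \{x,y\}$. The plan is to rule out the case $u = x$ by using that $(A,B)$ is an $\mathcal{S}$-cut; the remaining three possibilities $u = y$, $v = x$, $v = y$ are exactly the conclusion.

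First I would note that, by construction, the only pairs of blocks with no edge between them at all — neither matching nor dummy — are those where $uv$ and $xy$ intersect. More precisely: if $uv$ and $xy$ are disjoint edges, the full biclique of dummy edges is present; if they coincide as unordered edges (so $\{u,v\} = \{x,y\}$, i.e., either $(x,y) = (u,v)$ or $(x,y) = (v,u)$) then there are matching edges between $I(u,v)$ and $I(v,u)$ but no dummy edges; and if they share exactly one endpoint, there are no edges between the two blocks. So ``no dummy edge'' is equivalent to saying $uv$ and $xy$ are not disjoint, hence they share an endpoint.

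Now the key observation: the shared endpoint cannot be such that $u = x$ with $v \neq y$ (and in particular not $u = x$ when the two blocks are genuinely different). If $u = x$, then $I(u,v)$ and $I(x,y) = I(u,y)$ are both subblocks of $S(u) = \bigcup_{w \in N_H(u)} I(u,w)$. But $(A,B)$ is an $\mathcal{S}$-cut, so $S(u)$ lies entirely in $A$ or entirely in $B$; it cannot be that $I(u,v) \subseteq A$ while $I(u,y) \subseteq B$. Hence $u \neq x$. (If $v = y$ also held alongside $u = x$, the blocks would literally be $I(u,v)$ on both sides, again contradicting that $S(u) \subseteq A$ or $S(u) \subseteq B$ — so this degenerate subcase is also excluded, but it is anyway covered by the conclusion $v = y$.) Since $uv$ and $xy$ share an endpoint but $u \neq x$, the shared endpoint must be among $u = y$, $v = x$, $v = y$, which is exactly the statement.

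I do not expect a real obstacle here: this is essentially a bookkeeping lemma unwinding the construction of $(G,\mathcal{S})$. The only point needing a little care is being precise about the three regimes of adjacency between blocks (disjoint edges $\to$ biclique; same edge $\to$ matching only; sharing one endpoint $\to$ nothing) and then invoking the $\mathcal{S}$-cut property to eliminate the $u = x$ case. The conclusion $u = y \lor v = x \lor v = y$ should be read as ``the edges $uv$ and $xy$ of $H$ intersect, and not merely because $u = x$,'' which is what the construction of dummy edges is designed to encode.
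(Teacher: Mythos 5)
Your proposal is correct and follows essentially the same route as the paper: observe that dummy bicliques are placed exactly between blocks corresponding to disjoint edges of $H$, so absence of dummy edges forces $uv$ and $xy$ to share an endpoint, and then the $\mathcal S$-cut property rules out $u=x$. Your extra discussion of the three adjacency regimes is just a more careful unwinding of the same one-line argument the paper gives.
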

\begin{proof}
	If there is no edge between $I(u,v)$ and $I(x,y)$ in $G$, then by construction $u=x$ or $u=y$ or $v=x$ or $v=y$.
	But since $(A,B)$ is an $\mathcal S$-cut that separates $I(u,v)$ from $I(x,y)$, we have $u\neq x$.
\end{proof}

\begin{lemma}\label{lem:matching-edges-are-clean}
  Let $(A, B)$ be an $\mathcal S$-cut of $G$.
  Assume there exists a semi-induced matching $M := \{e_1, \dots, e_m\}$ in $G$ between $A$ and $B$ containing matching edges only.
  Then a~single part of~$\mathcal S$ covers all the edges of $M$. 
\end{lemma}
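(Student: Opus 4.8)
The plan is to reason entirely at the level of the two parts of $\mathcal S$ that each matching edge joins. First I would record the local shape of a matching edge: by construction the only matching edges incident to $I(u,v)$ go to $I(v,u)$, and $I(u,v) \subseteq S(u)$, $I(v,u) \subseteq S(v)$ with $uv \in E(H)$; hence every $e_i \in M$ joins two \emph{distinct} parts $S(u_i)$ and $S(v_i)$. Since $M$ is a semi-induced matching between $A$ and $B$ we have $M \subseteq E(A,B)$, and because $(A,B)$ is an $\mathcal S$-cut, $S(u_i)$ and $S(v_i)$ must lie on opposite sides of $(A,B)$. Writing $p_i := \{u_i,v_i\}$, the whole statement reduces to showing that the family $\{p_i : i \in [m]\}$ has a common element $w$: for then one endpoint of each $e_i$ lies in $S(w)$, so the single part $S(w)$ covers $M$. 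The case $m \leq 1$ is immediate.

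Second, I would show that the $p_i$'s pairwise intersect. Suppose $p_i \cap p_j = \emptyset$ with $i \neq j$. Then $u_iv_i$ and $u_jv_j$ are disjoint edges of $H$, so the construction inserts a complete biclique of dummy edges between each pair among the four sets $I(u_i,v_i), I(v_i,u_i), I(u_j,v_j), I(v_j,u_j)$, which are pairwise distinct, hence pairwise disjoint as vertex sets, precisely because $p_i \cap p_j = \emptyset$. Let $a_i$ be the endpoint of $e_i$ in $A$ and $b_j$ the endpoint of $e_j$ in $B$; then $a_i$ lies in $I(u_i,v_i)$ or $I(v_i,u_i)$ and $b_j$ lies in $I(u_j,v_j)$ or $I(v_j,u_j)$, so $a_ib_j$ is a dummy edge of $G$. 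It has one endpoint in $A$ and one in $B$, hence lies in $E(A,B)$; it meets $e_i$ at $a_i$ and $e_j$ at $b_j$, and it is distinct from both $e_i$ and $e_j$ since the four endpoints $a_i,b_i,a_j,b_j$ lie in four disjoint $I$-sets. This contradicts $M$ being an induced matching of $G[A,B]$, so all the $p_i$'s pairwise meet. I expect this bookkeeping — certifying that the hypothetical extra dummy edge really crosses the cut and is distinct from $e_i$ and $e_j$ — to be the main point to get right.

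Third and last, I would invoke the classical dichotomy for a pairwise-intersecting family of $2$-element sets: it is either a \emph{star} (some element belongs to all of them) or it consists exactly of a \emph{triangle} $\{a,b\},\{b,c\},\{a,c\}$ on three elements. The triangle is ruled out by the cut structure: if $e_i,e_j,e_k \in M$ realized $\{a,b\},\{b,c\},\{a,c\}$, then from step one $a$ and $b$ lie on opposite sides of $(A,B)$ and $b$ and $c$ lie on opposite sides, forcing $a$ and $c$ onto the \emph{same} side; but then $e_k$, which joins $I(a,c) \subseteq S(a)$ to $I(c,a) \subseteq S(c)$, would have both endpoints on one side of $(A,B)$, contradicting $e_k \in E(A,B)$. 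Hence the family is a star, which yields the common element $w$ and completes the argument. The one subtlety throughout is to keep in mind that a semi-induced matching between $A$ and $B$ is an induced matching of the bipartite graph $G[A,B]$, so the only forbidden chords are edges crossing the cut; beyond that the proof is a routine check against the construction.
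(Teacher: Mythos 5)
Your proof is correct and takes essentially the same route as the paper's: both hinge on the observation that two matching edges whose label pairs $\{u_i,v_i\}$, $\{u_j,v_j\}$ are disjoint would be joined by a crossing dummy edge (the paper packages this as \cref{lem:no-dummy-edge-S-cut}), followed by a short combinatorial argument on the $2$-element label sets. The only cosmetic difference is where the $\mathcal S$-cut's sidedness is used: the paper invokes it immediately to sharpen the pairwise claim to ``$u_i=u_j$ or $v_i=v_j$'' and finishes with a two-pair case analysis, whereas you first establish unordered pairwise intersection and then use the same opposite-sides observation to exclude the triangle configuration in the star-versus-triangle dichotomy.
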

\begin{proof}
  Let us denote by $a_i \in A, b_i \in B$ the two endpoints of $e_i$.
  Since the edges of $M$ are \emph{matching edges} they are between pairs of sets of the form $I(u, v)$ and $I(v, u)$.
  Hence, we denote by $I(u_i, v_i) \subseteq A$ the set containing $a_i$ and by $I(v_i, u_i) \subseteq B$ the set containing $b_i$.
  Recall that $I(u_i, v_i) \subseteq S(u_i)$ and  $I(v_i, u_i) \subseteq S(v_i)$.
  Thus, as $(A, B)$ is an $\mathcal S$-cut, $S(u_i) \subseteq A$ and $S(v_i) \subseteq B$.
  \begin{claim}\label{cl:proof-matching-edges-are-clean}
    For every $i,j\in [m]$, we have $u_i = u_j$ or $v_i = v_j$.
    %$S(u_i)$ or $S(v_i)$ covers $\{e_i, e_j\}$.
  \end{claim}
  \begin{proofofclaim}
    Since $M$ is semi-induced, the vertex $a_i$ is not adjacent to $b_j$, thus $a_ib_j$ is not a~dummy edge.
    By \Cref{lem:no-dummy-edge-S-cut}, we have $u_i = u_j$, $v_i = u_j$ or $v_i=v_j$.
    Since $u_j \in A$ and $v_i \in B$, we have $v_i \neq u_j$.
    Hence, either $u_i = u_j$ or $v_i = v_j$.
  \end{proofofclaim}

  Applying \Cref{cl:proof-matching-edges-are-clean} to every pair $e_1, e_i$ for $i \in [m]$, we get that $u_i = u_1$ or $v_i = v_1$ for every $i\in [m]$.
  If $u_i = u_1$ for all $i\in [m]$, or $v_i = v_1$ for all $i\in [m]$, then $S(u_1)$ or $S(v_1)$ covers $M$ and the lemma holds.
  Hence, assume there exist $i,j\in [m]$ such that $u_i = u_1$ and $u_j\neq u_1$ and $v_i \neq v_1$ and $v_j = v_1$.
  Applying \Cref{cl:proof-matching-edges-are-clean} to $e_i, e_j$ we get that $u_i = u_j$ or $v_i = v_j$.
  This implies that $u_j = u_1$ or $v_i = v_1$; a~contradiction to the fact $u_j \neq u_1$ and $v_i\neq v_1$.
\end{proof}

\begin{lemma}\label{lem:white-edges-restriction}
  Let $(A, B)$ be an $\mathcal S$-cut of~$G$.
  If $M := \{a_1b_1, \dots, a_tb_t\}$ is a semi-induced matching in $G$ between $A$ and $B$ only made of dummy edges, and all the vertices $a_i$ lie in the same part of $\mathcal S$ included in $A$, then $t \leqslant 6$.
\end{lemma}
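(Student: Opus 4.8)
The plan is to translate the hypothesis that $M$ is semi-induced into combinatorial constraints on the ``coordinates'' of the sets $I(\cdot,\cdot)$ hosting the endpoints of $M$, and then bound how many such configurations can coexist.

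First I would fix notation: since all the $a_i$ lie in the same part of $\mathcal S$, write $a_i \in I(u, v_i) \subseteq S(u) \subseteq A$ with $v_i \in N_H(u)$, and $b_i \in I(x_i, y_i) \subseteq S(x_i) \subseteq B$ with $x_i y_i \in E(H)$; the inclusion $S(x_i) \subseteq B$ uses that $(A,B)$ is an $\mathcal S$-cut. Since $M$ contains only dummy edges, $a_i b_i$ is a dummy edge between $I(u,v_i)$ and $I(x_i,y_i)$, which by construction forces $uv_i$ and $x_iy_i$ to be disjoint edges of $H$, i.e.\ $\{u,v_i\} \cap \{x_i,y_i\} = \emptyset$; in particular $u \neq y_i$. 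Next, for $i \neq j$, semi-inducedness gives $a_j b_i \notin E(G)$, hence there is no dummy edge between $I(u,v_j) \subseteq A$ and $I(x_i, y_i) \subseteq B$; \Cref{lem:no-dummy-edge-S-cut} then yields $u = y_i$ or $v_j \in \{x_i, y_i\}$, and since $u \neq y_i$ we conclude $v_j \in \{x_i, y_i\}$ for every $j \neq i$.

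Then I would extract two observations. First, there are at most three distinct values among $v_1, \dots, v_t$: fixing $i = 1$, every $v_j$ with $j \geq 2$ lies in the two-element set $\{x_1, y_1\}$, while $v_1 \notin \{x_1,y_1\}$, so the set of values is contained in $\{v_1\}\cup\{x_1,y_1\}$. Second, no value is taken twice: if $v_i = v_j =: w$ with $i \neq j$, then $\{u, w\}=\{u,v_j\}$ is disjoint from $\{x_j, y_j\}$ by the first paragraph, so $uw$ and $x_jy_j$ are disjoint edges of $H$, whence $I(u,w)$ and $I(x_j, y_j)$ induce a biclique in $G$; in particular $a_i \in I(u,w)$ is adjacent to $b_j \in I(x_j,y_j)$, contradicting that $M$ is semi-induced. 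Combining the two observations gives $t \le 3 \le 6$.

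The argument is short once \Cref{lem:no-dummy-edge-S-cut} is in hand; the point that I expect will need the most care when writing it out is the bookkeeping of the ordered pairs $I(u,v)$ versus $I(v,u)$, and using the $\mathcal S$-cut hypothesis to place each set $S(\cdot)$ on the correct side of the cut (so that, for instance, $u \neq x_i$ and the adjacencies coming from the dummy-edge construction are exactly the ones claimed). The rest is elementary counting.
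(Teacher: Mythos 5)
Your proof is correct. Its first half is the same as the paper's: from the dummy edges $a_ib_i$ you extract $u \neq y_i$ and $v_i \notin \{x_i, y_i\}$, and from semi-inducedness together with \cref{lem:no-dummy-edge-S-cut} (applied, exactly as the paper does, to a single non-adjacent pair $a_jb_i$) you get $v_j \in \{x_i, y_i\}$ for all $j \neq i$. The second half takes a genuinely different, and in fact sharper, route. The paper first shows that each part of $\mathcal S$ contains at most two of the $b_i$'s and then finishes with a pigeonhole argument on the index sets $\{k : y_k = v_i\}$, which only yields $t \leq 6$. You instead exploit that dummy edges come in complete bicliques between the sets $I(\cdot,\cdot)$: if $v_i = v_j =: w$, then $a_i$ and $a_j$ lie in the same set $I(u,w)$ and hence have identical dummy neighborhoods, so the dummy edge $a_jb_j$ forces $a_ib_j \in E(G)$, contradicting semi-inducedness; thus the $v_i$ are pairwise distinct, and since they are all confined to $\{v_1, x_1, y_1\}$ you obtain $t \leq 3$, which implies the stated bound. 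The comparison is mildly in your favour: your argument is shorter and the sharper constant would, if propagated, slightly improve the constants in \cref{lem:mim-width-local} (and hence the additive $50$), whereas the paper's counting argument avoids invoking the biclique structure of the dummy edges but pays for it with a weaker bound; since the authors only need some universal constant, both arguments fully serve the purpose.
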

\begin{proof}
  Let $S(u)$ be the part of $\mathcal S$ including $\{a_1, \ldots, a_t\}$.
  Let us denote by $I(u, v_i) \subseteq A$ the set containing $a_i$ and by $I(x_i, y_i) \subseteq B$ the set containing $b_i$.
  By definition of the dummy edges, for every $i \in [t]$,  we have
  \begin{align}\label{eq:dummy_cond}
    v_i \neq x_i \text{ and } u \neq y_i \text{ and }v_i \neq y_i.
  \end{align}
  
  Let $A_M = \{a_1,\dots,a_t\}$ and $B_M = \{b_1,\dots,b_t\}$.
  Observe that for every $i\neq j \in [m]$, since $M$ is a semi-induced matching $a_ib_j$ is not a edge of $G$, by \Cref{lem:no-dummy-edge-S-cut}, we have 
  $u = y_j$ or $v_i = x_j$ or $v_i = y_j$, but from  Condition~(\ref{eq:dummy_cond}) we know that $u\neq x_j$.
  Thus, for every $i\neq j \in [m]$, we have
  \begin{align}\label{eq:no_dummy_cond}
  	v_i = x_j \text{ or } v_i = y_j.  
  \end{align}
  We first prove that any part of $\mathcal S$ contains at most two vertices of $B_M$.
  Indeed, assume (without loss of generality) that $b_1, b_2$ and $b_3$ are all in a single part of $\mathcal S$, i.e., $x := x_1 = x_2 = x_3$.
  From (\ref{eq:no_dummy_cond}), we have $v_i = x$ or $v_i = y_j$, for every $i\neq j \in [3]$.
  However, by~Condition~(\ref{eq:dummy_cond}), we get that $v_i \neq x$.
  Thus, only one disjunct remains: $v_i = y_j$.
  But then $v_1 = y_2 = v_3 = y_1$, and $v_1 = y_1$ contradicts~(\ref{eq:dummy_cond}).
  
  Thus, any part of $\mathcal S$ contains at most two vertices of $B_M$.
  And in particular, for every $i \in [t]$, $|S(v_i) \cap B_M| \le 2$.
  For every $i \neq j$, we know from (\ref{eq:no_dummy_cond}) that $v_i = x_j$ or $v_i = y_j$.
  For a~fixed $i$, only two vertices of $B_M$ can satisfy the first disjunct, thus, we have $v_i = y_j$ for at least $t - 1 - 2 = t - 3$ of the indices $j \in [t] \setminus \{i\}$.
  
  Assume for the sake of contradiction that $t > 6$.
  We have $2 \cdot (t-3) > t$, so for every $i, j \in [t]$, $\{k ~:~ y_k = v_i\} \cap \{k ~:~ y_k = v_j\} \neq \emptyset$, which implies that $v_i = v_j$.
  Hence since there exist $i, k$ with $y_k = v_i$, and as $v_i = v_k$, we have $v_k = y_k$; contradicting~(\ref{eq:dummy_cond}).
\end{proof}

\begin{lemma}\label{lem:mim-width-local}
  Let $(A, B)$ be a $\mathcal S$-cut of $(G, \mathcal S)$ with a~semi-induced matching $\{e_1, \ldots, e_m\}$ between $A$ and $B$.
  Then at least $m - 50$ edges among $\{e_1, \ldots, e_m\}$ are matching edges.
\end{lemma}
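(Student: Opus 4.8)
The plan is to show that at most a constant number of the $e_i$ are dummy edges --- in fact at most $6$, which trivially yields the claimed bound $50$. Let $D\subseteq M$ be the set of dummy edges and $\delta:=|D|$. For $e_i\in D$ write $a_i\in A$, $b_i\in B$ for its two endpoints. Every vertex of $G$ lies in a unique set $I(u,v)$ with $uv\in E(H)$, so let $a_i\in I(u_i,v_i)$ and $b_i\in I(x_i,y_i)$, and set $L_i:=\{u_i,v_i\}$ and $R_i:=\{x_i,y_i\}$, both edges of $H$. Since $a_ib_i$ is a \emph{dummy} edge, the construction forces $L_i$ and $R_i$ to be disjoint edges, so $L_i\cap R_i=\emptyset$. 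Since $(A,B)$ is an $\mathcal S$-cut with $a_i\in A$ and $b_i\in B$, the whole part $S(u_i)$ lies in $A$ and $S(x_i)$ lies in $B$; in particular $u_i\neq x_j$ for all $i,j$.

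The key step is: for all $i\neq j$ in $D$, $L_i\cap R_j\neq\emptyset$. Indeed, $M$ is semi-induced, so $a_ib_j\notin E(G)$; but if $L_i$ and $R_j$ were disjoint edges of $H$, the construction would place a complete bipartite (all dummy) graph between $I(u_i,v_i)$ and $I(x_j,y_j)$, forcing $a_ib_j\in E(G)$, a contradiction. Hence $L_i$ and $R_j$ share a vertex. (The only alternative reason for $a_ib_j\notin E(G)$ when $L_i,R_j$ meet is the ``matching'' situation $(x_j,y_j)=(v_i,u_i)$, but then $L_i=R_j$ as sets and the conclusion still holds.) Two consequences: the $L_i$ are pairwise distinct --- if $L_i=L_j$ with $i\neq j$ then $\emptyset=L_j\cap R_j=L_i\cap R_j$, absurd --- and symmetrically the $R_i$ are pairwise distinct.

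What remains is elementary combinatorics of $2$-element sets: pairwise distinct edges $R_1,\dots,R_\delta$ of $H$ with ``transversals'' $L_1,\dots,L_\delta$ satisfying $L_i\cap R_i=\emptyset$ and $L_i\cap R_j\neq\emptyset$ for $i\neq j$. If $\delta\le 3$ we are done, so assume $\delta\ge 4$. If the $R_i$ were pairwise intersecting they would form a star (an intersecting family of $\ge 4$ edges has a common vertex), say $R_i=\{r,t_i\}$ with the $t_i$ distinct; since $L_i\cap R_i=\emptyset$ we have $r\notin L_i$, so for each $j\neq i$, $\emptyset\neq L_i\cap R_j=L_i\cap\{r,t_j\}$ forces $t_j\in L_i$, putting the $\delta-1$ distinct vertices $t_j$ ($j\neq i$) inside the $2$-set $L_i$ --- contradiction. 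Hence some $R_1,R_2$ are disjoint, say $R_1=\{p,q\}$ and $R_2=\{p',q'\}$; then every $L_k$ with $k\notin\{1,2\}$ meets both $R_1$ and $R_2$, so (being a $2$-set with $R_1\cap R_2=\emptyset$) it has exactly one endpoint in $\{p,q\}$ and one in $\{p',q'\}$, i.e.\ $L_k\in\{\{p,p'\},\{p,q'\},\{q,p'\},\{q,q'\}\}$. As the $L_k$ are distinct, $\delta-2\le 4$, so $\delta\le 6\le 50$, and at least $m-50$ edges of $M$ are matching edges.

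The one genuinely delicate point is the intersection property above, and in particular checking that the \emph{matching}-edge adjacencies between $I$-sets cannot create a counterexample; they cannot, precisely because such adjacencies exist only between $I(u,v)$ and $I(v,u)$, which forces the two index pairs to coincide. The rest is bookkeeping about the construction together with the standard fact that an intersecting family of $2$-element sets is a star or a sub-family of a triangle. One could alternatively route the count through Lemma~\ref{lem:white-edges-restriction} (each part of $\mathcal S$ carries at most $6$ dummy endpoints on each side) plus a bound on how many parts $D$ touches, but the direct argument is shorter and gives a much smaller constant than $50$.
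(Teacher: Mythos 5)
Your proof is correct, and it takes a genuinely different route from the paper. The paper first establishes \cref{lem:white-edges-restriction} (at most $6$ dummy endpoints of the matching per part of $\mathcal S$), then builds an auxiliary digraph on the dummy edges with arcs recording the coincidences $y_i=u_j$ or $v_i=x_j$, bounds its outdegree by $12$, uses $24$-degeneracy to extract an independent set of size $\lceil t/25\rceil$, and derives a contradiction from three pairwise independent dummy edges --- which is exactly where the constant $50$ comes from. You instead abstract each dummy edge to the pair of $H$-edges $(L_i,R_i)$ hosting its endpoints, observe that the dummy bicliques plus semi-inducedness force $L_i\cap R_i=\emptyset$ and $L_i\cap R_j\neq\emptyset$ for $i\neq j$ (this is essentially \cref{lem:no-dummy-edge-S-cut} in disguise), and finish with elementary combinatorics of intersecting families of $2$-sets (star or triangle). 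This bypasses \cref{lem:white-edges-restriction} entirely, is self-contained, and yields the sharper bound of at most $6$ dummy edges, so the paper's constant $50$ is obtained with a wide margin; the only blemish is your parenthetical aside in the key step, which is slightly imprecise (when $L_i$ and $R_j$ share exactly one vertex there are simply no edges at all between the corresponding $I$-sets, not only a missed matching pair), but it is never used: your argument only needs the contrapositive that disjoint $L_i,R_j$ would create a dummy edge $a_ib_j$, which is exactly what the construction guarantees.
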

\begin{proof}
Up to reordering, assume that $D := \{e_1, \ldots, e_t\}$ are the dummy edges of $\{e_1, \ldots, e_m\}$ for some $t \in [m]$.
We denote by $a_i \in A, b_i \in B$ the endpoints of $e_i$, and by $I(u_i, v_i) \subseteq A$ the set containing $a_i$ and by $I(x_i, y_i) \subseteq B$ the set containing $b_i$. 

By definition of the dummy edges, we have that for every $i \in [t]$,
\begin{align}
  \label{eq:dummy_cond2} v_i \neq x_i \text{ and } u_i \neq y_i \text{ and }v_i \neq y_i.
\end{align}

Let us consider the auxiliary directed graph~$\Aux$ where $V(\Aux):= D$, and $E(\Aux)$ contains the arc $(e_i, e_j)$ whenever $y_i = u_j$ or $v_i = x_j$.
By \Cref{lem:white-edges-restriction}, each part of~$\mathcal S$ contains at most 6 vertices of $\{a_1, \dots, a_t \} \cup \{b_1, \dots, b_t\}$.
Therefore, each of the disjuncts ($y_i = u_j$ or $v_i = x_j$) creates at most 6 outgoing arcs from~$e_i$.
Hence $\Aux$ has maximum outdegree at~most~12.
Thus the underlying undirected graph $J$ of $\Aux$ is 24-degenerate.
Thus $J$ admits an independent set $U$ of size $\left\lceil |D|/25 \right\rceil$.

Assume for the sake of contradiction that $t > 50$, hence that $|U| \geqslant 3$.
Without loss of generality, say that $e_1, e_2, e_3 \in U$.
Since $M$ is semi-induced, for every $i \neq j \in [t]$, $a_ib_j$ is not a dummy edge, thus by \Cref{lem:no-dummy-edge-S-cut} we have $v_i = x_j$ or $u_i = y_j$ or $v_i = y_j$.
But when $i$ and $j$ are restricted to $\{1, 2, 3\}$, there is no arc in $\Aux$ between $e_i$ and $e_j$, so $v_i \neq x_j$ and $u_i \neq y_j$. 
  Hence only the third disjunct can hold.
  Hence we have $v_1 = y_2 = v_3 = y_1$, and $v_1 = y_1$ contradicts (\ref{eq:dummy_cond2}).
\end{proof}

\subsection{Correctness of the reduction}

We can now show the correctness of the reduction.

\begin{lemma}\label{lem:simwidth2balanced}
If $(G, \mathcal S)$ admits a~\tmap~$(T, f)$ of sim-value~$t$, then $(H,\weight)$ admits a~$t$-balancing tree.
\end{lemma}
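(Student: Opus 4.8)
The plan is to turn the given \tmap $(T,f)$ of $(G,\mathcal S)$ directly into a $t$-balancing tree $(T,f')$ of $(H,\weight)$, via the obvious dictionary $f'(u) := f(S(u))$ for $u \in V(H)$. This $f'$ is a bijection from $V(H)$ to $V(T)$ since $\mathcal S = \{S(u) : u \in V(H)\}$ with the parts $S(u)$ pairwise distinct. The key compatibility is that for every edge $e$ of $T$, if $(A_e, B_e)$ denotes the cut of $H$ that $(T,f')$ assigns to $e$ (i.e.\ $A_e, B_e$ are the $f'$-preimages of the two components of $T-e$), then the $\mathcal S$-cut of $G$ assigned to $e$ by $(T,f)$ is exactly $(\widehat A_e, \widehat B_e)$ with $\widehat A_e := \bigcup_{u \in A_e} S(u)$ and $\widehat B_e := \bigcup_{u \in B_e} S(u)$. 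So it suffices to show: for every $v \in V(H)$ and every edge $e$ of $T$ incident to $f'(v)$, the weighted degree of $v$ in the cut $(A_e,B_e)$ of $H$ is at most $t$. Since $(T,f)$ has sim-value $t$, i.e.\ $\ssim_G(\widehat A_e,\widehat B_e)\le t$, this follows once we exhibit an induced matching of $G$ between $\widehat A_e$ and $\widehat B_e$ of size equal to that weighted degree.

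Fix such $v$ and $e$, and assume without loss of generality that $f'(v)$ lies in the component of $T-e$ whose preimage is $A_e$, so $v \in A_e$; then the weighted degree of $v$ in $(A_e,B_e)$ equals $\sum_{u \in N_H(v) \cap B_e}\weight(vu)$. For each $u \in N_H(v) \cap B_e$, the construction of $G$ provides a perfect induced matching $M_u$ between $I(v,u)$ and $I(u,v)$, of size $|I(v,u)| = \weight(vu)$, and moreover $I(v,u) \subseteq S(v) \subseteq \widehat A_e$ while $I(u,v) \subseteq S(u) \subseteq \widehat B_e$. I would take $M := \bigcup_{u \in N_H(v) \cap B_e} M_u$. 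As the sets $I(v,u)$ are pairwise disjoint over distinct $u \in N_H(v)$, and likewise the sets $I(u,v)$, the edge set $M$ is a matching between $\widehat A_e$ and $\widehat B_e$ of size $\sum_{u \in N_H(v) \cap B_e}\weight(vu)$.

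It then remains to check that $M$ is an \emph{induced} matching of $G$. Take two distinct edges $a_1b_1, a_2b_2 \in M$ with $a_i \in I(v,u_i) \subseteq S(v)$ and $b_i \in I(u_i,v) \subseteq S(u_i)$; we must rule out all of $a_1a_2, a_1b_2, a_2b_1, b_1b_2$ being edges of $G$. Here $a_1a_2 \notin E(G)$ because $S(v)$ is independent. The edges $vu_1$ and $vu_2$ of $H$ share the endpoint $v$, hence are not disjoint, so by the definition of the dummy edges there is no dummy edge among the sets $I(v,u_1), I(v,u_2), I(u_1,v), I(u_2,v)$; and a matching edge only runs between a pair of the form $I(a,b), I(b,a)$. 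If $u_1 \neq u_2$, this rules out $a_1b_2$, $a_2b_1$, and $b_1b_2$ (the last also because $S(u_1) \neq S(u_2)$ are independent). If $u_1 = u_2 =: u$, then $b_1b_2 \notin E(G)$ since $S(u)$ is independent, and $a_1b_2, a_2b_1$ could only be matching edges of $M_u$; but $M_u$ is a matching containing the distinct edges $a_1b_1, a_2b_2$, so $a_1b_2, a_2b_1 \notin M_u$. Hence $M$ is induced, of size $\sum_{u \in N_H(v)\cap B_e}\weight(vu)$, yielding $\sum_{u \in N_H(v)\cap B_e}\weight(vu) \le \ssim_G(\widehat A_e,\widehat B_e) \le t$; this being the weighted degree of $v$ across $(A_e,B_e)$ for an arbitrary $v$ and incident edge $e$, $(T,f')$ is a $t$-balancing tree of $(H,\weight)$. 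The only part that is not pure bookkeeping is this last verification, and it boils down to the single point that all the chosen matching edges are anchored at the common vertex $v$, so no two of them can be joined by a dummy edge, which would require two disjoint edges of $H$.
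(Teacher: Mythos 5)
Your proposal is correct and follows essentially the same route as the paper's proof: map each $u\in V(H)$ to $f(S(u))$, take for each vertex $v$ and incident tree-edge the union of the perfect matchings between $I(v,u)$ and $I(u,v)$ over the neighbours $u$ on the far side, and verify it is an induced matching (all potentially conflicting pairs share the anchor $v$, so no dummy edge can arise) to bound the weighted degree by the sim-value. The only difference is cosmetic: the paper dispatches the ``$a_ib_j$ is not a matching edge'' subcase by noting every vertex meets at most one matching edge, while you do a short case analysis on $u_1=u_2$ versus $u_1\neq u_2$.
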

\begin{proof}
Consider a~\tmap~$(T, f)$ of sim-value at most~$t$.
We keep the same tree $T$ and define the map $f': V(H) \rightarrow V(T)$ with $f'(v) := f(S(v))$.
We will show that $(T, f')$ is a~$t$-balancing tree of~$H$.

Consider an edge $e \in E(T)$.
Let $(A_e^H, B_e^H)$ (resp.~$(A_e^G, B_e^G)$) be the cut in~$H$ (resp.~in~$G$) defined by $e$, such that for every $v \in V(H)$, $v \in A_e^H$ if and only if $S(v) \subseteq A_e^G$.
Note that $(A_e^G, B_e^G)$ is an $\mathcal S$-cut.
Consider a vertex $v \in V(H)$.
Up to swapping $A_e^H$ and $B_e^H$ (and $A_e^G$ and $B_e^G$ accordingly), we may assume that $v \in A_e^H$.
Consider $u_1, \ldots, u_p$ an enumeration of $N_H(v) \cap B_e^H$.
Now consider $M$, the set of all matching edges in $G$ going from $S(v)$ to $S(u_1) \cup \dots \cup S(u_p)$.
By construction $|M| = \sum_{i \in [p]} \weight(vu_i)$.

We prove that $M$ is an induced matching between $A_e^G$ and $B_e^G$.
Let us denote by $a_1b_1, \ldots, a_mb_m$ the edges of $M$ with $a_i \in S(v)$ for each $i \in [m]$.
Note that $a_i \in A_e^G$ and $b_i \in B_e^G$.
Since $M$ is made of matching edges, for every $i\in [m]$, there exists a vertex $x_i \in \{u_1,\dots,u_p\}$ such that $b_i \in I(x_i, v)$.
Two vertices $a_i$ and $a_j$ are not adjacent since $S(v)$ is an independent set.
By construction, $b_ib_j$ is not a dummy edge of $G$ since $b_i \in I(x_i, v)$ and $b_j \in I(x_j, v)$.  
This is also the case for $a_ib_j$ since $a_i \in I(v, x_i)$ and $b_j \in I(x_j, v)$, and it holds also for $a_jb_i$ by symmetry.
As every vertex of~$G$ is incident to at~most one matching edge, $M$ is indeed an induced matching in $G$ between $A_e^G$ and $B_e^G$.

Since the sim-value of $(T, f)$ is $t$, we have $|M| \leqslant t$, and so $\sum_{i \in [p]} \weight(vu_i) \leqslant t$.
This upper bound was shown for every $e \in E(T)$ and $v \in V(H)$, so $(T, f')$ is a~$t$-balancing tree of~$H$.
\end{proof}

\begin{lemma}\label{lem:balanced2sim-width}
If $H$ admits a $\tau$-balancing order, then there is a~\pmap~$(P, f)$ of~mim-value at~most~$\tau + 50$.

Moreover, for every cut $(A_e, B_e)$ induced by an edge $e\in E(P)$ and every semi-induced matching $M$ between $A_e$ and $B_e$, $M$ has at most $\tau$ matching edges and there exists $u\in V(H)$ such that $S(u)$ covers the matching edges of $M$.
\end{lemma}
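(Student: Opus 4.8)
The plan is to build the \pmap~$(P, f)$ directly from a~$\tau$-balancing order $v_1 \prec v_2 \prec \cdots \prec v_n$ of $H$: let $P$ be the path on $n$ nodes and let $f$ send $S(v_i)$ to its $i$-th node. Each edge $e$ of $P$ then induces the $\mathcal S$-cut $(A_e, B_e)$ with $A_e = S(v_1) \cup \cdots \cup S(v_i)$ and $B_e = S(v_{i+1}) \cup \cdots \cup S(v_n)$ for some $i \in [n-1]$, i.e.\ exactly the cut of $G$ corresponding to the cut $(\{v_1, \dots, v_i\}, \{v_{i+1}, \dots, v_n\})$ of $H$. It thus suffices to fix such an $e$ and an arbitrary semi-induced matching $M$ between $A_e$ and $B_e$, and show that $M$ contains at~most $\tau$ matching edges, all covered by a~single part $S(u)$; the bound $|M| \le \tau + 50$ then follows from~\cref{lem:mim-width-local}, and since $e$ and $M$ are arbitrary this bounds the mim-value of $(P,f)$ while simultaneously yielding the ``moreover'' statement.

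For the core step, let $M'$ be the set of matching edges of $M$. Being a~subset of a~semi-induced matching, $M'$ is itself a~semi-induced matching between $A_e$ and $B_e$ consisting only of matching edges. If $M' = \emptyset$ both claims are immediate; otherwise \cref{lem:matching-edges-are-clean} provides a~single part $S(u)$ covering all of $M'$. Since $(A_e, B_e)$ is an $\mathcal S$-cut, $S(u)$ lies entirely in $A_e$ or entirely in $B_e$; by symmetry assume $S(u) \subseteq A_e$, so $u = v_j$ with $j \le i$. Each edge of $M'$ has its endpoint in $S(u)$ lying in some $I(u,v)$ with $v \in N_H(u)$, and its other endpoint in $I(v,u) \subseteq S(v)$; as the edge goes between $A_e$ and $B_e$ we get $S(v) \subseteq B_e$, whence $v = v_k$ with $k > i \ge j$, so $u \prec v$. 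Because $M'$ is a~matching, distinct edges use distinct vertices of $S(u)$, so the number of edges of $M'$ whose $S(u)$-endpoint lies in $I(u,v)$ is at~most $|I(u,v)| = \weight(uv)$. Summing over the neighbors reached,
\[
|M'| \;\le\; \sum_{\substack{v \in N_H(u) \\ S(v) \subseteq B_e}} \weight(uv) \;\le\; \sum_{\substack{v \in N_H(u) \\ u \prec v}} \weight(uv) \;\le\; \tau,
\]
the middle inequality since every such $v$ satisfies $u \prec v$, and the last because $\prec$ is $\tau$-balancing, so the right weight of $u$ is at~most $\tau$ (the case $S(u) \subseteq B_e$ is identical, using the left weight of $u$). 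This establishes that $M$ has at~most $\tau$ matching edges, covered by $S(u)$.

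Finally, \cref{lem:mim-width-local} applied to $(A_e, B_e)$ and $M$ gives that at~least $|M| - 50$ of the edges of $M$ are matching edges, so $|M| - 50 \le |M'| \le \tau$, i.e.\ $|M| \le \tau + 50$; ranging over all edges $e$ of $P$ yields mim-value at~most $\tau + 50$. The only delicate point is the side bookkeeping in the core step: one must verify that the part $S(u)$ covering $M'$ sits on the side of the cut opposite to the parts reached by $M'$, so that the matching-edge count is controlled by the forward (or backward) weight of $u$ rather than its whole weighted degree. This is exactly what arranging the parts along $P$ in the $\tau$-balancing order buys us, and no further obstacle is expected, all the substantial work having been front-loaded into \cref{lem:matching-edges-are-clean,lem:mim-width-local}.
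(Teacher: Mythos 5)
Your proposal is correct and follows essentially the same route as the paper: the same path mapping built from the $\tau$-balancing order, \cref{lem:mim-width-local} to reduce to matching edges, \cref{lem:matching-edges-are-clean} to localize them at a single part $S(u)$, and the left/right weight of $u$ to bound their number by $\tau$. If anything, taking $M'$ to be \emph{all} matching edges of $M$ (rather than a fixed subset of size $m-50$, as the paper does) handles the ``moreover'' clause slightly more cleanly, but this is a cosmetic refinement rather than a different argument.
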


\begin{proof}
Let $\prec$ be a~$\tau$-balancing order of~$H$.
Let us call $v_1 \prec \dots \prec v_n$ the vertices of~$H$.
We define $P := p_1 \ldots p_n$ as be the path of order~$n$, and $f$ as the map $S(v_i) \mapsto p_i$.
It remains to bound the mim-value of~$(P, f)$.

Consider any $i \in [n-1]$ and the edge $e = p_ip_{i+1} \in E(P)$,
and let $(A_e, B_e)$ the cut of~$G$ such that $S(v_1), \dots, S(v_i) \subseteq A_e$, and $S(v_{i+1}), \dots, S(v_n) \subseteq B_e$.
Let $M = \{e_1, \ldots, e_m\}$ be a~semi-induced matching between $A_e$ and $B_e$. By \Cref{lem:mim-width-local}, one can assume that $M' := \{e_1, \dots, e_{m - 50}\}$ contains only matching edges.
By~\Cref{lem:matching-edges-are-clean}, all the edges in~$M'$ are incident to a same part, say~$S(u)$.

This implies that all edges of $M'$ are of the form $a_jb_j$ with $a_j \in S(u)$ and $b_j$ in some $S(v_j)$ with $uv_j \in E(H)$.
By construction of $P$, we either have $\{v_1, \dots, v_{|M'|}\} \prec u$, or $u \prec \{v_1, \dots, v_{|M'|}\}$. 
In particular, $|M'|$ is at most the maximum between the left weight and the right weight of $u$, which is at most $\tau$.
Hence $|M| \leqslant \tau + 50$, and since this applies to any edge of $P$, the mim-value of~$(P, f)$ is at~most~$\tau + 50$. 
\end{proof}

\section{\textsc{Mim/Sim-Balancing} to \textsc{Linear Mim-Width/Sim-Width}}\label{sec:balancing-to-mim-width}

\newcommand{\gad}{\mathcal G}

The next reduction uses two constants $a := 45$ and $b:= 6 \tau (\tau + \gamma) + 1$.
With the announced values of $\tau = 1080$ and $\gamma = 135$, we have $b = 7873201$.
We remark that the value of~$b$ will not affect the linear mim-width upper bound nor the sim-width lower bound.
(The constant $b$ should simply be that large to make our proofs work.)

Let $(H, \weight)$ be an instance of \tdb where all edge weights are positive multiples of~$a$ and $H$ is triangle-free.
We build a graph $G^*$, such that if $H$ has a~$\tau$-balancing order, then the linear mim-width of $G^*$ is at most $\frac{a+1}{a} \tau + 107$; and if $H$ is has no $(\tau+\gamma)$-balancing tree, then the sim-width of $G^*$ is at~least $\tau+\gamma$.
We construct $G^*$ from the instance $G := G(H), \mathcal S := \mathcal{S}(H)$ of \tmb from the previous reduction.
Remember that $\mathcal S = \{S(u)~:~u \in V(H)\}$.

The main goal of this reduction is to obtain a graph $G^*$ whose sim-width and linear mim-width are related to the the sim-balancing and linear mim-balancing of $(G,\cal S)$, respectively.
Observe that we cannot simply set $G^*:=G$ since a~layout $(T,f)$ of~$G$ can scatter each $S(u)$ so that for each matching edge $xy$ of $G$, $x$ and $y$ are placed at leaves of~$T$ sharing a~neighbor in~$T$.
Consequently, the only cuts $(A,B)$ induced by the edges of~$T$ with a~matching edge between $A$ and $B$ are rather trivial ($A$ or $B$ is a~singleton).
Thus, the sim-width of $G$ could be uncontrollably smaller than the sim-balancing of~$(G,\cal S)$. % and completely unrelated to the existence of a $\tau$-balancing order of $H$.

To prevent this, we design a gadget $\gad(u)$ for each $u\in V(H)$ from $b$ copies of $S(u)$.
These gadgets ensure that any tree layout $(T,f)$ of $G^*$ of sim-value at~most $\tau+\gamma-1$ behaves similarly to a~tree mapping of $(G,\cal S)$ in the sense that for every $S(u)$, there is an edge $e$ of $T$ such that both sides of the induced cut $(A_e, B_e)$ contain a copy of $S(u)$. 
Using this property, we prove that the sim-width of $G^*$ is at~least the sim-balancing of $(G,\cal S)$.

The final lemma from each of the two last subsections prove~\cref{thm:main} (hence~\cref{thm:main-csq}), which we restate here.
\begin{theorem}
  Let $\tau, \gamma, a$ be natural numbers as previously defined.
  Given graphs $G$ such that either 
  \begin{compactitem}
  \item the linear mim-with of~$G$ is at~most $\frac{a+1}{a}\tau + 107$, or
  \item the sim-width of $G$ is at~least $\tau + \gamma + 1$,
  \end{compactitem}
  deciding which of the two outcomes holds is \NP-hard.
\end{theorem}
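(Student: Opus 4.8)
The plan is to merge the three reductions of the paper into a single polynomial-time reduction from \textsc{Positive 4-Occ Nae 3-Sat}, which is \NP-hard~\cite{Darmann20}. Given a formula $\varphi$, set $H:=H(\varphi)$ with all edge weights scaled to multiples of $a$ (as arranged in the previous section), build the partitioned graph $(G,\mathcal S):=(G(H),\mathcal S(H))$ of \cref{sec:degree-to-matching}, and finally build $G^*$ from $(G,\mathcal S)$ by replacing each part $S(u)$ with the gadget $\gad(u)$ described above. By \cref{lem:sat-to-bo} and the proof of \cref{thm:tdb}: if $\varphi$ is satisfiable then $H$ admits a $\tau$-balancing order, and if $\varphi$ is unsatisfiable then $H$ admits no $(\tau+\gamma)$-balancing tree (the weight padding forces every $(\tau+\gamma)$-balancing tree of $H$ to be a path, which would be a $(\tau+\gamma)$-balancing order, contradicting \cref{lem:bo-to-sat}). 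It then remains to tie the widths of $G^*$ to the balancing parameters of $(G,\mathcal S)$, which is what the two subsections below do; we outline both directions.

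\emph{Completeness.} Suppose $\varphi$ is satisfiable, so $H$ has a $\tau$-balancing order. By \cref{lem:balanced2sim-width}, $(G,\mathcal S)$ then admits a path mapping $(P,f)$ of mim-value at most $\tau+50$ in which, across every induced cut, the (at most $\tau$) matching edges of any semi-induced matching are all covered by a single part $S(u)$. I would turn $(P,f)$ into a linear layout of $G^*$ by expanding each node $f(S(u))$ into the spine of $\gad(u)$, threading the copies of $S(u)$ in the order prescribed by $P$. Every cut of the resulting layout is either an \emph{inter-gadget} cut, which is essentially a cut of $(P,f)$ whose semi-induced matching number it exceeds by only a bounded additive constant (gadget-internal edges straddling the cut), or an \emph{intra-gadget} cut splitting the spine of some $\gad(u)$, where the near-clique adjacencies among the $b$ copies of $S(u)$ allow at most one semi-induced matching edge per pair of copies, while the $1$-subdivided-path structure caps the remaining edges by roughly $\frac{a+1}{a}$ times the weight of $u$ crossing the cut, which is at most $\tau$. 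Maximising over all cuts, the linear mim-width of $G^*$ is at most $\frac{a+1}{a}\tau+107$.

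\emph{Soundness.} Suppose $\varphi$ is unsatisfiable and, for contradiction, that the sim-width of $G^*$ is at most $\tau+\gamma$, witnessed by a layout $(T,h)$. I would run the ``solidification'' process: going through the vertices $u\in V(H)$ one by one, use the dense path-like structure of $\gad(u)$---in particular a pigeonhole over its $b=6\tau(\tau+\gamma)+1$ copies of $S(u)$---to locate an edge $e_u$ of the current tree such that both sides of the cut it induces fully contain some copy of $S(u)$, and such that the whole of $V(\gad(u))$, hence $S(u)$, can be relocated next to a subdivision vertex of $e_u$ without raising the sim-value above $\tau+\gamma$. Once every part has been solidified at a single node, cleaning up the tree yields a tree mapping of $(G,\mathcal S)$ of sim-value at most $\tau+\gamma$; \cref{lem:simwidth2balanced} then produces a $(\tau+\gamma)$-balancing tree of $H$, contradicting unsatisfiability. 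Hence the sim-width of $G^*$ is at least $\tau+\gamma+1$.

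The two outcomes are mutually exclusive, since $\frac{a+1}{a}\tau+107=1211<1216=\tau+\gamma+1$ for the chosen constants; so the construction realises the promised dichotomy, and deciding which side holds is \NP-hard. The crux---and the step I expect to be the main obstacle---is the soundness direction: one must show that in every layout of $G^*$ of small sim-value each gadget $\gad(u)$ can be \emph{sequentially} solidified without any cut blowing up, i.e.\ that the successive relocations do not destructively interfere (this is precisely why $b$ must be taken so large, and why the copies are glued with near-clique adjacencies). By contrast, the completeness direction is mostly careful bookkeeping once $\gad(u)$ is fixed, the only delicate point being the $\frac{a+1}{a}$ blow-up coming from the subdivided path inside $\gad(u)$.
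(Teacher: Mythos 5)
Your proposal follows essentially the same route as the paper: the same three-step composition ($\varphi \to H(\varphi) \to (G,\mathcal S) \to G^*$), completeness by threading the gadget spines along the path mapping given by \cref{lem:balanced2sim-width}, and soundness by iteratively relocating each gadget $\mathcal G(u)$ onto a subdivided edge whose cut has a full copy of $P_u$ on both sides, then invoking \cref{lem:simwidth2balanced} and the padding argument behind \cref{thm:tdb}. The only caveat is that your accounting for intra-gadget cuts is looser than the paper's---the bounded internal contribution comes from a vertex of an uncut copy having at most six non-neighbours across the cut inside $\mathcal G(u)$ (not from a per-pair-of-copies count), and the claim that a single part covers the crossing matching edges uses triangle-freeness of $H$---but these are details of the same argument rather than a different approach.
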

One can indeed check that with the announced values for $\tau, \gamma, a$, we have $\frac{a+1}{a}\tau + 107=1211$ and $\tau + \gamma + 1=1216$.

\subsection{Encoding \textsc{Mim/Sim-Balancing} in \textsc{Mim/Sim-Width}}

We start with the description of a~gadget for each vertex of~$H$.

\medskip

\textbf{Construction of $\bm{\gad(u)}$.}
For each vertex $u \in V(H)$, the \emph{gadget of $u$}, denoted by $\gad(u)$, is a~graph spanned by a~path $Q_u$ of length $2b \cdot |S(u)|$ made by \emph{concatenating} $b$ copies of a~path~$P_u$.
The path $P_u$ is built as follows.
Recall that in the graph $G$, the set $S(u)$ partitions into $I(u, v_1) \uplus \dots \uplus I(u, v_k)$ where $\{v_1, \dots, v_k\} = N_H(u)$.
Since all weights are multiples of~$a$, $|I(u, v)|$ is a multiple of $a$ for any edge $uv \in E(H)$.
Hence we can write each $I(u, v)$ as a~disjoint union $I(u, v, 1) \uplus \dots \uplus I(u, v, a)$ where each $I(u, v, i)$ has size $\frac{|I(u, v)|}{a}$.

We construct the path $L_i$ whose vertex set is $I(u, v_1, i) \cup I(u, v_2, i) \cup \dots \cup I(u, v_k, i)$, and whose vertices occur in this order along $L_i$.
We define $L$ as the concatenation $L_1 L_2\dots L_a$, i.e., the last vertex of~$L_i$ is made adjacent to the first vertex of~$L_{i+1}$, for every $i \in [a-1]$.
The path $P_u$ is obtained from the 1-subdivision of $L$ by adding a~vertex adjacent to the last vertex of $L_a$; see~\cref{fig:Pu}.

\begin{figure}[h!]
  \centering
  \begin{tikzpicture}[
	vertex/.style={draw, circle, minimum size=3pt, inner sep=0pt},
	subd/.style={draw, circle, minimum size=2pt, fill, inner sep=0pt},
	matching/.style={blue, very thick},
	dummy/.style={very thick},
	]
	\def\s{0.35}
	\def\h{1}
	\def\n{12}
	\def\a{3}
	
	% subdivided path, a\i are original vertices, b\i subdivisions
	\pgfmathtruncatemacro\p{\n * \a}
	\foreach \i in {1,...,\p}{
		\node[vertex] (a\i) at (\i * \s, 0) {} ;
		\node[subd] (b\i) at (\i * \s + \s / 2, 0) {} ;
		\draw (a\i) -- (b\i) ;
	}
	
	\pgfmathtruncatemacro\pm{\p - 1}
	\foreach \i [count = \ip from 2] in {1,...,\pm}{
		\draw (b\i) -- (a\ip) ;
	}
	
	% different constituting blocks
	\foreach \z in {1,...,\a}{
		\foreach \i/\j/\k/\l/\c in {1/4/1/v/yellow,5/7/2/v_1/orange,8/9/3/v_2/red,10/12/4/v_3/purple}{
			\pgfmathtruncatemacro\ii{(\z - 1) * \n + \i}
			\pgfmathtruncatemacro\jj{(\z - 1) * \n + \j}
			\node[draw, rounded corners, inner sep=1.75pt, fit=(a\ii) (a\jj), fill opacity=0.2, fill=\c] (I\k-\z) {} ;
		}
		\pgfmathsetmacro\yf{(\z - 1) * \n + 1}
		\pgfmathsetmacro\yl{(\z - 1) * \n + \n + 0.5}
		\draw[stealth-stealth] (\yf * \s - 0.1, -0.25) to node[midway,below] {$L_\z$} (\yl * \s - 0.05, -0.25) ;
		
		\node at (2.5 * \s + \z * \n * \s - \n * \s, 0.4) {$I(u,v_1,\z)$} ;
	}
	
\end{tikzpicture}
  \caption{The path $P_u$ for a~vertex $u$ with four neighbors $v_1, v_2, v_3, v_4$, and $a = 3$.
    The sizes of $I(u,v_1)$, $I(u,v_2)$, $I(u,v_3)$, $I(u,v_4)$ are 12, 9, 6, 9, respectively; all divisible by~$a$.
    The labels $I(u,v_1,\bullet)$ and $L_{\bullet}$ refer to the white vertices, while $P_u$ also comprises the subdivision vertices in black.}
  \label{fig:Pu}
\end{figure}
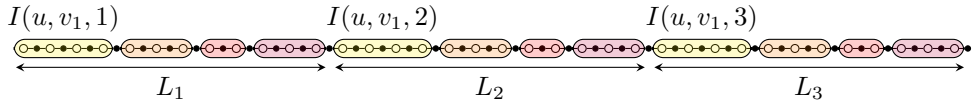

\def\copies{\mathsf{Copies}}

We obtain the path $Q_u$ by concatenating $b$ copies  $P_u^1,P_u^2, \dots, P_u^b$ of $P_u$. 
Note that each vertex $x$ of $P_u$ has $b$ copies $x_1,\dots,x_b$ in $Q_u$; for each $y\in \{x,x_1,\dots,x_b\}$, we denote by $\copies(y)$ the set $\{x_1,\dots,x_b\}$.
The gadget $\gad(u)$ is obtained from $Q_u$ by adding an edge between every pair of vertices $x, y$ in two distinct $P_u^i, P_u^j$ except if $y$ is in $N_{Q_u}[\copies(x)]$; see~\cref{fig:Qu}.

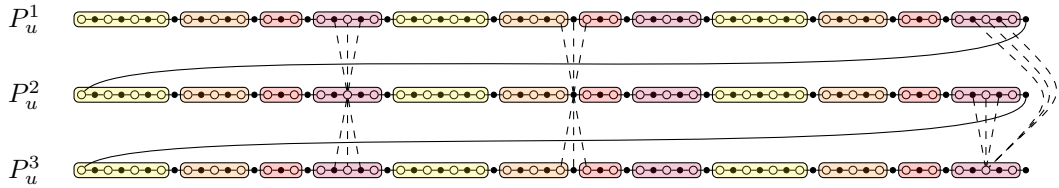
\begin{figure}[h!]
  \centering
  \begin{tikzpicture}[
	vertex/.style={draw, circle, minimum size=3pt, inner sep=0pt},
	subd/.style={draw, circle, minimum size=2pt, fill, inner sep=0pt},
	matching/.style={blue, very thick},
	dummy/.style={very thick},
	]
	\def\sc{1}
	\def\s{0.35}
	\def\h{1}
	\def\n{12}
	\def\a{3}
	\def\b{3}
	
	\foreach \r in {1,...,\b}{
		\pgfmathsetmacro\xs{1.58 * \b * \r / \sc}
		
			% subdivided path, a\i are original vertices, b\i subdivisions
			\pgfmathtruncatemacro\p{\n * \a}
			\foreach \i in {1,...,\p}{
				\node[vertex] (a\r-\i) at (\i * \s, -\r +1) {} ;
				\node[subd] (b\r-\i) at (\i * \s + \s / 2, -\r +1) {} ;
				\draw (a\r-\i) -- (b\r-\i) ;
			}
			
			\pgfmathtruncatemacro\pm{\p - 1}
			\foreach \i [count = \ip from 2] in {1,...,\pm}{
				\draw (b\r-\i) -- (a\r-\ip) ;
			}
			
			% different constituting blocks
			\foreach \z in {1,...,\a}{
				\foreach \i/\j/\k/\l/\c in {1/4/1/v/yellow,5/7/2/v_1/orange,8/9/3/v_2/red,10/12/4/v_3/purple}{
					\pgfmathtruncatemacro\ii{(\z - 1) * \n + \i}
					\pgfmathtruncatemacro\jj{(\z - 1) * \n + \j}
					\node[draw, rounded corners=1.8pt, inner sep=1.1pt, fit=(a\r-\ii) (a\r-\jj), fill opacity=0.2, fill=\c] (I\r-\k-\z) {} ;
				}
			}
			
				\node at (-0.4, -\r +1) {$P_u^\r$} ;

	}
	
	% path edges between different Pu's
	\draw (b1-36) to[out=-90,in=50, looseness = 0.22] (a2-1) ;
	\draw (b2-36) to[out=-90,in=45, looseness = 0.2] (a3-1) ;
	
	% examples of intra Pu's non-edges
	\foreach \i in {b1-10, a1-11, b1-11, b3-10, a3-11, b3-11}{
		\draw[dashed, black] (a2-11) to (\i) ;
	}

	\foreach \i in {a1-19, b1-19, a1-20, a3-19, b3-19, a3-20}{
		\draw[dashed, black] (\i) to  (b2-19) ;
	}
	
	\foreach \i in {b2-34, a2-35, b2-35}{
		\draw[dashed, black] (\i) to (a3-35);
	}
	\foreach \i in {b1-34, a1-35, b1-35}{
		\draw[dashed, black]  (\i) to[out=-45,in=45, looseness = 2] (a3-35) ;
	}
\end{tikzpicture}
  \caption{The gadget $\gad(u)$ for the path $P_u$ of~\cref{fig:Pu} and $b = 3$.
    We drew the non-edges between distinct copies of $P_u$ (dashed edges) incident to only three vertices (one subdivision vertex in $P_u^2$, and two regular vertices in $P_u^2$ and $P_u^3$).
  Each path $P_u^i$ remains induced in~$\gad(u)$.}
  \label{fig:Qu}
\end{figure}

%The ordering of the vertices of $S(u)$ on $P_u$ is useful to prove that if $H$ has $\tau$-balancing order $\prec$, then $G^*$ has linear mim-width at most $\frac{a+1}{a}\tau + 107$.
%If $u_1,\dots,u_n$ are the vertices of $H$ ordered by $\prec$, then we use the caterpillar layout of $G^*$ whose leaves are ordered according to the concatenation of the path $Q_{u_1},\dots,Q_{u_n}$. 
%Take a cut $(A,B)$ induced by an edge $e$ such that $P_u^i$ has vertices in both $A$ and $B$. 
%For each $X\in \{A,B\}$, let $M_X$ be a semi-induced matching between $X\cap V(P_u^i)$ and $V(G^*)\setminus X$ made of matching edges.
%The sizes of $M_A$ and $M_B$ could be $\tau$. 
%If the order of the vertices of $S(u)$ on $P_u$ was arbitrary, then, for each $X\in \{A,B\}$ and each $\gad(v)$ contained in $X$, the copy of $I(u,v)$ could be in $V(G^*)\setminus X%But with the distribution trick, we can prove that $M_A\cup M_B$ has at most $\frac{a+1}{a}\tau$ edges between $A$ and $B$.
%We thus manage to upper bound the mim-value of the caterpillar layout by $\frac{a+1}{a}\tau + 107$.

\medskip

\textbf{Construction of $\bm{G^*}$.}
Finally, we construct $G^*$ as follows (based on the vertex set of~$H$, and the edge set of~$G$).
For each vertex $u \in V(H)$, we add a~gadget $\gad(u)$ to $G^*$.  
For every edge $xy \in E(G)$, we add the biclique between $\copies(x)$ and $\copies(y)$ in $G^*$.
If $xy$ is a~matching edge of $G$, the added edges are also said \emph{matching} (see~\cref{fig:gad(u)-adjacencies-matching}).
Similarly if $xy$ is a~dummy edge, we call the added edges \emph{dummy} (see~\cref{fig:gad(u)-adjacencies-dummy}).

\begin{figure}[h!]
  \centering
  \begin{tikzpicture}[
	vertex/.style={draw, circle, minimum size=3pt, inner sep=0pt},
	subd/.style={draw, circle, minimum size=2pt, fill, inner sep=0pt},
	matching/.style={blue, very thick},
	dummy/.style={very thick},
	]
	\def\sc{0.3}
	\def\s{0.4}
	\def\h{1}
	\def\n{12}
	\def\a{3}
	\def\b{3}
	\def\vs{4}
	
	\foreach \r in {1,...,\b}{
		\pgfmathsetmacro\xs{1.58 * \b * \r / \sc}
		
		\begin{scope}[scale=\sc, transform shape, xshift= \xs cm]
			% subdivided path, a\i are original vertices, b\i subdivisions
			\pgfmathtruncatemacro\p{\n * \a}
			\foreach \i in {1,...,\p}{
				\node[vertex] (a\r-\i) at (\i * \s, 0) {} ;
				\node[subd] (b\r-\i) at (\i * \s + \s / 2, 0) {} ;
				\draw (a\r-\i) -- (b\r-\i) ;
			}
			
			\pgfmathtruncatemacro\pm{\p - 1}
			\foreach \i [count = \ip from 2] in {1,...,\pm}{
				\draw (b\r-\i) -- (a\r-\ip) ;
			}
			
			% different constituting blocks
			\foreach \z in {1,...,\a}{
				\foreach \i/\j/\k/\l/\c in {1/4/1/v/yellow,5/7/2/v_1/orange,8/9/3/v_2/red,10/12/4/v_3/purple}{
					\pgfmathtruncatemacro\ii{(\z - 1) * \n + \i}
					\pgfmathtruncatemacro\jj{(\z - 1) * \n + \j}
					\node[draw, rounded corners=1.8pt, inner sep=1.1pt, fit=(a\r-\ii) (a\r-\jj), fill opacity=0.2, fill=\c] (I\r-\k-\z) {} ;
				}
			}
		\end{scope}
		
		\node at (\xs * \sc + 0.48 * \n * \s, -0.4) {$P_u^\r$} ;
	}
	
	% path edges between different Pu's
	\draw (b1-36) -- (a2-1) ;
	\draw (b2-36) -- (a3-1) ;
	
	% Qv
	\foreach \r in {1,...,\b}{
		\pgfmathsetmacro\xs{1.58 * \b * \r / \sc}
		
		\begin{scope}[scale=\sc, transform shape, xshift= \xs cm]
			% subdivided path, a\i are original vertices, b\i subdivisions
			\pgfmathtruncatemacro\p{\n * \a}
			\foreach \i in {1,...,\p}{
				\node[vertex] (c\r-\i) at (\i * \s, \vs / \sc) {} ;
				\node[subd] (d\r-\i) at (\i * \s + \s / 2, \vs / \sc) {} ;
				\draw (c\r-\i) -- (d\r-\i) ;
			}
			
			\pgfmathtruncatemacro\pm{\p - 1}
			\foreach \i [count = \ip from 2] in {1,...,\pm}{
				\draw (d\r-\i) -- (c\r-\ip) ;
			}
			
			% different constituting blocks
			\foreach \z in {1,...,\a}{
				\foreach \i/\j/\k/\l/\c in {1/3/1/v/green!95!blue,4/7/2/v_1/green!65!blue,8/10/3/v_2/green!35!blue,11/12/4/v_3/green!5!blue}{
					\pgfmathtruncatemacro\ii{(\z - 1) * \n + \i}
					\pgfmathtruncatemacro\jj{(\z - 1) * \n + \j}
					\node[draw, rounded corners=1.8pt, inner sep=1.1pt, fit=(c\r-\ii) (c\r-\jj), fill opacity=0.2, fill=\c] (J\r-\k-\z) {} ;
				}
			}
		\end{scope}
		
		\node at (\xs * \sc + 0.48 * \n * \s, \vs + 0.4) {$P_v^\r$} ;
	}
	
	% path edges between different Pu's
	\draw (d1-36) -- (c2-1) ;
	\draw (d2-36) -- (c3-1) ;

	% matching edges
	\foreach \y in {1,...,\b}{
		\foreach \z in {1,...,\b}{
			\foreach \i in {1,...,4,13,14,...,16,25,26,...,28}{
				\pgfmathtruncatemacro\j{\i+3}
				\draw[thin, blue] (a\y-\i) -- (c\z-\j) ;
			}
		}
	}
\end{tikzpicture}
  \caption{The matching edges between $\gad(u)$ and $\gad(v)$ (with $u$ and $v$ two adjacent vertices in~$H$).}
  \label{fig:gad(u)-adjacencies-matching}
\end{figure}
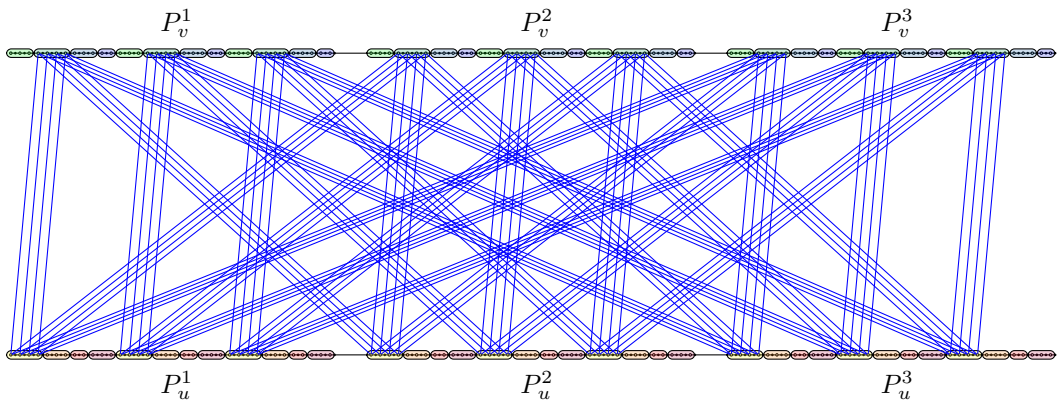
\begin{figure}[h!]
  \centering
  \begin{tikzpicture}[
	vertex/.style={draw, circle, minimum size=3pt, inner sep=0pt},
	subd/.style={draw, circle, minimum size=2pt, fill, inner sep=0pt},
	matching/.style={blue, very thick},
	dummy/.style={very thick},
	]
	\def\sc{0.3}
	\def\s{0.4}
	\def\h{1}
	\def\n{12}
	\def\a{3}
	\def\b{3}
	\def\vs{4}
	
	\foreach \r in {1,...,\b}{
		\pgfmathsetmacro\xs{1.58 * \b * \r / \sc}
		
		\begin{scope}[scale=\sc, transform shape, xshift= \xs cm]
			% subdivided path, a\i are original vertices, b\i subdivisions
			\pgfmathtruncatemacro\p{\n * \a}
			\foreach \i in {1,...,\p}{
				\node[vertex] (a\r-\i) at (\i * \s, 0) {} ;
				\node[subd] (b\r-\i) at (\i * \s + \s / 2, 0) {} ;
				\draw (a\r-\i) -- (b\r-\i) ;
			}
			
			\pgfmathtruncatemacro\pm{\p - 1}
			\foreach \i [count = \ip from 2] in {1,...,\pm}{
				\draw (b\r-\i) -- (a\r-\ip) ;
			}
			
			% different constituting blocks
			\foreach \z in {1,...,\a}{
				\foreach \i/\j/\k/\l/\c in {1/4/1/v/yellow,5/7/2/v_1/orange,8/9/3/v_2/red,10/12/4/v_3/purple}{
					\pgfmathtruncatemacro\ii{(\z - 1) * \n + \i}
					\pgfmathtruncatemacro\jj{(\z - 1) * \n + \j}
					\node[draw, rounded corners=1.8pt, inner sep=1.1pt, fit=(a\r-\ii) (a\r-\jj), fill opacity=0.2, fill=\c] (I\r-\k-\z) {} ;
				}
			}
		\end{scope}
		
		\node at (\xs * \sc + 0.48 * \n * \s, -0.4) {$P_u^\r$} ;
	}
	
	% path edges between different Pu's
	\draw (b1-36) -- (a2-1) ;
	\draw (b2-36) -- (a3-1) ;
	
	% Qv
	\foreach \r in {1,...,\b}{
		\pgfmathsetmacro\xs{1.58 * \b * \r / \sc}
		
		\begin{scope}[scale=\sc, transform shape, xshift= \xs cm]
			% subdivided path, a\i are original vertices, b\i subdivisions
			\pgfmathtruncatemacro\p{\n * \a}
			\foreach \i in {1,...,\p}{
				\node[vertex] (c\r-\i) at (\i * \s, \vs / \sc) {} ;
				\node[subd] (d\r-\i) at (\i * \s + \s / 2, \vs / \sc) {} ;
				\draw (c\r-\i) -- (d\r-\i) ;
			}
			
			\pgfmathtruncatemacro\pm{\p - 1}
			\foreach \i [count = \ip from 2] in {1,...,\pm}{
				\draw (d\r-\i) -- (c\r-\ip) ;
			}
			
			% different constituting blocks
			\foreach \z in {1,...,\a}{
				\foreach \i/\j/\k/\l/\c in {1/3/1/v/green!95!blue,4/7/2/v_1/green!65!blue,8/10/3/v_2/green!35!blue,11/12/4/v_3/green!5!blue}{
					\pgfmathtruncatemacro\ii{(\z - 1) * \n + \i}
					\pgfmathtruncatemacro\jj{(\z - 1) * \n + \j}
					\node[draw, rounded corners=1.8pt, inner sep=1.1pt, fit=(c\r-\ii) (c\r-\jj), fill opacity=0.2, fill=\c] (J\r-\k-\z) {} ;
				}
			}
		\end{scope}
		
		\node at (\xs * \sc + 0.48 * \n * \s, \vs + 0.4) {$P_v^\r$} ;
	}
	
	% path edges between different Pu's
	\draw (d1-36) -- (c2-1) ;
	\draw (d2-36) -- (c3-1) ;

	% dummy edges
	\foreach \x in {1,...,\b}{
		\foreach \y in {1,...,\a}{
			\foreach \z in {1,...,\a}{
				\foreach \i/\j in {2/1,2/3,2/4, 3/1,3/3,3/4, 4/1,4/3,4/4}{
					\draw[thin] (I2-\i-\y) -- (J\x-\j-\z) ;
				}
			}
		}
	}
\end{tikzpicture}
  \caption{The dummy edges between $\gad(u)$ and $\gad(v)$ of~\cref{fig:gad(u)-adjacencies-matching}.
  An edge between two rounded boxes represents a~biclique between the corresponding vertices of $I(\bullet,\bullet,\bullet)$ (which excludes the subdivision vertices).
  We only represented the bicliques incident to $P_u^2$ ($P_u^1$ and $P_u^3$ have the same adjacencies toward~$\gad(v)$). }
  \label{fig:gad(u)-adjacencies-dummy}
\end{figure}
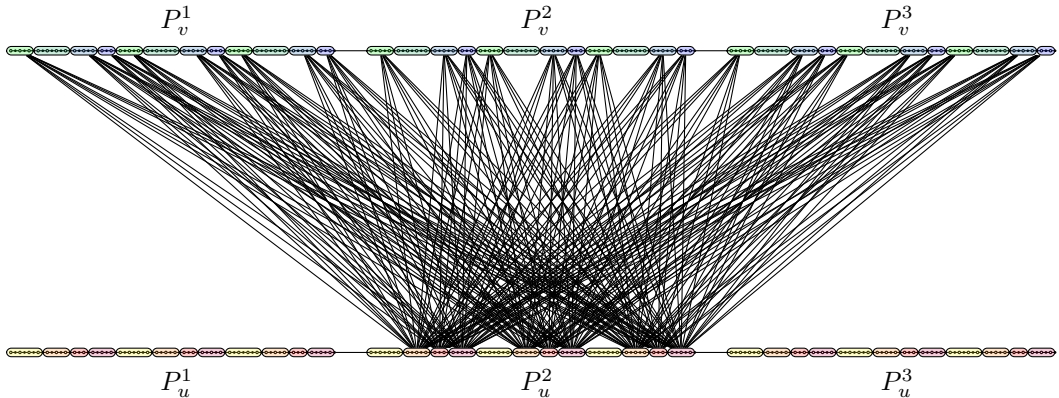

\subsection{Low linear mim-balancing of~$(G,\mathcal P)$ $\Rightarrow$ low linear-mim width of~$G^*$}\label{sec:lin-mim-balancing-to-lin-mim-width}

To upper bound the linear mim-width of $G^*$, we need the next lemma on~$\gad(u)$.
For each vertex $u$ of $H$, we define the \emph{caterpillar layout} of~$\gad(u)$ as the \emph{left-aligned} caterpillar (i.e, such that every right child is a~leaf) layout with $|V(\gad(u))|$ leaves bijectively labeled by $V(\gad(u))$, in the order of~$Q_u$ from the first vertex of $P_u^1$ to the last vertex of $P_u^b$.

\begin{lemma}\label{lem:mim-value:gadget}
Let $u$ be a vertex of $H$, and $(C,f)$ be the caterpillar layout of $\gad(u)$. 
For every cut $(A,B)$ of $\gad(u)$ induced by an edge of $C$, the mim-value of $(A,B)$ is at most 7.
\end{lemma}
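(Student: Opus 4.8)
The plan is to reduce the statement to counting induced matching edges across the \emph{prefix cuts} of the linear order underlying the caterpillar, and then to exploit the very rigid adjacency pattern of $\gad(u)$. First I would note that an edge of $C$ induces either a cut isolating a single vertex of $\gad(u)$ (mim-value at most $1$, done), or, for a spine edge, a cut $(A,B)$ where $A$ is an initial segment and $B$ the complementary final segment of $V(Q_u)$ in the order used to label the leaves. So it suffices to prove $\mim_{\gad(u)}(A,B)\le 7$ for every such $(A,B)$. To analyse these, I would use coordinates: each vertex of $\gad(u)$ is a pair $(x,j)$ with $x\in V(P_u)$ and $j\in[b]$ its copy, and I write $d(x)$ for the index of $x$ along $P_u$. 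The relevant adjacency facts are that within one copy $(x,j)(y,j)\in E(\gad(u))$ iff $xy\in E(P_u)$ (i.e.\ $|d(x)-d(y)|=1$, since $P_u^j$ is induced), and between distinct copies $i\ne j$ one has $(x,i)(y,j)\in E(\gad(u))$ iff $y\notin N_{Q_u}[\copies(x)]$, which away from the two ends of $P_u$ means exactly $|d(x)-d(y)|\ge 2$ (near the ends there is a bounded number of ``wrap-around'' exceptions because consecutive copies are concatenated in $Q_u$). Since each $P_u^j$ occupies a contiguous block of $Q_u$, there is a unique copy index $k$ with $P_u^1,\dots,P_u^{k-1}$ entirely in $A$, $P_u^{k+1},\dots,P_u^b$ entirely in $B$, and $P_u^k$ split at a single vertex of $P_u$; call $d_0$ its index.

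Next I would take a semi-induced matching $M=\{a_1b_1,\dots,a_mb_m\}$ between $A$ and $B$ with $a_i\in A$, and classify its edges by the copies of their endpoints. At most one edge of $M$ has both endpoints in copy $k$, and then it must be the unique $P_u$-edge straddling $d_0$. Every other (``cross-copy'') edge runs between two distinct copies, so $|d(a_i)-d(b_i)|\ge 2$, while for $i\ne j$ the non-edge $a_ib_j$ forces, in the typical case that $a_i$ and $b_j$ lie in distinct copies, $|d(a_i)-d(b_j)|\le 1$. Two consequences drive the count. First, essentially no two cross-copy edges can share the same $P_u$-index on the $A$-side (otherwise $|d(a_i)-d(b_i)|\le 1$, a contradiction), the only exceptions being pairs consisting of one edge with $b$-endpoint in copy $k$ and one with $a$-endpoint in copy $k$; symmetrically on the $B$-side. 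Second, as soon as three cross-copy edges of $M$ have the same ``copy-type'' (all with $a$-endpoint in a copy $<k$, or all with $b$-endpoint in a copy $>k$, or all with $a$-endpoint in copy $k$, etc.), the above inequalities force their $A$-side (resp.\ $B$-side) $P_u$-indices to lie in a window of width $2$; distinctness then caps that type at three edges.

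The conclusion then comes from a short case split on which copy-types actually occur. If some edge of $M$ has $a$-endpoint in a copy $<k$ and $b$-endpoint in a copy $>k$, its $A$- and $B$-indices pull every other endpoint index into a window of width at most $7$ (because every other cross-copy edge must be within $\pm1$ of one of these two indices, or of an endpoint in copy $k$ that is itself within $\pm1$ of them), and distinctness of $A$-side indices inside that window gives $m\le 7$. If instead no such ``straddling'' edge exists, every cross-copy edge has an endpoint in copy $k$, so $M$ consists of (at most three) edges with $a$-endpoint in copy $k$, (at most three) with $b$-endpoint in copy $k$, and the single straddling $P_u$-edge; here one uses the mutual position constraints between these three families — the $a$-in-copy-$k$ indices sit within $\pm1$ of the $b$-in-copy-$k$ indices — to show that whenever all three families are nonempty their sizes are further restricted, and a direct check of the finitely many size profiles never exceeds $7$.

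I expect the main obstacle to be precisely this final bookkeeping: one must enumerate the handful of copy-type combinations that can co-occur, carefully track the wrap-around adjacency exceptions near the first and last vertices of $P_u$ (where ``distance $\ge 2$'' is not the correct cross-copy condition), and verify that each combination is capped at $7$ rather than at a slightly larger constant — the inequalities themselves are elementary, but keeping them all consistent simultaneously is the delicate part.
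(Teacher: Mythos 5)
Your setup---reducing to the prefix cuts of the $Q_u$-order and describing the adjacency of $\mathcal G(u)$ through $P_u$-indices and copy numbers---is correct, but the proof stops exactly where the constant must be produced. The decisive step, the enumeration of copy-type profiles showing the total never exceeds $7$, is only announced (``a direct check \dots never exceeds $7$''), and the intermediate caps you would feed into it are not valid as stated. Across two distinct copies, two vertices with distinct originals $x,y$ are non-adjacent precisely when $xy\in E(P_u)$ \emph{or} $\{x,y\}$ is the pair formed by the first and the last vertex of $P_u$ (this is the wrap-around you flag, caused by the junction edges of $Q_u$ entering the sets $N_{Q_u}[\cdot]$ used in the construction). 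That second alternative breaks both of your key claims: a matched endpoint whose original is the first or last vertex of $P_u$ can satisfy every non-edge constraint while lying far outside your ``window of width $2$'', and the distinctness of $A$-side indices can fail through the same clause; hence the bound of three edges per copy-type, and with it the final bound of $7$, does not follow from the inequalities you write down. Since these exceptional endpoints are precisely the candidates one has to rule out (they are the only vertices non-adjacent to many far-away vertices in other copies), the missing bookkeeping is the actual content of the lemma rather than a routine finish; as written, the argument establishes no explicit constant.

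For comparison, the paper's proof avoids the global profile analysis entirely. At most one edge of the matching is the cut edge $e_{Q_u}$ or lies inside the split copy $P_u^i$, so if $|M|\ge 2$ some matched endpoint $x$ lies on a copy $P_u^j$, $j\neq i$, entirely contained in its side and not incident to $e_{Q_u}$. Every endpoint of $M$ on the other side, except the one matched to $x$, is then a non-neighbor of $x$, and these non-neighbors fall into at most three copy classes (that of $x$ and those of the originals of the $Q_u$-neighbors of $x$): at most three of them lie on $P_u^i$, and outside $P_u^i$ the vertices of a fixed class on the other side have identical neighborhoods on $x$'s side, so an induced matching can use at most one from each class. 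This yields at most $6$ non-neighbors, hence $|M|\le 7$, with no case distinctions and with the first/last-vertex wrap-around handled by arguing with whole copy classes instead of index windows. If you wish to keep your scheme, you need an additional argument bounding the matched endpoints whose originals are the first or last vertex of $P_u$ before your window and distinctness counts can close, and then you must actually carry out the profile check.
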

\begin{proof}
	Let $(A,B)$ be a cut induced by an edge of $C$.
	Since the leaves of $C$ are bijectively mapped $f$ to $V(\gad(u))$ in the order of~$Q_u$, there is exactly one edge $e_{Q_u}$ between $A$ and $B$ that belongs to $Q_u$.
	Let $P_u^i$ be the copy of $P_u$ such that $e_{Q_u}$ is either an edge of $P_u^i$ or the edge between $P_u^{i-1}$ and $P_u^{i}$.
	
	Let $M$ be an induced matching of $\gad(u)[A,B]$ of size at~least 2.
	Let $X$ and $Y$ be the endpoints of the edges of $M$ lying in $A$ and $B$, respectively.
	Observe that there exists at least one vertex in $X\cup Y$ that is neither on $P_u^i$ nor an endpoint of $e_{Q_u}$.
	Indeed, we have two cases to consider:
	\begin{compactitem}
		\item Case 1: $e_{Q_u}$ is the edge between $P_u^{i-1}$ and $P_u^{i}$. Then, $V(P_u^i)$ is fully included in either $A$ or $B$, and since the sizes of $X$ and $Y$ are at least 2, it follows that at least one vertex in $X\cup Y$ is neither on $P_u^i$ nor an endpoint of $e_{Q_u}$.
		\item Case 2: $e_{Q_u}$ is an edge of $P_u^i$. Then, since $M$ contains at~least two edges and $e_{Q_u}$ is the only edge of $P_u^i$ between $A$ and $B$, at least one vertex in $X\cup Y$ is not on $P_u^i$ (and thus not an endpoint of $e_{Q_u}$).
	\end{compactitem}
	We assume, without loss of generality, that $X$ contains a~vertex $x$ that is neither on $P_u^i$ nor an endpoint of $e_{Q_u}$ (we can always swap $X$ and $Y$).
	In the following, we prove that $x$ has at most 6 non-neighbors in $Y$.
        This is sufficient to prove the lemma as it implies that the size of~$Y$, and thus of~$M$, is at most 7; as desired.
	
	Let $P^j_u$ be the copy of $P_u$ containing $x$.
	Observe that we have $i\neq j$ and thus all the vertices of $P_u^j$ belong to $A$.
	We denote by $x^-$ and $x^+$ the neighbors of $x$ in $P_u^j$ (we may have $x^-=x^+$ when $x$ is an endpoint of $P_u^j$).
	Since all the vertices of $P_u^j$ belong to $A$ and $x$ is not incident to $e_{Q_u}$, we have $B \setminus N_{\gad(u)}(x) = B\cap (\copies(x)\cup \copies(x^-)\cup \copies(x^+))$. 
	Since $P_u^i$ contains exactly one copy of each vertex among $x$, $x^-$ and $x^+$, it follows that $x$ has at most 3 non-neighbors in $B\cap V(P_u^i)$ and in particular in $Y\cap V(P_u^i)$.
	It remains to prove that $Y\setminus V(P_u^i)$ contains at most 3 non-neighbors of $x$.
	Observe that for each $y\in \{x,x^-, x^+\}$ and every pair of vertices $w,z$ in $(B\cap \copies(y)) \setminus V(P_u^i)$, we have $N_{\gad(u)}(w) \cap A = N_{\gad(u)}(z)\cap A$ and thus at most one vertex among $w$ and $z$ can be in $Y$.
	We conclude that $x$ has at most 3 non-neighbors in $Y\setminus V(P_u^i)$ and thus $|Y|=|M| \leq 7$.	
\end{proof}

We can now conclude for this direction of the reduction.

\begin{lemma}
If $(H, \omega)$ admits a~$\tau$-balancing order, then the linear mim-width of $G^*$ is at~most $\frac{a+1}{a} \tau + 107$.
\end{lemma}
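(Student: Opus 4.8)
The plan is to lift the low-mim-value path mapping of $(G,\mathcal S)$ provided by \Cref{lem:balanced2sim-width} to a low-linear-mim-width layout of $G^*$, by substituting for each part $S(u)$ the caterpillar layout of its gadget $\gad(u)$. Fix a $\tau$-balancing order $\prec$ of $(H,\weight)$, say $v_1\prec\dots\prec v_n$, and let $(T^*,f^*)$ be the left-aligned caterpillar whose leaves, read from one end to the other, enumerate the vertices of $\gad(v_1)$ in the order of $Q_{v_1}$, then those of $\gad(v_2)$ in the order of $Q_{v_2}$, and so on. This is a valid linear layout of $G^*$, and every cut it induces by a pendant edge is a singleton cut of mim-value at most $1$; so it suffices to bound the mim-value of the cuts $(A,B)$ induced by spine edges. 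Each such cut is a prefix/suffix split of the above enumeration, hence of one of two types: a \emph{clean} cut, where $A=V(\gad(v_1))\cup\dots\cup V(\gad(v_j))$ for some $j$, or a \emph{splitting} cut, where $A=V(\gad(v_1))\cup\dots\cup V(\gad(v_{j-1}))$ together with a proper nonempty prefix of $Q_{v_j}$.

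First I would establish the \emph{transfer property}: letting $\pi$ map each vertex of $G^*$ to the vertex of $G$ it is a copy of, every semi-induced matching $M$ of $G^*$ all of whose edges join two distinct gadgets projects under $\pi$ to a semi-induced matching of $G$ of the same size. Injectivity of $\pi$ on such an $M$ holds because all $b$ copies of a $G$-vertex $x$ inside $\gad(u)$ have the same neighbourhood outside $\gad(u)$, and a copy of $x$ and a copy of $y$ in distinct gadgets are adjacent in $G^*$ precisely when $xy\in E(G)$; consequently $M$ cannot use two copies of the same $G$-vertex on one side going to the other without creating an edge of $G^*$ between two edges of $M$, and the same remarks keep the projection semi-induced. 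For a clean cut, no edge of $M$ can stay inside a single gadget, so the transfer property gives $\mim_{G^*}(A,B)\le\mim_G(\bigcup_{a\le j}S(v_a),\bigcup_{b>j}S(v_b))$, which is the mim-value of a cut of $P$ and thus at most $\tau+50\le\tfrac{a+1}{a}\tau+107$ by \Cref{lem:balanced2sim-width}.

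For a splitting cut, decompose $M$ as $M_0\uplus M_1\uplus M_2^A\uplus M_2^B$, where $M_0$ has both endpoints in $\gad(v_j)$, $M_1$ has neither, $M_2^A$ has one endpoint in $\gad(v_j)$ on the $A$-side, and $M_2^B$ one endpoint in $\gad(v_j)$ on the $B$-side. The cut $(A,B)$ restricted to $\gad(v_j)$ is a cut of the caterpillar layout of $\gad(v_j)$, so $|M_0|\le 7$ by \Cref{lem:mim-value:gadget}. By the transfer property applied to $M\setminus M_0$, the projections $\pi(M_1\cup M_2^A)$ and $\pi(M_1\cup M_2^B)$ are semi-induced matchings of $G$ across the two $\mathcal S$-cuts of $P$ flanking position $j$, so $|M_1|+|M_2^A|\le\tau+50$ and $|M_1|+|M_2^B|\le\tau+50$. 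The remaining task, and the crux, is to control $M$ when $M_1$ is small. Let $P_{v_j}^r$ be the copy of $P_{v_j}$ straddling the cut boundary, so $P_{v_j}^1,\dots,P_{v_j}^{r-1}\subseteq A$ and $P_{v_j}^{r+1},\dots,P_{v_j}^b\subseteq B$. Apart from the $M_0$-endpoints, the $A$-side and $B$-side endpoints of $M$ inside $\gad(v_j)$ are anticomplete in $\gad(v_j)$; since two vertices lying in distinct copies $P_{v_j}^s,P_{v_j}^{s'}$ are adjacent in $\gad(v_j)$ unless one is within $Q_{v_j}$-distance one of a copy of the other, this forces — unless one of these two endpoint sets has size at most $3$, a case already handled by the inequalities above — all these endpoints into the single copy $P_{v_j}^r$. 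Inside $P_{v_j}^r$ they form a prefix/suffix split of the path $P_{v_j}$, whose only segment meeting both sides is the boundary segment $L_q$ of size $|S(v_j)|/a$; the matching edges leaving $I(v_j,\cdot,q)$ toward the parts $S(v_a)$ with $a<j$, respectively $a>j$, number at most $(\text{left weight of }v_j\text{ in }\prec)/a\le\tau/a$, respectively at most $\tau/a$, and the dummy edges are bounded by a constant via \Cref{lem:mim-width-local}. Adding $|M_1|\le\tau+50$, $|M_0|\le7$, and these residual contributions, and plugging in $a=45$, $\tau=1080$, $\gamma=135$, would give $|M|\le\tfrac{a+1}{a}\tau+107=1211$, as claimed.

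I expect the last paragraph to be the main obstacle: turning the qualitative confinement of the semi-induced matching inside the split gadget into the sharp bound $\tau/a+O(1)$ on its residual part. This forces one to juggle the at-most-one-copy-per-$G$-vertex property, the anticompleteness argument (which is where the $b$-fold blow-up of $P_{v_j}$ does its work), and the $a$-fold segmentation of $P_{v_j}$ (which is what converts a full degree budget of $v_j$ into the fraction $\tau/a$, and hence accounts for the factor $\tfrac{a+1}{a}$ in the statement).
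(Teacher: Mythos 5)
Your layout and the easy half of the analysis coincide with the paper's: the concatenation of the caterpillar layouts of the gadgets in the order $\prec$, the transfer of semi-induced matchings across $\mathcal S^*$-cuts to $G$ (this is the paper's Claim~\ref{claim:glue-edges}), the bound $\tau+50$ for clean cuts, the decomposition $M_0\uplus M_1\uplus M_2^A\uplus M_2^B$ with $|M_0|\le 7$ from \cref{lem:mim-value:gadget}, and the two pairwise bounds $|M_1|+|M_2^A|\le\tau+50$, $|M_1|+|M_2^B|\le\tau+50$. The gap is exactly where you flagged it, and it is twofold. First, nothing in your inequalities controls $M_1$ jointly with $M_2^A$ and $M_2^B$: even granting a correct bound $|M_2^A|+|M_2^B|\le\frac{a+1}{a}\tau$, the pairwise bounds still allow $|M_1|$ up to roughly $\tau/2$, giving only $|M|\lesssim\frac32\tau$, far above $\frac{a+1}{a}\tau+107$. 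The paper closes this using the \emph{moreover} part of \cref{lem:balanced2sim-width} lifted through the projection (Claim~\ref{claim:glue-edges}): across each flanking $\mathcal S^*$-cut, the matching edges of $M$ are covered by a \emph{single} gadget, and then — crucially using that $H$ is triangle-free — a single gadget among $\mathcal G(v_A),\mathcal G(v_B),\mathcal G(u_k)$ covers all matching edges of $M'$. In the hard case that gadget is $\mathcal G(u_k)$, which forces the matching part of your $M_1$ to be empty, so the whole $\frac{a+1}{a}\tau$ budget is spent by $M_2^A\cup M_2^B$ alone. Your proposal never invokes this covering property nor triangle-freeness, and without them the estimate cannot close.

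Second, your residual count in the confinement case is incorrect. Once the gadget endpoints of the surviving matching edges are confined to the straddling copy $P_{v_j}^r$, they are \emph{not} confined to the boundary segment $L_q$: the endpoints of $M_2^A$ may lie anywhere in $L_1,\dots,L_q$ and those of $M_2^B$ anywhere in $L_q,\dots,L_a$, so counting only "matching edges leaving $I(v_j,\cdot,q)$" and bounding each side by $\tau/a$ misses most of $M_2^A\cup M_2^B$. The paper instead compares each side with the canonical perfect matchings between $P_{v_j}^r$ and first copies of the neighboring gadgets lying wholly on one side; each such union has at most $\tau$ matching edges (again the moreover part of Claim~\ref{claim:glue-edges}, applied to an $\mathcal S^*$-cut), and the $a$-fold segmentation splits these budgets as $\frac{t}{a}\tau$ and $\frac{a-t+1}{a}\tau$, whose sum $\frac{a+1}{a}\tau$ (the overlap being the boundary segment) is the true source of the factor $\frac{a+1}{a}$ — not a $2\tau/a$ residual added on top of a $|M_1|\le\tau+50$ main term. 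So the step you anticipated as the main obstacle is indeed missing, and the sketch you give for it both omits the covering/triangle-freeness mechanism and rests on an unjustified confinement to $L_q$.
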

\begin{proof}
	Suppose that $(H, \omega)$ admits a $\tau$-balancing order $\prec$.
	By \Cref{lem:balanced2sim-width}, $G$ admits a~\pmap~$(P, f)$ of mim-value at most $\tau + 50$.
	We construct from $\prec$ a caterpillar layout $C$ of $G^*$, and leverage the mim-value of $(P, f)$ to show that the mim-value of~$C$ is at most $\frac{a+1}{a} \tau + 107$.
	
	For every $i \in [|V(H)|]$, we denote by $u_i$ the $i$-th vertex of $H$ along~$\prec$.
%	Recall that the $i$-th vertex of $P$ is mapped by $f$ to $S(u_i)$.
	We denote by~$C_i$ the caterpillar layout of $\gad(u_i)$, and we denote by $S_i$ the spine of~$C_i$.	
	We construct a caterpillar layout $C$ of $G^*$ from the disjoint union of $C_1,\dots,C_{|V(H)|}$ by adding an edge between the last vertex of $S_i$ and the first vertex of $S_{i+1}$ for each $i\in [|V(H)-1]$.
	Let $\mathcal S^* :=\{ V(\gad(u))~:~u\in V(H)\}$.
	Let us recall that an $\mathcal S^*$-cut $(A,B)$ is a cut of $G^*$ such that no gadget $\gad(u)$ has vertices in both $A$ and $B$. 
	Observe that a cut induced by an edge $e$ of $C$ is an $\mathcal S^*$-cut if and only if $e$ is an edge between two caterpillars $C_i$ and $C_{i+1}$.
	
	\begin{claim}\label{claim:glue-edges}
		Every $\mathcal S^*$-cut $(A,B)$ induced by an edge of $C$ has mim-value at most $\tau + 50$.
		Moreover, every semi-induced matching $M$ of $G^*$ between $A$ and $B$ has at most $\tau$ matching edges, and there exists $u\in V(H)$ such that $V(\gad(u))$ covers the matching edges of $M$.
	\end{claim}
	\begin{proofofclaim}		
		Let $(A,B)$ be a $\mathcal S^*$-cut induced by an edge $e$ of $C$.
		Let $i\in [|V(H)|-1]$ such that $e$ is the edge between $C_i$ and $C_{i+1}$.
		We denote by $e_P$ the $i$-th edge of $P$ (recall that $(P,f)$ is a~\pmap of~$G$) and by $(A_P,B_P) := (A_{e_P}, B_{e_P})$ the cut of $G$ induced by $e_P$.
		Let $M$ be a semi-induced matching of $G^*$ between $A$ and $B$.
		
		We claim that $|M| \leq \mim_{G}(A_P,B_P)$.
		Observe that $A_P$ is the union of $S(u_1),\dots,S(u_i)$ and $B_P$ is the union of $S(u_{i+1}),\dots,S(u_{|V(H)|})$.
		Similarly, $A$ is the union of $\gad(u_1),\dots,\gad(u_i)$ and $B$ is the union of $\gad(u_{i+1}),\dots,\gad(u_{|V(H)|})$.
		For every vertex $v$ of $G$, we say that $v$ is the \emph{original} of the vertices of $G^*$ in $\copies(v)$.
		
		Since $(A,B)$ is a $\mathcal S^*$-cut, every edge in $G^*[A,B]$ is between two different gadgets of $G^*$.
		By construction of $G^*$, for every edge $xy$ between two gadgets $\gad(u_i)$ and $\gad(u_j)$, the vertices $x$ and $y$ are the copies of some vertices in $G$. 
		Hence, every endpoint of an edge in~$M$ has an original in~$G$.
		
		By construction of~$G^*$, for all vertices $x$ and $y$ in~$G^*$ with distinct originals $w$ and $z$ in $G$, we have $xy\in E(G^*)$ if and only $wz\in E(G)$.
		Hence, by replacing every edge $xy$ in $M$ by a~pair $\{w,z\}$ where $w$ and $z$ are the originals of $x$ and $y$, respectively, we obtain a~semi-induced matching $|M_P|$ between $A_P$ and $B_P$.
		Hence, we have $ |M| \leq |M_P| \leq \mim_{G}(A_P,B_P)$.	
		As the mim-value of $(P,f)$	is at most $\tau + 50$, we conclude that the size of $M$ is at most $\tau + 50$.
		
		By \Cref{lem:balanced2sim-width}, we know that $M_P$ has at most $\tau$ matching edges and that there exists $u\in V(H)$ such that $S(u)$ covers the matching edges of $M_P$.
		As the copies of the vertices in $S(u)$ are all in $\gad(u)$, the vertices in $\gad(u)$ cover the matching edges of $M$.
	\end{proofofclaim}
	
	Next we deal with the cuts of~$C$ that are not $\mathcal S^*$-cuts.
	Let $(A,B)$ be a cut induced by an edge~$e$ of~$C$ that is not a $\mathcal S^*$-cut.
	Then, there exists $k\in [|V(H)|]$ such that $e$ is the edge of some caterpillar layout $C_k$.
	If a~leaf of~$C_k$ is incident to $e$, then $A$ or $B$ is a singleton (containing exactly one vertex in $\gad(u_k)$) and $\mim_{G^*}(A,B)\leq 1$.
	In the remainder of the proof, we assume that $e$ is an edge from the spine of~$C_k$.
	
	Let $M=\{e_1,\dots,e_m\}$ be a semi-induced matching between $A$ and $B$.
	By \Cref{lem:mim-value:gadget}, we can assume that $M':=\{e_1,\dots,e_{m-7}\}$ contains no edge within $\gad(u_k)$.
	We denote by $A_k$ and $B_k$ the sets of vertices of $\gad(u_k)$ that are respectively in $A$ and $B$.
	Let $M_A$ (resp.~$M_B$) be the sets of edges in $M$ between $A$ and $B \setminus B_k$ (resp.~between $B$ and $A \setminus A_k$).
	Observe that the edges of $M_A$ are traversing the cut $(A\cup B_k, B\setminus B_k)$ which is a $\mathcal S^*$-cut induced by an edge of~$C$.
	Symmetrically, the edges of $M_B$ are also traversing a $\mathcal S^*$-cut induced by an edge of~$C$.
	From \Cref{claim:glue-edges}, for each $X\in \{A,B\}$, we know that $M_X$ has at most $\tau + 50$ edges and at most $\tau$ matching edges, moreover there exists $v_X\in V(H)$ such that $V(\gad(v_X))$ covers the matching edges of $M_X$.
	Let $\widehat M,\widehat M_A$ and $\widehat M_B$ be the sets of matching edges from respectively $M',M_A$ and $M_B$.
	Since $\widehat M = \widehat M_A \cup \widehat M_B$, and we remove at most 50 edges from $M_A$ and $M_B$ to obtain respectively $\widehat M_A$ and $\widehat M_B$, we have 
	\begin{equation}\label{eq:M:hat:M}
		|M| \leq |\widehat M| + 107.
	\end{equation}
	
	\begin{claim}
		There exists $v \in \{v_A,v_B,u_k\}$ such that $V(\gad(v))$ covers $\widehat M$.
	\end{claim}
	\begin{proofofclaim}
		Assume toward a~contradiction that the claim is false.
		As $V(\gad(v_A))$ covers $\widehat M_A$ but not $\widehat M$, it means that there exists at least one edge $e_A$ in $\widehat M\setminus \widehat M_A$ not incident to $V(\gad(v_A))$.
		As $e_A \notin \widehat M_A$, this edge must be in $\widehat M_B$ between $A$ and $B_k$, so it must be covered by $V(\gad(v_B))$. 
		Since $B_k\subseteq V(\gad(u_k))$, $e_A$ is a matching edge between $\gad(v_B)$ and $\gad(u_k)$.
		Symmetrically, we deduce that there is a matching edge between $\gad(v_A)$ and $\gad(u_k)$ (because $V(\gad(v_B))$ covers $\widehat M_B$ but not $\widehat M$).
		Moreover, as $V(\gad(u_k))$ does not cover $\widehat M$, there exists an edge $e_{u_k}$ in $\widehat M$ not adjacent to $\gad(u_k)$.
		As $A_k$ and $B_k$ are subsets of $V(\gad(u_k))$, $e_{u_k}$ must be in $\widehat M_A$ and $\widehat M_B$, so it must be covered by both $V(\gad(v_A))$ and $V(\gad(v_B))$.
		So there is at least one matching edge between every pair of gadgets among $\gad(v_A), \gad(v_B)$ and $\gad(u_k)$.
		From the construction of $G^*$, it means that there is at least one matching edge between $S(v_A)$, $S(v_B)$ and $S(u_k)$.
		Recall that a matching edge in $G$ between two parts of $S(u)$, $S(v)$ in $\mathcal S$ implies that $uv$ is an edge of $H$.
		Consequently, $v_A, v_B$ and $u_k$ induce a triangle in $H$, a contradiction with $H$ being triangle-free.
	\end{proofofclaim}
	
	First, suppose that $v_A\neq u_k$ and that $V(\gad(v_A))$ covers $\widehat M$.
	As $v_A\neq u_k$, there is no edge in $\widehat M$ between $A_k$ and $B$ (such edges would not be covered by $V(\gad(v_A))$).
	So, we have $\widehat M = \widehat M_A$ and since $|\widehat M_A| \leq \tau$, it follows by \Cref{eq:M:hat:M} that $|M| \leq \tau + 107$.
	By symmetry, the above holds also when $v_B\neq u_k$ and  $V(\gad(v_B))$ covers $\widehat M$.
	
	Now, we assume that $V(\gad(u_k))$ covers $\widehat M$.
	We distinguish two cases.
	\begin{itemize}
		\item Case 1: There exists $\ell \in [b]$ such that $V(P_{u_k}^\ell) \subseteq A$ and at least one endpoint $x$ of~$\widehat M$ is on~$P_{u_k}^\ell$. 
		We claim that $\widehat M$ has at most 3 edges with endpoints in~$B_k$.
		Recall that $\widehat M$ has only matching edges, so every edge of $\widehat M$ is between two distinct gadgets.
                In particular, the endpoints of $\widehat M$ in $B_k$ must be non-neighbors of $x$ and adjacent via $\widehat M$ to a vertex in $A\setminus A_k$.
		Since $V(P_{u_k}^\ell) \subseteq A$, the non-neighborhood of $x$ in $B_k$ is exactly $B_k \cap (\copies(x) \cup \copies(x^-) \cup \copies(x^+))$, where $x^-$ and $x^+$ are the neighbors of $x$ in $P_{u_k}^\ell$ (possibly with $x^-=x^+$).
		Thus, the endpoints of $\widehat M$ in $B_k$ must be in $\copies(x) \cup \copies(x^-) \cup \copies(x^+)$.
		However, for each $y\in \{x,x^-,x^+\}$, the vertices in $B_k \cap \copies(y)$ have the same neighborhood in $A\setminus A_k$.
		We deduce that $\widehat M$ has at most $3$ endpoints in $B_k$.
		By removing the edges of $\widehat M$ with an endpoint in $B_k$, we obtain $\widehat M_A$ which contains at most $\tau$ edges.
		Hence, we have $|\widehat M| \leq \tau + 3$ and by \Cref{eq:M:hat:M}, it follows that $|M| \leq \tau + 110$.
		
		\item Case 2: There exists $\ell\in [b]$  such that $P_{u_k}^\ell$ has vertices in both $A$ and $B$, and every endpoint of $\widehat M$ in $V(\gad(u_k))$ is from $P_{u_k}^\ell$.
		Here, we have to use the balanced distribution of the copies of vertices from $S(u_k)$ along the path $P_{u_k}^\ell$. 
		Recall that $\widehat M$ contains only matching edges with one endpoint on $P_{u_k}^\ell$. Thus, for every edge $xy$ of $\widehat M$ with $x$ on $P_{u_k}^\ell$, there exists $v\in N_H(u_k)$ such that $y$ is from $\gad(v)$; in particular, $x$ and $y$ are the copies of vertices in $I(u_k,v)$ and $I(v,u_k)$, respectively.
		In this setting, we call $xy$ a $u_kv$-edge. 
		For each $X\in \{A,B\}$, let $N_H(u_k)^X$ be set of neighbors $v$ of $u_k$ in $H$ such that all the vertices of $\gad(v)$ are in $X$.
		Notice that each edge in $\widehat M_X$ is an $u_kv$-edge with $v\in N_H(u_k)^X$.
		
		For every $u_kv$-edge $xy$ of $\widehat M$ where $y$ is from $\gad(v)$, we assume without loss of generality that the vertex $y$ is from $P_{v}^1$.
		This assumption can be made since the vertices in $\copies(y)$ have the same neighborhood in $G^*[A,B]$, so we can always replace $y$ in $\widehat M$ with its copy in the path $P^1_v$.
		For each $v\in N_H(u_k)$, let $M_{u_kv}$ be the induced matching of size~$\omega(u_kv)$ made of the matching edges between $P_{u_k}^{\ell}$ and $P_{v}^{1}$.
		For each $X\in \{A,B\}$, we define $M_X^\star$ as the union of the matchings $M_{u_kv}$ over the vertices $v\in N_H(u_k)^{X}$.
		By our previous assumption on $\widehat M$, we have $\widehat M_A \subseteq M_A^\star$ and $\widehat M_B \subseteq M_B^\star$.
		As $M_A^\star$ contains only matching edges that traverse the $\mathcal S^*$-cut $(A\setminus A_k, B\cup A_k)$, we know that $|M_A^\star| \leq \tau$ by~\Cref{claim:glue-edges}.
		Symmetrically, We have $|M_B^\star| \leq \tau$.
		
		Recall that from the construction of $P_{u_k}$, the path  $P_{u_k}^\ell$ is the concatenation of $a$ paths $P_1',\dots,P_a'$ such that for each $i\in [a]$ and $v\in N_H(u_k)$, $P_i'$ contains $\frac{\omega(u_kv)}{a}$ endpoints of $M_{u_kv}$.
		Let $t\in [a]$ be such that the first vertex of $P_{u_k}^\ell$ in~$B$ is from $P_t'$.
		Observe that all the vertices from $P_1',\dots,P_{t-1}'$ belong to~$A$ and all the vertices from $P_{t+1}',\dots,P_{a}'$ belong to~$B$.
		Moreover, $P_t'$ is the only path among $P_1',\dots,P_a'$ that can have vertices in both~$A$ and~$B$.
		Consequently, for every $v\in N_H(u_k)$, the endpoints of $M_{u_kv}$ in $A_k$ lie in $P_1',\dots,P_t'$ and those in $B_k$ lie in $P_t',\dots,P_a'$.
		So, $M_{u_kv}$ has at most $\omega(u_kv) t/a$ endpoints in $A_k$ and $\omega(u_kv)(a-t+1)/a$ endpoints in $B_k$.
		We deduce that:
		\begin{itemize}
			\item $M_A$ has at most $(a-t+1)/a$ endpoints in $B_k$,
			\item $M_B$ has at most $t/a$ endpoints in $A_k$.
		\end{itemize}
		As every edge in $\widehat M_A$ is between $A$ and $B_k$, it follows that $|\widehat M_A | \leq \frac{a-t+1}{a} |M_A^\star|$.
		Symmetrically, we have $|\widehat M_B| \leq \frac{t}{a} |M_B^\star|$.
		Since the sizes of $M_A^\star$ and $M_B^\star$ are at most $\tau$, we have
		\begin{equation*}
			|\widehat M| = |\widehat M_A| + |\widehat M_B| \leq \frac{a-t+1}{a} |M_A^\star| + \frac{t}{a} |M_B^\star|   \leq \frac{a+1}{a} \tau.
		\end{equation*}
		By~\Cref{eq:M:hat:M}, we get that $|M| \leq \frac{a+1}{a} \tau + 107$.
	\end{itemize}
	We conclude  that $M$ has at most $\max(\tau + 110,  \frac{a+1}{a} \tau + 107) =  \frac{a+1}{a} \tau + 107$ edges (since $\tau \geqslant 3a$).
\end{proof}

\subsection{Low sim-width of $G^*$ $\Rightarrow$ low sim-balancing of~$(G,\mathcal P)$}\label{sec:sim-width-to-sim-balancing}

%We begin with a simple lemma about the construction, that constraints the possible matchings between different gadgets of $G^*$.
%\begin{lemma}\label{rmk:matching-struct}
%	Let $(A, B)$ be a cut of $G^*$, and let $M$ be a semi-induced (resp. induced) matching between $A$ and $B$ such that every edge of $M$ has its endpoints in different gadgets.
%	Then for any vertex $x \in V(G^*)$, we have $|M \cap A \cap \copies(x)| \le 1$ and $|M \cap B \cap \copies(x)| \le 1$ (resp. $|M \cap \copies(x)| \le 1$).
%\end{lemma}
%\begin{proof}
%	We start by proving the statement for a~semi-induced matching $M$.
%	Assume for the sake of contradiction that $M \cap A \cap \copies(x)$ contains two vertices $a_1$ and $a_2$, and let $a_1b_1$ and $a_2b_2$ be the two edges of $M$ incident to $a_1$ and $a_2$.
%	Recall that the construction of $G^*$ ensures that the every pair of vertices in~$\copies(x)$ have the same neighbor set outside their (common) gadget $\gad(u)$.
%	Since $b_1$ and $b_2$ are not in $V(\gad(u))$, $a_1b_1 \in E(G^*)$ implies that $a_2b_1 \in E(G^*)$; a~contradiction to $M$ being semi-induced.
%	One can symmetrically show that $|M \cap B \cap \copies(x)| \le 1$.
%	
%	Now assuming that $M$ is induced and $M \cap \copies(x)$ contains two vertices say $a_1 \in A$ and $b_2 \in B$, then $b_1$, the vertex matched to $a_1$ in $M$, has to be adjacent to the whole $\copies(x)$, and so to $b_2$; a~contradiction to $M$ being induced.
%\end{proof}

The main argument works as follows.
We will prove that in any tree layout of $G^*$ of small sim-value, one can associate to each gadget $\gad(u)$ an edge $e$ of the layout, such that a~copy of $P_u$ can be found on both sides of the cut defined by $e$.
Since $\gad(u)$ is ``represented'' by any of its copies of $P_u$, we will use $e$ as where the vertices of~$\gad(u)$ should be moved to.
With this in mind, we gradually build a~\tmap of $G$ from a~tree layout of $G^*$ by ``relocating'' each gadget to the edge it is associated to.

We start with two technical lemmas.

\begin{lemma}\label{lem:cut-doesnt-cut}
Let $P_1, \dots, P_k$ be $k$ paths with $k > t \cdot \max_{i \in [k]} |V(P_i)|$ such that for any $i \neq j \in [k]$ and any $\ell \in [\min(|V(P_i)|,|V(P_j)|)-1]$, the $\ell$-th edge of $P_i$ is mutually induced with the $\ell$-th edge of $P_j$.
Then for any cut $(A, B)$ of sim-value at most $t$, there exists at least one path $P_i$ such that $V(P_i)\subseteq A$ or $V(P_i)\subseteq B$.
\end{lemma}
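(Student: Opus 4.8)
The plan is to argue by contradiction via a pigeonhole argument on the positions of crossing edges. Suppose $(A,B)$ is a cut of sim-value at most $t$, yet no path $P_i$ satisfies $V(P_i)\subseteq A$ or $V(P_i)\subseteq B$. Then each $P_i$ has at least one vertex in $A$ and at least one in $B$; since $P_i$ is connected, it contains an edge $f_i$ with one endpoint in $A$ and the other in $B$. Let $\ell_i\in[\,|V(P_i)|-1\,]$ be the position of $f_i$ along $P_i$, where the edges of each $P_i$ are indexed consistently with the indexing underlying the hypothesis (so that ``$\ell$-th edge of $P_i$'' is unambiguous and the assumed mutual inducedness refers to it).

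First I would run the counting step. Set $L:=\max_{i\in[k]}|V(P_i)|$, so every $\ell_i$ lies in $\{1,\dots,L-1\}$. If each value of this set were equal to $\ell_i$ for at most $t$ indices $i$, then $k\le t(L-1)<tL<k$, a contradiction. Hence there is a value $\ell$ and a set $I\subseteq[k]$ with $|I|=t+1$ such that $\ell_i=\ell$ for every $i\in I$. For any $i\ne j$ in $I$ we have $\ell\le|V(P_i)|-1$ and $\ell\le|V(P_j)|-1$, hence $\ell\le\min(|V(P_i)|,|V(P_j)|)-1$, so the hypothesis applies: $f_i$ (the $\ell$-th edge of $P_i$) and $f_j$ (the $\ell$-th edge of $P_j$) are mutually induced. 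In particular the edges $\{f_i:i\in I\}$ are pairwise vertex-disjoint, and no edge of $G$ joins an endpoint of $f_i$ to an endpoint of $f_j$ for distinct $i,j\in I$.

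Then I would conclude that $M:=\{f_i:i\in I\}$ is an induced matching of $G$: it is a matching by pairwise disjointness, and any edge of $G$ with both endpoints in $V(M)$ either coincides with some $f_i$ (when both endpoints lie on the same $f_i$) or joins two distinct $f_i,f_j$ — the latter is excluded by mutual inducedness. Since moreover every $f_i$ has one endpoint in $A$ and one in $B$, $M$ is an induced matching of $G$ between $A$ and $B$ with $|M|=t+1$, so $\ssim_G(A,B)\ge t+1$, contradicting the assumption that $(A,B)$ has sim-value at most $t$. Therefore some $P_i$ lies entirely in $A$ or entirely in $B$.

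\textbf{Expected main obstacle.} There is no serious combinatorial difficulty here: the argument is a pigeonhole step followed by the observation that a set of pairwise mutually-induced edges is a genuine induced matching. The only points needing a moment's care are (i) ensuring that the edge-indexing used to define $\ell_i$ is the same one for which the hypothesis guarantees mutual inducedness, and (ii) the small edge case $t=0$ (where sim-value $\le 0$ means there is no crossing edge at all, so the conclusion is immediate), which the counting above also covers since then $k>0$ forces the contradiction vacuously.
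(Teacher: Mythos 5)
Your proof is correct and follows essentially the same route as the paper: identify one crossing edge per path, pigeonhole on the edge positions to find $t+1$ crossing edges at the same position $\ell$, and use the mutual-inducedness hypothesis to conclude these form an induced matching between $A$ and $B$ of size $t+1$, contradicting sim-value at most $t$. The only difference is presentational (the paper phrases it contrapositively, you as a contradiction, and you spell out the counting and the $t=0$ edge case more explicitly).
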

\begin{proof}
	Let $(A, B)$ be a cut such that every path $P_i$ intersects both $A$ and $B$.
	We claim that the mim-value of $(A,B)$ is at least $t+1$.
	Observe that for each $i \in [k]$, $P_i$ has at least one edge $e_i$ with one endpoint in $A$ and the other in $B$.
	Since $k > t \cdot \max_{i \in [k]} |V(P_i)|$, from the pigeonhole principle, there exists $i_1,\dots,i_{t+1}$ such that the edges $e_{i_1},\dots,e_{i_{t+1}}$ are the $\ell$-th edges of respectively $P_{i_1}, \ldots,P_{i_{t+1}}$ for some $\ell$.
        Thus, by assumption, $e_{i_1},\dots,e_{i_{t+1}}$ form an induced matching, and the sim-value of $(A,B)$ is at~least $t+1$.
\end{proof}

\begin{lemma}\label{lem:tricut-doesnt-cut}
Let $P_1, \dots, P_k$ be $k$ paths with $k > \left \lceil \frac{3t}{2} \right \rceil \cdot \max_{i \in [k]} |V(P_i)|$ such that for any $i \neq j \in [k]$ and any $\ell \in [\min(|V(P_i)|,|V(P_j)|)-1]$, the $\ell$-th edge of $P_i$ is mutually induced with the $\ell$-th edge of $P_j$.
Then for any tripartition $(A, B, C)$ of $V(P_1) \cup \ldots \cup V(P_k)$ such that the cuts $(A,B \cup C)$, $(B,A \cup C)$ and $(C, A \cup B)$ have sim-value at most $t$, there exists at least one path $P_i$ such that $V(P_i)$ is included in one set among $A$, $B$ and $C$.
\end{lemma}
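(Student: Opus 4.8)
The plan is to argue by contradiction, mirroring the proof of \Cref{lem:cut-doesnt-cut} but now splitting the crossing edges three ways. Assume that every path $P_i$ meets at least two of $A$, $B$, $C$; then each $P_i$ contains a \emph{crossing edge}, i.e., an edge of $P_i$ with its two endpoints in distinct parts. For each $i\in[k]$ I would fix one such edge $e_i$, record its position $\ell_i\in[\,|V(P_i)|-1\,]$ along $P_i$, and record the (unordered) pair $\pi_i\in\{\{A,B\},\{A,C\},\{B,C\}\}$ of parts it joins. Write $m:=\max_{i\in[k]}|V(P_i)|$, so all $\ell_i$ lie in $[m-1]$, and for $\ell\in[m-1]$ and a pair $\{X,Y\}$ let $N_\ell^{XY}$ be the number of $i$ with $\ell_i=\ell$ and $\pi_i=\{X,Y\}$. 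By construction $\sum_{\ell\in[m-1]}\sum_{\{X,Y\}}N_\ell^{XY}=k$.

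The first step, and the one I expect to require the most care, is to extract three inequalities from the three sim-value bounds. Fix a part, say $A$, and a position $\ell$. Every chosen edge $e_i$ with $\ell_i=\ell$ and $A\in\pi_i$ has exactly one endpoint in $A$; there are $N_\ell^{AB}+N_\ell^{AC}$ of them; they lie in pairwise distinct paths and all occupy position $\ell$, hence by hypothesis they are pairwise mutually induced and therefore form an induced matching between $A$ and $B\cup C$. As the cut $(A,B\cup C)$ has sim-value at most $t$, this gives $N_\ell^{AB}+N_\ell^{AC}\le t$; symmetrically $N_\ell^{AB}+N_\ell^{BC}\le t$ and $N_\ell^{AC}+N_\ell^{BC}\le t$. (It is essential here that edges joining $A$ to $B$ and edges joining $A$ to $C$ may be pooled together, since both kinds are single-endpoint-in-$A$ edges of the same bipartite graph and all sit at the common position $\ell$, so the mutual-inducedness at equal positions is exactly what is needed.)

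The remaining step is a short count: adding the three inequalities yields $2\bigl(N_\ell^{AB}+N_\ell^{AC}+N_\ell^{BC}\bigr)\le 3t$, so by integrality $N_\ell^{AB}+N_\ell^{AC}+N_\ell^{BC}\le\lfloor 3t/2\rfloor$ for every $\ell\in[m-1]$. Summing over the $m-1$ positions gives $k\le(m-1)\lfloor 3t/2\rfloor\le m\lceil 3t/2\rceil$, contradicting the hypothesis $k>\lceil 3t/2\rceil\cdot m$. Hence some $P_i$ must be contained in a single one of $A$, $B$, $C$. This also explains the constant $\lceil 3t/2\rceil$: it is precisely the per-position budget that survives averaging the three cut bounds.
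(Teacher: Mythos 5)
Your proof is correct and takes essentially the same route as the paper: one crossing edge per path, aligned by position so that the mutual-inducedness hypothesis yields induced matchings, with the pooled bounds $N_\ell^{AB}+N_\ell^{AC}\le t$ (and the two symmetric ones) summed so that each crossing edge is counted twice, giving the $3t/2$ budget. The only difference is presentational: the paper pigeonholes onto a single position and concludes that some cut has sim-value at least $t+1$, whereas you bound every position by $\lfloor 3t/2\rfloor$ and sum over the at most $\max_i|V(P_i)|-1$ positions; both are the same double-counting argument.
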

\begin{proof}
	Let $(A,B,C)$ be a tripartition of $V(P_1) \cup \ldots \cup V(P_k)$ such that every path $P_i$ intersects at~least two sets among $A,B$ and $C$.
	We claim that the sim-value of one cut among $(A,B\cup C)$, $(B,A\cup C)$ and $(C, A\cup B)$ is at least $t+1$.
	Observe that for each $i\in[k]$, $P_i$ has at least one edge $e_i$ whose endpoints lie in different sets among $A,B$ and $C$.
	Let $r := \left \lceil \frac{3t}{2}\right \rceil +1$.
	Since $k > \lceil \frac{3t}{2} \rceil \cdot \max_{i \in [k]} |V(P_i)|$, from the pigeonhole principle, there exists $i_1, \ldots, i_r$ such that the edges $e_{i_1}, \ldots, e_{i_r}$ are the $\ell$-th edges of $P_{i_1}, \ldots, P_{i_{r}}$, respectively, for some $\ell$.
	
	Let $S(A,B), S(A,C)$ and $S(B,C)$ be the sets of edges among $e_{i_1},\dots,e_{i_{r}}$ whose endpoints lie in $A\cup B$, $A\cup C$ and $B\cup C$, respectively.
	As the edges $e_{i_1},\dots,e_{i_{r}}$ form an induced matching, and those in $S(A,B)\cup S(A,C)$ are between $A$ and $B\cup C$, we have $\ssim(A,B\cup C) \geq |S(A,B)|+ |S(A,C)|$.
	Similarly, we have $\ssim(B,A\cup C) \geq |S(A,B)|+ |S(B,C) |$ and $\ssim(C,A\cup B) \geq |S(A,C)|+ |S(B,C) |$.
	It follows that $$\ssim(A,B\cup C) + \ssim(B,A\cup C) + \ssim(C, A\cup B) \geq 2 (|S(A,B)|+  |S(A,C)| + | S(B,C)|).$$
	Since $|S(A,B)|+  |S(A,C)| + |S(B,C)| = r = \lceil \frac{3t}{2} \rceil +1$, we have $$\ssim(A,B\cup C) + \ssim(B,A\cup C) + \ssim(C, A\cup B) \geq 3t +2.$$
	We conclude that the maximum among $\ssim(A,B\cup C), \ssim(B,A\cup C)$ and $\ssim(C, A\cup B)$ is at least $t+1$.
\end{proof}

A \emph{\htree} of a~partitioned graph $(J, \mathcal P)$ is pair $(T, f)$ where $T$ is a tree and $f \colon V(J) \rightarrow V(T)$ is a~map such that for any node $t \in T$, either $f^{-1}(t) \in \mathcal P$ or $|f^{-1}(t)| \le 1$.
As for \tmaps each edge $e \in E(T)$ in a \htree of $(J, \mathcal P)$ defines a cut $(A_e, B_e)$ of $J$: the sets of vertices mapped to each component of $T-e$.
The sim-value of the \htree $(T, f)$ is the maximum sim-value of all possible cuts $(A_e, B_e)$ for $e \in E(T)$.

We recall that $\mathcal S^* = \{V(\gad(u))~:~u \in V(H)\}$.
We observe that in the instances $(H,\weight)$ produced by the first reduction, every vertex has weight at most~$2 \tau$.
Hence $\max_{u \in V(H)} |V(P_u)| \leq 4 \tau$.
We set $\alpha := \lceil \frac{b-1}{6 \tau} \rceil - 1 = \tau + \gamma - 1$, where $b$ is the constant introduced at the beginning of the section.

\begin{lemma}\label{lem:default-edge}
Let $(T, f)$ be a~\htree of $(G^*, \mathcal S^*)$ of sim-value at~most~$\alpha$ and $T$ subcubic.
Then, for any vertex $u$ of $H$, either
\begin{compactitem}
\item there exists $t \in V(T)$ with $f^{-1}(t) = V(\gad(u))$, or
\item there exists an edge $e \in E(T)$ such that in the cut $(A_e, B_e)$ induced by $e$ in $G^*$, one can find a copy of $P_u$ in both $A_e$ and $B_e$.
\end{compactitem}
\end{lemma}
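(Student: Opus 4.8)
The plan is to fix a vertex $u$ of $H$, look at the $b$ copies $P_u^1, \dots, P_u^b$ of $P_u$ sitting inside $\gad(u)$, and show that if the first alternative fails then the second holds. Set $W := \bigcup_{i \in [b]} V(P_u^i)$. The first step is a cheap dichotomy: if $f$ sends two distinct vertices of $\gad(u)$ to a common node~$t$, then $\card{f^{-1}(t)} \geqslant 2$, so the definition of a \htree forces $f^{-1}(t) \in \mathcal S^*$; as the parts of $\mathcal S^*$ are pairwise disjoint and one of them meets $f^{-1}(t)$, we get $f^{-1}(t) = V(\gad(u))$, which is exactly the first alternative. So from now on I assume $f$ is injective on $V(\gad(u))$ — hence the $b$ copies remain pairwise vertex-disjoint induced paths and at most one vertex of~$W$ is mapped to any fixed node of~$T$ — and I aim for the second alternative.

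Next I would record the two facts that make \Cref{lem:cut-doesnt-cut,lem:tricut-doesnt-cut} applicable to $P_u^1, \dots, P_u^b$. First, for $i \neq j$ and every valid index $\ell$, the $\ell$-th edge of $P_u^i$ and the $\ell$-th edge of $P_u^j$ are mutually induced in~$\gad(u)$: this is immediate from the definition of the non-edges of $\gad(u)$, since a vertex $x$ and a copy of a $Q_u$-neighbour of $x$ lying in a different $P_u^\bullet$ are non-adjacent. Second, $G^*$ adds no edge inside $V(\gad(u))$, so an induced matching that lives in $\gad(u)$ is still one in $G^*$; and $\card{V(P_u)} \leqslant 4\tau$. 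Since $b - 1 = 6\tau(\alpha+1)$, both $b$ and $b - 1$ comfortably exceed $\alpha \cdot 4\tau$, and $b - 1$ also exceeds $\lceil \tfrac{3\alpha}{2}\rceil \cdot 4\tau$; these are exactly the thresholds needed to feed the copies into the two lemmas (the cut involved being the restriction to $W$ of an edge-cut of $(T,f)$, of sim-value $\leqslant \alpha$).

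The core of the argument is then by contradiction: assume that no edge $e \in E(T)$ has a full copy of $P_u$ on both sides of $(A_e, B_e)$. Each cut $(A_e, B_e)$ has sim-value $\leqslant \alpha$, so by \Cref{lem:cut-doesnt-cut} at least one copy $P_u^i$ lies entirely on one side; by the contradiction hypothesis the \emph{other} side, call it $F_e$, then contains no full copy of $P_u$. Orienting every edge $e$ of $T$ away from $F_e$ yields (being a finite acyclic orientation of a tree) a sink $t$: a node such that for every component $C$ of $T - t$ the set $f^{-1}(C)$ contains no full copy of $P_u$. Since $T$ is subcubic, $\deg_T(t) \in \{1,2,3\}$. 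The degree-$1$ case is instantly impossible: every copy would need a vertex mapped to $t$, but there are $b \geqslant 2$ disjoint copies and $f$ is injective on~$W$. For degrees $2$ and $3$ I drop one copy $P_u^{i_0}$ (the one using node $t$, if any, else an arbitrary one), leaving $b - 1$ copies whose vertices all avoid $t$. If $\deg_T(t) = 2$, each surviving copy meets both components of $T - t$, hence is split by the edge-cut towards a fixed component, and \Cref{lem:cut-doesnt-cut} then forces that cut to have sim-value $\geqslant \alpha + 1$ — contradicting that $(T,f)$ has sim-value $\leqslant \alpha$. If $\deg_T(t) = 3$, partitioning the surviving $b - 1$ copies along the three components of $T - t$, the three resulting cuts are precisely the restrictions to $W'$ of the three edge-cuts at $t$ and hence have sim-value $\leqslant \alpha$, while no surviving copy lies inside a single part (that would be a full copy of $P_u$ inside some $f^{-1}(C)$) — contradicting \Cref{lem:tricut-doesnt-cut}. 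In every case we reach a contradiction, establishing the second alternative.

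The main obstacle — and the reason $b$ is taken this large — is exactly this degree-$2$/degree-$3$ bookkeeping: a priori a copy of $P_u$ could sit entirely inside ``one component of $T - t$ together with the node $t$'', which is \emph{not} one of the edge-cuts at $t$, so the clean correspondence with \Cref{lem:cut-doesnt-cut,lem:tricut-doesnt-cut} would break. Passing to injectivity on $V(\gad(u))$ guarantees that at most one copy can ever use node $t$, and discarding that single copy restores the correspondence; the slack $b - 1 = 6\tau(\alpha+1) > \lceil \tfrac{3\alpha}{2}\rceil \cdot 4\tau$ is precisely the budget needed for \Cref{lem:tricut-doesnt-cut} to still apply to the remaining $b - 1$ copies.
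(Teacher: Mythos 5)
Your proof is correct and takes essentially the same route as the paper: after the same hybrid-tree dichotomy, you orient each edge of $T$ toward a side containing a whole copy of $P_u$ (justified by \Cref{lem:cut-doesnt-cut}), take a sink of this acyclic orientation, discard the at most one copy meeting the sink's preimage, and contradict the sink property via \Cref{lem:tricut-doesnt-cut}, with the same counting of copies against $\alpha$ and $|V(P_u)|\leq 4\tau$. The only cosmetic difference is that you handle the degree-$1$ and degree-$2$ sinks separately (the latter through \Cref{lem:cut-doesnt-cut}), whereas the paper folds degree $2$ into the tripartition by allowing an empty part.
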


\begin{proof}
Let $u$ be a vertex of $H$.
If there exists a node $t \in V(T)$ such that $f^{-1}(t) = V(\gad(u))$, the statement holds.
Hence we may assume that for any node $t \in V(T)$, $|f^{-1}(t) \cap V(\gad(u))| \le 1$.

We build a~directed graph $\Aux$ whose underlying undirected graph is~$T$, as follows.
For any $x, y \in V(\Aux) := V(T)$, the arc $x \rightarrow y$ is in $E(\Aux)$ whenever $e := xy \in E(T)$ and, letting $(X_e, Y_e)$ be the cut induced by $e$ in $G^*$ with $f(X_e)$ (resp.~$f(Y_e)$) in the component of~$x$ (resp. of~$y$) in $T - e$, it holds that $Y_e$ contains a~copy of $P_u$.
Informally, the arcs of $\Aux$ point toward whole copies of~$P_u$.
We shall then simply show that are $x \neq y \in V(\Aux)$ such that both $x \rightarrow y$ and $y \rightarrow x$ are in $E(\Aux)$. 

By construction of $G^*$, there are $b > 4 \alpha \tau \ge \alpha |V(P_u)|$ copies of $P_u$: $P_u^1, \dots, P_u^b$.
Moreover, observe that for any $i, j, \ell$ with $i \neq j$ the $\ell$-th edges of $P_u^i$ and $P_u^j$ are mutually induced.
Hence \Cref{lem:cut-doesnt-cut} ensures that in any cut $(X, Y)$ of $G^*$, a~copy of $P_u$ is included in~$X$ or in~$Y$.
Thus each edge $xy \in E(T)$ implies that the arc $x \rightarrow y$ or the arc $y \rightarrow x$ is in $E(\Aux)$.

Assume, for the sake of contradiction, no edge $xy \in E(T)$ incurs that both $x \rightarrow y$ and $y \rightarrow x$ are in $E(\Aux)$.
It implies that $\Aux$ is an oriented tree, and thus contains a~sink (i.e., a~vertex with no outneighbors), say~$s$.
Since $T$ is of maximum degree 3, $s$ has at most three neighbors, say $v_1, v_2$ and $v_3$.
If the degree of $s$ is 2 (note that it cannot be less), some $v_i$ may not exist; in which case we set $v_i := s$.

Let us define $(V_1, V_2, V_3)$ the tripartition of $V(G^*) - f^{-1}(s)$ induced by $s$ as follows: $V_1$, $V_2$ and $V_3$ are the subsets of $V(G^*) - f^{-1}(s)$ mapped to the respective components of $v_1$, $v_2$ and $v_3$ in $T - s$ (with $V_i = \emptyset$ if and when $v_i = s$).
However, since $|f^{-1}(s) \cap V(\gad(u))| \leq 1$ there are $b - 1 > 6 \alpha \tau \geq \frac{3}{2} \alpha |V(P_u)|$ copies of $P_u$ lying in $V_1 \cup V_2 \cup V_3$.
%And we still have that for any $i, j, \ell$ with $i \neq j$, the $\ell$-th edge of $P_u^i$ is not adjacent to the $\ell$-th edge of $P_u^j$.
Hence \Cref{lem:tricut-doesnt-cut} ensures that some copy of $P_u$ is included in one of $V_1, V_2, V_3$; a~contradiction to $s$ being a~sink.
\end{proof}

\newcommand{\group}{\text{group}}

We now define a \emph{grouping} operation on triples $(T, f, u)$, where $(T, f)$ is a subcubic \htree of $(G^*, \mathcal S^*)$ of sim-value smaller than $\frac{2(b-1)}{3\max_{u \in V(H)}|V(P_u)}$ (which is still very large compared to $\tau + \gamma$ by definition of $b$) and $u \in V(H)$, and denote it by $\group(T, f, u)$.
If there exists $t \in V(T)$ with $f^{-1}(t) = \gad(u)$, then we set $\group(T, f, u) := (T, f)$.
Otherwise, \Cref{lem:default-edge} ensures that there is an edge $e \in E(T)$ such that a copy of $P_u$ is in both sides of the cut defined by $e$.
We then define $\group(T, f, u) := (T', f')$, where
\begin{compactitem}
  \item $T'$ is obtained from $T$ by subdividing $e$, which adds a~node, say, $t_e$, and
  \item $f'$ satisfies that $f'(x) = f(x)$ whenever $x \not \in V(\gad(u))$, and $f'(x) = t_e$ otherwise.
\end{compactitem}
Given an edge $e' \in E(T')$, the edge \emph{corresponding to} $e'$ in $T$ is $e'$ if $e'$ is an edge of $T$, and $e$ if $e'$ is incident to $t_e$.

We make two observations on the grouping operation.

\begin{observation}\label{obs:subcubic}
For any subcubic \htree $(T, f)$ of $(G^*, \mathcal S^*)$ and any $u \in V(H)$, $\group(T, f, u)$ is a subcubic \htree of $(G^*, \mathcal S^*)$. 
\end{observation}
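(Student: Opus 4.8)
The statement follows by unpacking the definition of the grouping operation and verifying the two defining properties of a subcubic \htree. The plan is to follow the case split in that definition. If there is a node $t \in V(T)$ with $f^{-1}(t) = V(\gad(u))$, then $\group(T,f,u) = (T,f)$ by definition and there is nothing to prove. So the real content lies in the other case, where $\group(T,f,u) = (T',f')$, with $T'$ obtained from $T$ by subdividing the edge $e$ (produced via \Cref{lem:default-edge} in the definition of $\group$) into a fresh node $t_e$, and $f'$ agreeing with $f$ on $V(G^*) \setminus V(\gad(u))$ and sending every vertex of $V(\gad(u))$ to $t_e$.

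First I would argue that $T'$ is a subcubic tree: subdividing one edge of a tree yields a tree, the new node $t_e$ has degree exactly~$2$, and every node of $T$ retains its degree in~$T'$; hence $T'$ has maximum degree at most~$3$. Moreover $f' \colon V(G^*) \to V(T')$ is clearly a well-defined total map.

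It then remains to check that every node of $T'$ has $f'$-preimage either equal to a part of $\mathcal S^*$ or of size at most~$1$. For $t_e$ this is immediate: $(f')^{-1}(t_e) = V(\gad(u)) \in \mathcal S^*$. For a node $t \in V(T)$, since the only vertices whose image changed are those of $V(\gad(u))$, all now mapped to $t_e$, we have $(f')^{-1}(t) = f^{-1}(t) \setminus V(\gad(u))$. As $(T,f)$ is a \htree, either $|f^{-1}(t)| \le 1$, in which case $|(f')^{-1}(t)| \le 1$ as well; or $f^{-1}(t) = V(\gad(w))$ for some $w \in V(H)$. In the latter case $w \ne u$ --- we are in the branch of the definition where no node has preimage exactly $V(\gad(u))$ --- and distinct gadgets of $G^*$ are vertex-disjoint, so $(f')^{-1}(t) = V(\gad(w)) \setminus V(\gad(u)) = V(\gad(w)) \in \mathcal S^*$. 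This exhausts the nodes of $T'$, so $(T',f')$ is a subcubic \htree of $(G^*,\mathcal S^*)$.

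There is no real obstacle in this argument; the only point requiring attention is the bookkeeping in the case analysis --- in particular, invoking the standing assumption of the non-trivial case (no node maps exactly onto $V(\gad(u))$) to rule out $w = u$ when $f^{-1}(t) = V(\gad(w))$, which is precisely what makes the disjointness argument go through.
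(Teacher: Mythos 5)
Your verification is correct and is exactly the routine check the paper intends: the paper states this as an unproved observation, and your argument (subdivision preserves subcubicity, $t_e$ receives precisely $V(\gad(u)) \in \mathcal S^*$, and every old node's preimage either shrinks to size at most~$1$ or is an untouched part $V(\gad(w))$ with $w \neq u$ by gadget disjointness and the case assumption) is the immediate justification. Nothing is missing.
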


\begin{observation}\label{rmk:corresponding-cuts}
Let $(T, f)$ be a~subcubic \htree of $(G^*, \mathcal S^*)$ and $u \in V(H)$.
Let $(T', f') := \group(T, f, u)$, $e \in E(T)$, and $e' \in E(T')$ corresponds to~$e$ in~$T$.
Then if $(A_e, B_e)$ and $(A_{e'}, B_{e'})$ are the cuts of~$G^*$ defined by $e$ and $e'$, respectively, we have
\begin{compactitem}
	\item $A_{e'} \setminus V(\gad(u)) = A_e \setminus V(\gad(u))$, 
	\item $B_{e'} \setminus V(\gad(u)) = B_e \setminus V(\gad(u))$, 
	\item $V(\gad(u)) \subseteq A_{e'}$ implies that $A_e$ contains a copy of $P_u$, and 
        \item $V(\gad(u)) \subseteq B_{e'}$ implies that $B_e$ contains a copy of $P_u$.
\end{compactitem}
\end{observation}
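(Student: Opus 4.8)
The plan is to unfold the definition of $\group(T,f,u)$ and track, case by case and edge by edge, what it does to the cut of $G^*$ induced by a given edge. The whole statement is pure bookkeeping, so the proof will be short; the only things to be careful about are the consistent labeling of the two sides of a cut and the boundary case where the edge we look at is the one being subdivided.

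\textbf{Trivial case.} If some node $t$ of $T$ satisfies $f^{-1}(t)=V(\gad(u))$, then $\group(T,f,u)=(T,f)$, so $T'=T$, $f'=f$, and $e'=e$. The first two bullets are then tautologies, and for the last two: if $V(\gad(u))\subseteq A_{e'}=A_e$, then since $\gad(u)$ contains all $b\ge 1$ copies $P_u^1,\dots,P_u^b$, the set $A_e$ contains a~copy of $P_u$; symmetrically for $B_e$. Hence we may assume we are in the non-trivial case, where $\group$ picks via \Cref{lem:default-edge} an edge $e_0=xy\in E(T)$ such that the cut $(A_{e_0},B_{e_0})$ of $G^*$ has a~copy of $P_u$ on each side, subdivides $e_0$ with a new node $t_{e_0}$ to get $T'$, and lets $f'$ agree with $f$ on $V(G^*)\setminus V(\gad(u))$ and send all of $V(\gad(u))$ to $t_{e_0}$.

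\textbf{Key topological fact.} Write $C_x$ (resp.~$C_y$) for the vertex set of the component of $T-e_0$ containing $x$ (resp.~$y$), so that, labeling suitably, $A_{e_0}=f^{-1}(C_x)$ and $B_{e_0}=f^{-1}(C_y)$. Subdividing $e_0$ does not change, for \emph{any} edge $\hat e\neq e_0$ and any node $w\in V(T)$, the component of $T-\hat e$ (equivalently of $T'-\hat e$) in which $w$ lies; moreover in $T'-\hat e$ the new node $t_{e_0}$ lies in the component of $x$ if $\hat e$ lies on $x$'s side of $e_0$, and in the component of $y$ otherwise. When instead $\hat e$ is one of the two new edges $xt_{e_0}$, $t_{e_0}y$ — the edges of $T'$ \emph{corresponding} to $e_0$ — removing it from $T'$ splits $V(T')$ into $C_x$ and $\{t_{e_0}\}\cup C_y$, or into $\{t_{e_0}\}\cup C_x$ and $C_y$, respectively. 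In every case the two components of $T'-e'$ collapse, after contracting $t_{e_0}$ back onto its edge, exactly onto the two components of $T-e$; we label $(A_{e'},B_{e'})$ so that $A_{e'}$'s component collapses onto $A_e$'s component.

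\textbf{Putting it together.} Fix $e\in E(T)$ and $e'\in E(T')$ corresponding to it, with the labeling above. Since $f'=f$ off $V(\gad(u))$ and $f'$ merely gathers the whole of $V(\gad(u))$ onto the single node $t_{e_0}$ (which sits on one prescribed side of the cut), the key fact immediately gives $A_{e'}\setminus V(\gad(u))=A_e\setminus V(\gad(u))$ and $B_{e'}\setminus V(\gad(u))=B_e\setminus V(\gad(u))$, which are the first two bullets. For the last two: suppose $V(\gad(u))\subseteq A_{e'}$; then $t_{e_0}\in A_{e'}$, so by the key fact the component of $T-e$ defining $A_e$ contains $C_x$ or $C_y$ entirely, hence $A_e\supseteq A_{e_0}$ or $A_e\supseteq B_{e_0}$. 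In either case, by the choice of $e_0$ through \Cref{lem:default-edge}, $A_e$ contains a~copy of $P_u$. The case $V(\gad(u))\subseteq B_{e'}$ is symmetric.

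\textbf{Main obstacle.} There is no substantial difficulty; the statement is essentially a definition-chase. The only genuine care-points are (i) fixing the $A/B$ labeling consistently between the cut of $e$ in $T$ and the cut of $e'$ in $T'$, in particular when $e=e_0$ so that $e'$ is one of the two subdivision edges — this is handled uniformly precisely because \Cref{lem:default-edge} provides a~copy of $P_u$ on \emph{both} sides of $e_0$'s cut, so it does not matter which side $t_{e_0}$ falls on — and (ii) the one-line graph-theoretic observation that subdividing an edge preserves the component membership of every other node under deletion of any other edge. I expect the full write-up to be well under a~page.
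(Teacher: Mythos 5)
Your proof is correct, and it is exactly the definition-chase the paper has in mind: the paper states this as an observation with no written proof, treating it as immediate from the construction of $\group$, and your case analysis (trivial case, subdivided edge $e_0$ vs.\ other edges, with the copy of $P_u$ on both sides of $e_0$ supplied by \cref{lem:default-edge}) is the intended verification. No gaps; the labeling convention you fix is the right way to make the unordered cuts correspond.
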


We can next show that a~grouping can only decrease the sim-value. 

\begin{lemma}\label{lem:dec-sim-value}
For any subcubic \htree $(T, f)$ of $(G^*, \mathcal S^*)$ and any $u \in V(H)$, the~sim-value of $\group(T, f, u)$ is at~most that of $(T, f)$.
\end{lemma}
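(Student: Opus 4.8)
The plan is to first dispose of the trivial case, then reduce everything to a cut-by-cut comparison. If some node $t \in V(T)$ has $f^{-1}(t) = V(\gad(u))$, then $\group(T,f,u) = (T,f)$ and there is nothing to prove; so assume $(T',f') := \group(T,f,u)$ is obtained by subdividing an edge $e \in E(T)$ into a new node $t_e$ carrying all of $V(\gad(u))$, with $e$ chosen by~\Cref{lem:default-edge} so that both $A_e$ and $B_e$ contain a full copy of~$P_u$. Every $e' \in E(T')$ has a corresponding edge $e \in E(T)$, and every $e \in E(T)$ is the correspondent of some $e'$, so it suffices to show $\ssim_{G^*}(A_{e'},B_{e'}) \le \ssim_{G^*}(A_e,B_e)$ for each such pair; taking maxima then yields the lemma.

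Fix such a pair $e', e$. By~\Cref{rmk:corresponding-cuts}, $V(\gad(u))$ lies entirely on one side of $(A_{e'},B_{e'})$; up to swapping names, $V(\gad(u)) \subseteq A_{e'}$, so that there is a copy $P_u^i$ of $P_u$ with $V(P_u^i) \subseteq A_e$, and moreover $A_{e'} \setminus V(\gad(u)) = A_e \setminus V(\gad(u))$ and $B_{e'} = B_e \setminus V(\gad(u)) \subseteq B_e$. I would take a maximum induced matching $M$ of $G^*[A_{e'},B_{e'}]$ and build from it an induced matching of the same size in $G^*[A_e,B_e]$. Write $M = M_1 \uplus M_2$ where $M_1$ is the set of edges of $M$ having exactly one endpoint in $V(\gad(u))$ (that endpoint lying in $A_{e'}$) and $M_2$ is the rest. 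I will use two features of the construction of $G^*$: (i) any vertex of $V(\gad(u))$ with a neighbor outside $V(\gad(u))$ is a copy of a \emph{regular} vertex of $P_u$ (a vertex of~$L$), since subdivision and pendant vertices are not vertices of~$G$ and hence have all neighbors inside $\gad(u)$; and (ii) all copies $x_1,\dots,x_b$ of a regular vertex~$x$ have the same neighborhood in $G^* - V(\gad(u))$, because inter-gadget edges are added as bicliques between whole copy-classes. From (ii) and $M$ being induced, at most one copy of a given regular vertex can appear among the endpoints of~$M_1$.

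Then I would form $M'$ by replacing each edge $x_j w \in M_1$ (where $x_j$ is a copy of the regular vertex $x$ and $w \in B_{e'}$) with $x_i w$, an edge of $E(G^*)$ by~(ii), leaving $M_2$ unchanged, and check that $M'$ is an induced matching of $G^*[A_e,B_e]$ of size~$|M|$. The endpoints land where they should: $x_i \in V(P_u^i) \subseteq A_e$, $w \in B_{e'} \subseteq B_e$, and each edge of $M_2$ runs between $A_e \setminus V(\gad(u))$ and $B_e \setminus V(\gad(u))$. It is a matching: the $B$-endpoints are inherited from~$M$, while the $A$-endpoints are the relocated copies~$x_i$, pairwise distinct since the underlying regular vertices are distinct (last remark of the previous paragraph), together with the disjoint $A_e$-endpoints of~$M_2$. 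It is induced: (a) for two relocated endpoints $x_i, x'_i$ with $x \neq x'$, we have $x_i x'_i \notin E(G^*)$ because $P_u^i$ is induced in $G^*$ and distinct regular vertices of $P_u$ are non-adjacent in $P_u$ (consecutive $L$-vertices are separated by a subdivision vertex); (b) for a relocated endpoint $x_i$ and an endpoint $p \notin V(\gad(u))$ of another edge of $M'$, $x_i p \in E(G^*) \Leftrightarrow x_j p \in E(G^*)$ by~(ii), which fails since $x_j$ and $p$ lie on distinct edges of the induced matching~$M$; (c) two untouched endpoints retain their non-adjacency from~$M$. Hence $\ssim_{G^*}(A_{e'},B_{e'}) = |M| = |M'| \le \ssim_{G^*}(A_e,B_e)$.

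I expect the main obstacle to be the inducedness check for~$M'$, specifically subcase~(a): many edges of~$M_1$ may be collapsed onto the single copy~$P_u^i$, and this is harmless exactly because $P_u$ has no edge between two regular vertices, so the relocated endpoints remain pairwise non-adjacent. The remainder is bookkeeping with~\Cref{rmk:corresponding-cuts} and the twin-like behaviour (feature~(ii)) of the copies of a regular vertex toward the rest of~$G^*$; one must also keep in mind that, by feature~(i), only copies of regular vertices can arise as the gadget-side endpoints of~$M_1$, which is what makes the relocation well defined in the first place.
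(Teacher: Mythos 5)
Your proposal is correct and follows essentially the same route as the paper's proof: reduce to a cut-by-cut comparison via the corresponding-edge observation, then relocate the gadget-side endpoints of the matching onto the single copy of $P_u$ guaranteed on the appropriate side, using that copies of a regular vertex are twins toward $G^* - V(\gad(u))$ and that the relocated endpoints form an independent set inside one copy of $P_u$. Your explicit remarks that only copies of regular vertices can be gadget-side endpoints and that at most one copy per regular vertex occurs are just slightly more detailed versions of steps the paper leaves implicit.
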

\begin{proof}
Let $(T', f') := \group(T, f, u)$.
Let $e'$ be an edge of $T'$ with $(A_{e'}, B_{e'})$ the cut of $G^*$ defined by $e'$.
Let $e \in E(T)$ be the edge corresponding to $e'$ in $T$ and $(A_e, B_e)$ be the cut of $G^*$ defined by $e$.
By construction of $(T', f')$, there is a node $t' \in V(T')$ with $f'^{-1}(t') = V(\gad(u))$.
Hence we can assume without loss of generality that $V(\gad(u)) \subseteq A_{e'}$.
Consider $M'$ an induced matching between $A_{e'}$ and $B_{e'}$.
We will prove that there exists an induced matching $M$ between $A_e$ and $B_e$ with $|M| = |M'|$.

Let $M' := \{a_1'b_1', \dots, a_p' b_p'\}$ with $a_i' \in A_{e'}$ and $b_i' \in B_{e'}$ for every $i \in [p]$.
We build the matching $M := \{a_1b_1, \dots, a_p, b_p\}$ as follows.
By assumption, $B_{e'} \cap V(\gad(u)) = \emptyset$, and by~\Cref{rmk:corresponding-cuts}, we know that $B_e \setminus V(\gad(u)) = B_{e'} \setminus V(\gad(u))$.
Hence $B_{e'} \subseteq B_e$, and so for any $i \in [p]$, we set $b_i := b_i'$.	
Exploiting the same idea, when $a_i' \not \in V(\gad(u))$ we have $a_i' \in A_e$, and so we set $a_i := a_i'$.
Otherwise we have $a_i' \in V(\gad(u))$, but since $V(\gad(u)) \subseteq A_{e'}$, \Cref{rmk:corresponding-cuts} ensures that some copy $P_u^k$ of $P_u$ is in $A_e$.
So we set $a_i$ to be the unique vertex in $\copies(a_i') \cap V(P_u^k)$.
It remains to prove that $M$ is indeed an induced matching.

Assume without loss of generality that $X' := \{a_1',\dots,a_t'\}$ are exactly the vertices in $\{a_1',\dots,a_p'\}$ that belongs to $V(\gad(u))$.
Let $X := \{a_1,\dots, a_t \}$.
Observe that by construction we have $V(M')\setminus X' = V(M) \setminus X$ and this set contains no vertex from $\gad(u)$.
For every $i\in [t]$, since $a_i \in \copies(a_i')$, we have $N(a_i) \setminus V(\gad(u)) = N(a_i') \setminus V(\gad(u))$.
In particular, $N(a_i) \cap (V(M)\setminus X) = N(a_i')\cap (V(M)\setminus X)$ and thus $a_i$ is adjacent to only $b_i=b_i'$ in $V(M)\setminus X$.
Since $X$ contains only the copies in $P_u^k$ of some vertices in $G$, we know that $X$ induces an independent set.
It follows that for each $i\in [t]$, $a_i$ is only adjacent to $b_i$ in $V(M)$.
As $V(M')\setminus X' = V(M) \setminus X$, we conclude that $M$ is an induced matching.

\end{proof}

We are now equipped to turn \htrees $G^*$ into \tmaps of $G^*$ no greater sim-value.
%We recall that $\mathcal S = \{S(u)~:~u \in V(H)\}$.

%\begin{lemma}\label{lem:htree-to-part-htree}
%If $(G^*, \mathcal S^*)$ admits a \htree $(T, f)$ where $T$ is subcubic, then it admits a~\htree $(T', f')$ of sim-value at most the one of $(T, f)$ such that every node $t$ of $T$ $f^{-1}(t)$ is empty or is a~part of~$\mathcal S^*$.
%\end{lemma}
%\begin{proof}
%It suffices to consider the sequence $(T_i, f_i)_{i \in [0,n]}$ where $T_0 = T$ and $f_0 = f$, and for any $i \in [n]$, we set $(T_i, f_i) = \group(T_{i-1}, f_{i-1}, u_i)$ with $V(H) = \{u_1, \dots, u_n\}$.
%The \htree $(T_n, f_n)$ is indeed a \tmap, and \Cref{lem:dec-sim-value} ensures that its sim-value is at most the one of $(T, f)$.

\begin{lemma}\label{lem:htree-to-tmap}
  If  $(G^*, \mathcal S^*)$ admits tree layout$(T,f)$ of sim-value at most $\alpha$, then $(G^*, \mathcal S^*)$ admits a~\tmap $(T',f')$ of sim-value at most the sim-value of $(T,f)$.
\end{lemma}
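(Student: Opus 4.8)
The plan is to \emph{solidify} the gadgets one at a time using the grouping operation, and then convert the resulting \htree into a genuine \tmap by suppressing the nodes with empty preimage. The first thing I would observe is that a tree layout $(T,f)$ of $G^*$ is in particular a subcubic \htree of $(G^*,\mathcal S^*)$: every leaf of $T$ has a singleton preimage, every internal node has empty preimage (so $|f^{-1}(t)|\le 1$ for each node $t$), and $T$ is ternary, hence subcubic; moreover its sim-value as a \htree coincides with its sim-value as a layout. I would also record that $\alpha = \tau+\gamma-1 < \tau+\gamma = \frac{2(b-1)}{3\cdot 4\tau} \le \frac{2(b-1)}{3\max_{u \in V(H)}|V(P_u)|}$ (using $\max_u|V(P_u)| \le 4\tau$), so the grouping operation is applicable to any subcubic \htree of sim-value at most~$\alpha$.

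Next I would fix an enumeration $u_1,\dots,u_n$ of $V(H)$, set $(T_0,f_0):=(T,f)$, and inductively define $(T_i,f_i):=\group(T_{i-1},f_{i-1},u_i)$. By \Cref{obs:subcubic} each $(T_i,f_i)$ is again a subcubic \htree, and by \Cref{lem:dec-sim-value} its sim-value never exceeds that of $(T_{i-1},f_{i-1})$, so it stays at most~$\alpha$ and every grouping step is legitimate. The invariant I would maintain is that for every $j\le i$ some node of $T_i$ has $f_i$-preimage exactly $V(\gad(u_j))$: for $j=i$ this is immediate from the definition of $\group$, and for $j<i$ it survives because the $u_i$-grouping only relocates vertices of $\gad(u_i)$ (which is disjoint from $V(\gad(u_j))$) and adds at most one new node, so the node witnessing the invariant for $u_j$ in $T_{i-1}$ is still present in $T_i$ with the same preimage. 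After step~$n$ this produces a subcubic \htree $(T_n,f_n)$ of sim-value at most that of $(T,f)$ in which every part $V(\gad(u))$ of $\mathcal S^*$ is concentrated at a single node; since $\mathcal S^*$ partitions $V(G^*)$, every node of $T_n$ then has preimage either a part of $\mathcal S^*$ or the empty set.

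The last step is cleanup: while $T_n$ still has a node $t$ with empty preimage, contract one of the edges incident to $t$, which merges $t$ into a neighbour whose preimage is unchanged. Each contraction keeps the object a tree and strictly decreases the number of nodes, and it terminates with at least one node surviving since $V(G^*)\neq\emptyset$; the resulting tree $T'$ has only nodes holding some $V(\gad(u))$, each exactly once (we never contract an edge between two nonempty nodes), so sending $V(\gad(u))$ to the node holding it defines a \tmap $(T',f')$ of $(G^*,\mathcal S^*)$. Since contracting a tree edge $e$ merely deletes the cut of $G^*$ associated with $e$ and leaves every other cut unchanged, this phase cannot increase the sim-value either; combined with \Cref{lem:dec-sim-value} this gives that the sim-value of $(T',f')$ is at most that of $(T,f)$, as required.

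I expect the main obstacle to be the non-interference invariant — arguing that a gadget solidified by an earlier grouping remains solidified through all later groupings — together with the two ``monotonicity'' checks that neither the grouping phase nor the node-suppression phase increases the sim-value; the latter ultimately reduces to the simple remark that contracting a tree edge discards exactly one cut and preserves all the others.
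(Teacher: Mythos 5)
Your proposal is correct and follows essentially the same route as the paper: view the layout as a subcubic \htree, apply $\group$ once per vertex of $H$ (with \Cref{obs:subcubic} and \Cref{lem:dec-sim-value} controlling the sim-value), observe that every gadget is then concentrated at its own node, and finally contract away the empty-preimage nodes, noting that contractions only delete cuts and preserve the remaining ones. Your explicit non-interference invariant and the check that the sim-value stays below the threshold needed for $\group$ are details the paper leaves implicit, and your cleanup (contracting any edge at an empty node rather than only empty--nonempty edges) is an inessential variation.
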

\begin{proof}
	The tree layout $(T, f)$ is a~subcubic \htree by definition.
	Let $(T_i, f_i)_{i \in [0,n]}$ be the sequences of hybrid trees where $(T_0,f_0):=(T,f)$, and for any $i \in [n]$, we set $(T_i, f_i) = \group(T_{i-1}, f_{i-1}, u_i)$ with $V(H) = \{u_1, \dots, u_n\}$.
	\Cref{lem:dec-sim-value} ensures that the sim-value of $(T_n,f_n)$ is at most the one of $(T, f)$.
	From the definition of the operation $\group$, it follows that for every node $t$ of $T_n$, we have either $f_n^{-1}(t) \in \mathcal S^*$ or $f_n^{-1}(t) = \emptyset$.
	
	Let $(T',f')$ be the \htree obtained by starting from $(T',f')= (T_n,f_n)$ and by doing the following.
	While there is an edge $tt'$ in $T'$ such that $f'^{-1}(t) \in \mathcal S$ and $f'^{-1}(t') = \emptyset$, we contract the edge $tt'$ into a~vertex whose preimage is the part $f'^{-1}(t)$.
	At every iteration, $(T',f')$ remains a \htree of $(G^*, \mathcal S^*)$.
	It can also be observed that after each iteration, the $\mathcal S^*$-cuts induced by the remaining edges of $T'$ doesn't change.
	Thus, by repeating this process, the sim-value can only decrease.
	
	At the end of this process, every node of~$T'$ has for preimage by~$f'$ a unique part of~$\mathcal S$.
	Hence $(T',f')$ is a~\tmap of~$(G^*,\mathcal S^*)$.
	By the argument in the previous paragraphs, its sim-value is at~most the sim-value of~$(T,f)$.
\end{proof}

We can conclude.

\begin{lemma}
  Let $(T, f)$ be a~tree layout witnessing that the sim-width of $G^*$ is at~most~$\tau+\gamma$.
  Then the tree sim-balancing of $(G,\mathcal S)$ is at~most $\tau+\gamma$.
\end{lemma}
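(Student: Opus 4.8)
The plan is to chain two transformations. First I would convert the given branch decomposition of~$G^*$ into a~\tmap of the partitioned graph $(G^*, \mathcal S^*)$ of no larger sim-value, using \Cref{lem:htree-to-tmap}; then I would read off from it a~\tmap of $(G, \mathcal S)$ of no larger sim-value. For the first transformation, observe that a~branch decomposition $(T,f)$ of~$G^*$ is in particular a~subcubic \htree of $(G^*, \mathcal S^*)$: the tree~$T$ is ternary, every leaf of~$T$ has a~singleton preimage under~$f$, and every internal node of~$T$ has empty preimage, so for every node~$t$ we trivially have $|f^{-1}(t)| \le 1$. Viewed as a~\htree, $(T,f)$ has the same cuts, hence the same sim-value, namely at~most $\tau+\gamma$. \Cref{lem:htree-to-tmap} then supplies a~\tmap $(T', f')$ of $(G^*, \mathcal S^*)$ whose sim-value is at~most that of $(T,f)$, in particular at~most $\tau+\gamma$.

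For the second transformation, I would keep the tree~$T'$ and set $f'' \colon \mathcal S \to V(T')$ to be $f''(S(u)) := f'(V(\gad(u)))$ for every $u \in V(H)$. Since $u \mapsto S(u)$, $u \mapsto V(\gad(u))$, and $f'$ are bijections onto $\mathcal S$, $\mathcal S^*$, and $V(T')$ respectively, the map $f''$ is a~bijection, so $(T', f'')$ is a~\tmap of $(G, \mathcal S)$. Furthermore, for every edge $e \in E(T')$, the $\mathcal S$-cut $(A_e, B_e)$ of~$G$ and the $\mathcal S^*$-cut $(A_e^*, B_e^*)$ of~$G^*$ defined by~$e$ are aligned: $S(u) \subseteq A_e$ if and only if $V(\gad(u)) \subseteq A_e^*$, and symmetrically on the $B$-side, since $f''(S(u))$ and $f'(V(\gad(u)))$ are the very same node of~$T'$.

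It remains to bound the sim-value of $(T',f'')$, and here the idea is to lift induced matchings of~$G$ to induced matchings of~$G^*$. Fix $e \in E(T')$ and an induced matching $M = \{w_1z_1, \dots, w_mz_m\}$ of~$G$ between $A_e$ and $B_e$ with $w_i \in A_e$; write $w_i \in S(u_i)$, $z_i \in S(v_i)$. By alignment $V(\gad(u_i)) \subseteq A_e^*$ and $V(\gad(v_i)) \subseteq B_e^*$, and $u_i \neq v_i$ since $S(u_i)$ and $S(v_i)$ lie on opposite sides of the cut. For each vertex $x$ of~$G$ occurring in~$M$, say $x \in S(u)$, fix the copy $\hat x$ of~$x$ lying in the first copy $P_u^1$ of~$P_u$ inside $\gad(u)$. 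The $2m$ vertices $\hat w_i, \hat z_i$ are pairwise distinct (distinct originals yield distinct copies in a~fixed copy of the path, and $\hat w_i \in A_e^*$ while $\hat z_j \in B_e^*$), and $\hat w_i \hat z_i \in E(G^*)$ because $w_iz_i \in E(G)$, $u_i \neq v_i$, and $G^*$ contains the biclique between $\copies(w_i)$ and $\copies(z_i)$. I would then check that $\{\hat w_i\hat z_i : i \in [m]\}$ induces exactly a~matching in $G^*$: for $i\neq j$, if $u_i \neq u_j$ then $\hat w_i\hat w_j\in E(G^*)$ would force $w_iw_j\in E(G)$, impossible as $M$ is induced, while if $u_i = u_j$ then $\hat w_i, \hat w_j$ are distinct vertices of~$S(u_i)$ placed inside a~single copy of~$P_{u_i}$, hence non-adjacent in~$G^*$; symmetrically $\hat z_i\hat z_j\notin E(G^*)$; and $\hat w_i\hat z_j\notin E(G^*)$ because $u_i\neq v_j$ (indeed $z_j\in B_e$ rules out $z_j\in S(u_i)\subseteq A_e$), so adjacency of the copies would yield $w_iz_j\in E(G)$, contradicting $M$ induced. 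Hence $\{\hat w_i\hat z_i\}$ is an induced matching of~$G^*$ between $A_e^*$ and $B_e^*$ of size~$m$, whence $m\le \ssim_{G^*}(A_e^*, B_e^*)\le \tau+\gamma$. As $e$ was arbitrary, the sim-value of $(T',f'')$, and therefore the tree sim-balancing of $(G,\mathcal S)$, is at~most $\tau+\gamma$.

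The genuinely hard part of this subsection lies in \Cref{lem:htree-to-tmap} (the ``relocation'' of gadgets), which I am taking as given; the only delicate point specific to the present argument is the lifting step, where one must verify that copies of two \emph{distinct} original vertices of~$G$ reproduce in~$G^*$ exactly the adjacency of those vertices in~$G$, and that the side of the cut forces $u_i\neq v_j$, so that no two chosen copies ever collapse inside a~single gadget (where adjacencies differ from~$G$).
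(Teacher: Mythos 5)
Your proposal is correct and follows essentially the same route as the paper: apply \Cref{lem:htree-to-tmap} to turn the tree layout (viewed as a subcubic \htree) into a \tmap of $(G^*,\mathcal S^*)$, then transfer it to $(G,\mathcal S)$ via $S(u)\mapsto f'(V(\gad(u)))$. The paper states this second step as a one-line observation; your lifting of induced matchings of $G$ through fixed copies in $P_u^1$ is precisely the verification it leaves implicit.
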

\begin{proof}
  By \Cref{lem:htree-to-tmap}, there exists a~\tmap $(T',f')$ of $(G^*, \mathcal S^*)$ whose sim-value is at~most that of~$(T, f)$.
  We finally observe that $(T',g)$---with $g$ defined such that $g^{-1}(t) = S(u)$ whenever $f'^{-1}(t) = V(\gad(u))$---is a~\tmap of $(G,\mathcal S)$ of sim-value at~most~that of~$(T',f')$. 
\end{proof}

\end{document}